\DeclareMathOperator*{\argmin}{argmin}
\DeclareMathOperator*{\argmax}{argmax}
\newtheorem{prop}{Proposition}[section]
\newtheorem{thm}{Theorem}
\newtheorem{lm}{Lemma}
\newtheorem{@remark}{\bf Remark}
\newenvironment{remark}{\begin{@remark}\rm}{\end{@remark}}
\newcommand {\A} {\alpha}
\newcommand {\B} {\beta}
\newcommand{\W}{\omega}
\newcommand {\E} {\mathrm{E}}
\newcommand {\T} {\theta}
\newcommand{\HT} {\hat{\theta}_{\A,n}}
\newcommand {\ep} {\epsilon}
\newcommand {\pa} {\partial}
\newcommand {\paa}{\partial^2_{\T\T'}}
\newcommand {\ud}{\mathrm{d}}
\title{ Sequential change point test in the presence of outliers\\
: the density power divergence based
approach}
\author{Junmo Song}
\affil{Department of Statistics, Kyungpook National University}
\date{}
\begin{document}
\maketitle
\begin{abstract}
In this study, we consider a problem of monitoring parameter changes particularly in the presence of outliers.
To propose a sequential procedure that is robust against outliers, we use the density power divergence to derive a detector and stopping time that make up our procedure. We first investigate the asymptotic properties of our sequential procedure for i.i.d. sequences  and then extend the proposed procedure to stationary time series models, where we provide a set of sufficient conditions under which the proposed procedure has an asymptotically controlled size and consistency in power. As an application, our procedure is applied to the GARCH models.
We demonstrate the validity and robustness of the proposed procedure  through a simulation study. Finally, two real data analyses are provided to illustrate the usefulness of the proposed sequential procedure.
\\
\end{abstract}
\noindent{\bf Key words and phrases}: sequential change detection, monitoring parameter change, robust test, outliers, density power divergence,  time series, GARCH models.

\section{Introduction}\label{Sec:intro}

We often face various events that can cause structural changes in underlying dynamics. In the field of finance, for example, global financial crises or changes of monetary policy can be such events.  This changes are usually represented through structural breaks or parameter changes in a fitted model.  The statistical analysis for testing or detecting such changes is referred to as change point analysis. Due to its importance in statistical inferences and actual practice,  vast amount of papers have been devoted to this area.  For historical background and general review, see  \cite{csorgo:horvath:1997}, \cite{aue:horvath:2013}, \cite{horvath:rice:2014} and the references therein.

Most of the literature have dealt with the change point problem in the retrospective settings, that is, testing for structural or parameter changes in an observed (or historical) data set. However, since \cite{chu.et.al:1996} developed the sequential  test procedure that performs a monitoring of parameter change using newly arrived data, this type of test procedure has also attracted special attention from many authors.  See, for example, \cite{leisch.et.al:2000}, \cite{horvath.et.al:2004}, \cite{berkes.et.al:2004seq}, \cite{aue.et.al:2006},  \cite{aue2014},  \cite{bardet:kengne:2014},  and \cite{kirch2018}.

In this study, we are interested in a sequential test for parameter change, particularly in the presence of outliers.
Indeed, our study is started from the empirical observation that parameter changes  and  deviating observations can coexist and, in this case, the existing parameter change tests are likely to draw an erroneous conclusion. Strictly speaking, when atypical observations are included in a data set being suspected of having parameter changes, whether the testing results are due to genuine changes or not cannot be readily determined. Such concern has been addressed, for example, in the recent studies by \cite{song:kang:2019} and \cite{song:2020}, where they reported that naive parameter change tests can be severely distorted by outliers.


The problem  still applies to the case of the sequential test procedures. However, it should be more emphasized that the impact of outliers in the sequential settings may be more complex than in the retrospective settings because the influence of outliers can vary depending on the location of outliers.  Here, we note that outliers can be included in a historical data or they can occur in arrived data and that both cases are also possible. In particular, the situation may be more difficult when there is no outlier in the historical data but outliers occur during the monitoring period. In such cases, the analysis is likely to be performed through a non-robust method and, of course, it is very possible that outliers occurring in arrived data mislead the sequential analysis.
 In this regard, there is more need to develop a robust sequential test procedure, but little effort has been made so far.

 In the literature, robust parameter change tests were dealt with by several authors. \cite{tsay:1988} investigated a procedure for detecting outliers, level shifts, and variance change in a univariate time series and \cite{lee:na:2005} and \cite{kang:song:2015} introduced a estimates-based CUSUM test using a robust estimator. Recently, \cite{fearnhead:rigaill:2019} proposed a robust penalized cost function for detecting changes in the location parameter. \cite{song:kang:2019} introduced robust tests based on the divergence introduced below and  \cite{song:2020} proposed a trimmed residual based robust test. These works, however, were conducted in the retrospective settings. 

The aim of this study is to propose a robust sequential test procedure for parameter change. For this, we use the density power (DP) divergence to construct a detector that gives an alarm signal for indicating a parameter change in monitoring period. Since \cite{BHHJ:1998} introduced DP divergence (DPD), the divergence has been successfully used in developing robust estimators. The main property of the DP divergence is that it provides a smooth bridge between Kullback-Leibler (KL) divergence and $L_2$-distance. So, the estimators induced from the DP divergence, the so-called minimum DPD estimator (MDPDE), get efficient and robust properties. For more details, see \cite{BHHJ:1998} and \cite{fujisawa:eguchi:2008}. The DPD based tests were also introduced by several authors. \cite{basu.et.al:2013, basu.et.al:2016} used the objective function of the MDPDE to propose a Wald-type test and \cite{song:kang:2019}  also introduced a DPD based score type test for parameter change, which has recently been extended to integer-valued models and dynamic factor models.  See  \cite{kang:song:2020}, \cite{kim:lee:2020}, and \cite{kim.et.al:2021}.  Like MDPDE, these DPD based tests were also found to inherit the robust and efficient properties from the DP divergence, thereby motivating us to consider a DPD based sequential test.

This paper is organized as follows. In Section \ref{Sec:2}, we develop a DPD-based  sequential test for parameter change in i.i.d. sequences and investigate its asymptotic behaviours.  In Section \ref{Sec:3}, we extend our method to general time series models and provide an application to GARCH models. We examine our test procedure  numerically through Monte Carlo simulations in Section \ref{Sec:sim}. Section \ref{Sec:real} illustrates  two real data applications and Section \ref{Sec:con} concludes the paper. The technical proofs are provided in Appendix.

\section{DP divergence based sequential change point test}\label{Sec:2}

We first review the MDPDE introduced by \cite{BHHJ:1998} and provide some conditions for the strong consistency of the estimator. Then, we  propose a detector and  a stopping time for sequential test procedure based on the DP divergence.

For two density functions $f$ and $g$, \cite{BHHJ:1998} defined the DP divergence, $d_\A(f,g)$, as follows:
\begin{eqnarray*}\label{DPD}
d_\alpha (g, f):=\left\{\begin{array}{lc}
\displaystyle\int\Big\{f^{1+\alpha}(x)-(1+\frac{1}{\alpha})\,g(x)\,f^\alpha(x)+\frac{1}{\alpha}\,g^{1+\alpha}(x)\Big\} dx &,\alpha>0, \vspace{0.3cm}\\
\displaystyle\int g(x)\big\{ \log g(x)-\log f(x) \big\} dx
&,\alpha=0.
\end{array} \right.
\end{eqnarray*}
Here, note that the divergence becomes KL divergence and $L_2$ distance when $\A=0$ and $\A=1$, respectively.
Since $d_\alpha(f,g)$ converges to $d_0(f,g)$ as $\A\rightarrow 0$, the DP divergence with
$0<\alpha<1$ provides a smooth bridge between KL divergence and $L_2$ distance. 

Let $X_1, \cdots, X_n$ be a random sample from an unknown density $g$ and consider a family of parametric densities $\{ f_{\theta} | \theta \in \Theta \}$. Then, the MDPDE with respect to the parametric family  $\{ f_{\theta} \}$ is defined as the minimizer of  the empirical version of the divergence $d_\A(g,f_\T)$. That is,
\begin{eqnarray}\label{MDPDE}
\hat \theta_{\alpha, n} = \argmin_{\theta \in \Theta}\, \frac{1}{n} \sum_{t=1}^n l_{\alpha}(X_t;\T):=\argmin_{\theta \in \Theta} \frac{1}{n} H_{\alpha,n}(\theta),
\end{eqnarray}
where \begin{eqnarray*}
l_\alpha(X_t;\theta) = \left\{ \begin{array}{ll}
   \displaystyle  \int f_\theta^{1+\alpha}(x) dx - \left( 1 + \frac{1}{\alpha} \right)
     f_\theta^{\alpha}(X_t)    & \mbox{, $\alpha > 0$,}\vspace{0.15cm}\\
   \displaystyle  - \log f_\theta(X_t)      & \mbox{, $\alpha = 0$.}
   \end{array}
 \right.
\end{eqnarray*}
As was well demonstrated in \cite{BHHJ:1998},  the efficiency of the estimator gets closer to that of the MLE as $\A$ decreases to 0 and it has strong robustness when $\A$ increases. That is, the tuning parameter $\alpha$ controls the trade-off between efficiency and robustness. In particular, the MDPDE with $\A$ close to 0 was found to tend to enjoy both high efficiency and robustness. Another virtue of the MDPD estimation procedure is that it can be conventionally applied to other parametric models, resulting in  robust and efficient estimators in various models. See, for example, \cite{ghosh.et.al:2013}, \cite{dierckx.et.al:2013}, and   \cite{song:2017}.

Throughout the paper, a major focus is made on  the change point problem in the parametric framework. Hence, we assume that $g$ belongs to   $\{ f_{\theta} \}$, that is, $g=f_{\T_0}$ for some $\T_0 \in \Theta$. To establish the limiting behaviors of the stopping time below,
we introduce some assumptions. Particularly, the following three assumptions are made to ensure the strong consistency of $\hat{\T}_{\A,n}$.  We assume that $\theta_0$  belongs to the parameter space $\Theta$.
\begin{enumerate}
\item[\bf A1.] The parameter space $\Theta$ is a compact subset in $\mathbb{R}^d$.
\item[\bf A2.] The density $f_\theta$ and  the integral $\int f_\T^{1+\A}(x)dx$ are continuous in $\theta$.
\item[\bf A3.] There exists a function $B(x)$ such that $|l_\A(x;\T)| \leq B(x)$ for all $x$ and $\T$ and $\E[B(X_t)]<\infty$.
\end{enumerate}
Assumption {\bf A3} can be replaced with $\E \sup_{\T\in\Theta} |l_\A (X_t) | <\infty$.
By assumption {\bf A2}, $l_\A(x;\T)$ becomes a continuous function in $\T$, and thus it follows from assumption {\bf A3} that
\[\sup_{\T \in \Theta} \bigg|\frac{1}{n}\sum_{t=1}^n l_\A(X_t;\T) -\E[l_\A(X_t;\T)] \bigg| = o(1)\quad a.s.\]
(cf. chapter 16 in \cite{ferguson:1996} or Theorem 2.7 of \cite{straumann:mikosch:2006}). Noting the fact that $\E[l_\A (X_t;\T)] =d_\A (f_{\T_0},f_\T)-\frac{1}{\A}\int f_{\T_0}(x)dx$, one can see that  $\E[l_\A (X_t;\T)]$ has a unique minimum at $\T_0$. Thus, $\hat\theta_{\A,n}$ converges almost surely to $\theta_0$ by the standard arguments.
\begin{thm}\label{thm1}
Suppose that assumptions {\bf A1}-{\bf A3} hold. Then, for each $\A\geq0$, $\hat\theta_{\A,n}$ converges to $\theta_0$ almost surely.
\end{thm}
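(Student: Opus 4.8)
The plan is to treat $\hat\theta_{\alpha,n}$ as an M-estimator and apply the classical Wald-type consistency argument for extremum estimators. Two ingredients are needed: (i) a uniform strong law of large numbers showing that the normalized objective $n^{-1}H_{\alpha,n}(\theta)=n^{-1}\sum_{t=1}^n l_\alpha(X_t;\theta)$ converges almost surely, uniformly over $\Theta$, to the deterministic limit $L_\alpha(\theta):=\E[l_\alpha(X_t;\theta)]$; and (ii) the identification that $L_\alpha$ is uniquely minimized at $\theta_0$. Combining a uniform limit with a well-separated unique minimizer then forces the sequence of minimizers to converge to $\theta_0$.

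First I would establish the uniform convergence in (i). Assumption {\bf A2} makes $l_\alpha(x;\theta)$ continuous in $\theta$, assumption {\bf A3} supplies an integrable envelope $B$ dominating $|l_\alpha(x;\theta)|$ uniformly in $\theta$, and assumption {\bf A1} makes $\Theta$ compact. These are precisely the hypotheses of the uniform ergodic theorem (cf. Chapter 16 of \cite{ferguson:1996} or Theorem 2.7 of \cite{straumann:mikosch:2006}), which yields
\[\sup_{\theta\in\Theta}\Big|\tfrac{1}{n}\sum_{t=1}^n l_\alpha(X_t;\theta)-L_\alpha(\theta)\Big|\longrightarrow 0\quad\text{a.s.}\]
and, as a by-product, the continuity of $L_\alpha$ on $\Theta$.

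Next I would carry out the identification in (ii). Since $X_t\sim f_{\theta_0}$, taking expectations and comparing term by term with the definition of $d_\alpha$ gives, for $\alpha>0$, $L_\alpha(\theta)=d_\alpha(f_{\theta_0},f_\theta)-\tfrac{1}{\alpha}\int f_{\theta_0}^{1+\alpha}(x)\,dx$, and for $\alpha=0$, $L_0(\theta)=d_0(f_{\theta_0},f_\theta)-\int f_{\theta_0}(x)\log f_{\theta_0}(x)\,dx$; in both cases $L_\alpha(\theta)$ differs from the divergence $d_\alpha(f_{\theta_0},f_\theta)$ only by a constant that does not depend on $\theta$. Because the density power divergence is nonnegative and vanishes exactly when $f_\theta=f_{\theta_0}$, and the parametric family is identifiable (so $f_\theta=f_{\theta_0}$ iff $\theta=\theta_0$), it follows that $L_\alpha$ attains its unique minimum at $\theta=\theta_0$.

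Finally I would close the argument with the standard well-separation step. Fix $\epsilon>0$; by continuity of $L_\alpha$ and compactness of the closed set $\{\theta\in\Theta:|\theta-\theta_0|\geq\epsilon\}$, the unique-minimum property gives $\delta:=\big(\inf_{|\theta-\theta_0|\geq\epsilon}L_\alpha(\theta)\big)-L_\alpha(\theta_0)>0$. On the almost-sure event where the uniform gap in (i) is eventually below $\delta/2$, every $\theta$ with $|\theta-\theta_0|\geq\epsilon$ satisfies $n^{-1}H_{\alpha,n}(\theta)>n^{-1}H_{\alpha,n}(\theta_0)$, so the minimizer $\hat\theta_{\alpha,n}$ must lie within $\epsilon$ of $\theta_0$; letting $\epsilon\downarrow0$ gives $\hat\theta_{\alpha,n}\to\theta_0$ a.s. I expect the only genuine obstacle to be the uniform law of large numbers in step (i) — the passage from pointwise to uniform convergence over $\Theta$ — but under {\bf A1}--{\bf A3} this is exactly what the cited uniform ergodic theorem delivers, while the identifiability invoked in step (ii) is the one substantive assumption implicit in the parametric setup $g=f_{\theta_0}$.
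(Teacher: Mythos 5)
Your proposal is correct and follows essentially the same route as the paper: a uniform strong law of large numbers over the compact $\Theta$ (citing the same sources), the observation that $\E[l_\A(X_t;\T)]$ equals $d_\A(f_{\T_0},f_\T)$ plus a $\T$-free constant so that $\T_0$ is the unique minimizer, and then the standard Wald well-separation argument, which the paper compresses into the phrase ``by the standard arguments.'' Incidentally, your constant $-\frac{1}{\A}\int f_{\T_0}^{1+\A}(x)\,dx$ is the correct one (the paper's displayed $-\frac{1}{\A}\int f_{\T_0}(x)\,dx$ appears to be a typo), and your explicit acknowledgment that identifiability of $\{f_\T\}$ is implicitly assumed matches what the paper leaves tacit.
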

\vspace{0cm}
\begin{remark}
Assumption {\bf A3} is ensured by such condition that $\sup_{x,\T \in \Theta} f_\T(x) <\infty$, which is usually obtained by restricting the range of scale parameter. For example, when the normal parametric family $\{N(\mu,\sigma^2)\}$ is considered, the condition is satisfied by considering the parameter space $\Theta=\{(\mu,\sigma)|\ -\infty < \mu<\infty, \sigma\geq c\}$ for some $c>0$.
\end{remark}

In the following subsection, we will introduce a detector and a stopping time for monitoring change in parameters using the MDPDE above and its objective function, and then investigate its asymptotic behaviors. Hereafter, $\| \cdot\|$ denotes any  convenient vector norm  and will also be used to denote induced matrix  norm without causing confusion.  That is, for a square matrix $A$ and a vector $x$, $\|A\|=\sup_{\|x\|=1} \|Ax\|$.  For notational convenience, we use $\pa_{\theta}$ and $\paa$ to denote $\frac{\pa}{\pa\T}$ and $\frac{\pa^2}{\pa\theta \pa \theta'}$, respectively.

\subsection{DPD based sequential change point test under the null hypothesis}

Let $\{X_1, \cdots, X_n\}$ be {\it i.i.d.} observations from a density $f_{\theta_0}$ and suppose that we sequentially observe $X_{n+1}, X_{n+2},$ $\cdots.$ Here, the historical data $\{X_1, \cdots, X_n\}$ is assumed not to undergo parameter change. At a monitoring instant $t=n+k$, we wish to test the following null hypothesis, particularly in the presence of outliers:
\[H_0: \theta_0\text{ does not change over } t \leq n+k\]
against
\[ H_1: \theta_0\text{ changes at some time }  t \leq n+k.\]

Sequential procedure for testing the hypotheses above consists of a detecting statistics, which is  called detector, and a boundary function. When a detector crosses a pre-given  boundary function, the sequential procedure gives an alarm signal indicating a parameter change. In the literature, mainly two types of detectors have been proposed: fluctuation based detectors and CUSUM based detectors. The first type detectors compare the estimates obtained in historical data and arrived data to calculate the fluctuations of the estimates. This approach was introduced by  \cite{chu.et.al:1996} and has been generalized, for example,  by \cite{leisch.et.al:2000} and \cite{bardet:kengne:2014}. The second one considered by  \cite{chu.et.al:1996}, \cite{horvath.et.al:2004} and \cite{aue.et.al:2006} usually use the CUSUM values of residuals to measure the stability of the fitted models.
The score function based detector was also proposed by \cite{berkes.et.al:2004seq}, which indeed can be considered as a CUSUM type detector. 
 For more details on the sequential procedure, we refer the readers to the seminal paper of \cite{chu.et.al:1996}.

In this study, we use the objective function of the MDPDE to propose a CUSUM type detector as follows: for each $\A\geq0$,
\begin{eqnarray}\label{detector}
D_{\A,n}(k):=\frac{ \big\| \hat{\mathcal{I}}_{\A,n}^{-\frac{1}{2}}\, \pa_\theta H_{\A,n+k}(\hat\theta_{\A,n})\big\|}{\sqrt{n} \Big(1+\frac{k}{n}\Big)},
\end{eqnarray}
where  $\hat\theta_{\A,n}$ and $\hat{\mathcal{I}}_{\A,n}$ are the MDPDE in (\ref{MDPDE}) and a consistent estimator for $\mathcal{I}_\A$ defined in assumption {\bf A6} below, respectively, and both are obtained from the historical data $\{X_1, \cdots, X_n\}$.
Now, let $b(\cdot)$ be a boundary function. Then, the sequential procedure is stopped as soon as $D_{\A,n} (k)$ crosses $b(k/n)$, rejecting $H_0$. In other words, the stopping time of the procedure is defined as the first hitting time of the detector and  may then be
expressed as
\begin{eqnarray}\label{stop}
k_{\A,n}:=\min \Big\{ k\geq1 : \big\| \hat{\mathcal{I}}_{\A,n}^{-\frac{1}{2}}\, \pa_\theta H_{\A,n+k}(\hat\theta_{\A,n})\big\| > \sqrt{n} \Big(1+\frac{k}{n}\Big) b\Big(\frac{k}{n}\Big)\Big\}.
\end{eqnarray}
 If $k_{\A,n}<\infty$,  $H_0$ is rejected and we conclude that $\theta_0$ changed at some time $t\leq n+k_{\A,n}$. 
\begin{remark}\label{s.detector}
Since $\pa_{\theta}H_{\A,n}(\hat{\theta}_{\A,n})=0$, the proposed detector can be expressed as
\[D_{\A,n}(k)=\frac{ \Big\| \hat{\mathcal{I}}_{\A,n}^{-\frac{1}{2}}\, {\displaystyle \sum_{t=n+1}^{n+k}} \pa_\theta  l_\A(X_t;\hat\theta_{\A,n})\Big\|}{\sqrt{n} \Big(1+\frac{k}{n}\Big)}.\]
Here, we note that $D_{\A,n}$ with $\A=0$ becomes the score based detector. This type of detector was introduced by  \cite{berkes.et.al:2004seq}. The score function is deduced from the KL divergence, so the detector $D_{\A,n}$ with $\A>0$ can be thought of as a DP divergence version of the score based detector. Thus, the proposed procedure is expected to be able to adjust robustness and efficiency  by controlling the tuning parameter $\A$ as does the MDPDE.
\end{remark}
\begin{remark}
The robustness of the detector with $\A>0$ is obtained via the following two steps. First, as is well recognized in the previous studies, the MDPDE $\hat\theta_{\A,n}$ with $\A>0$ estimates the true parameter robustly in the presence of outliers. Obviously, this must be a  basic requirement in constructing a robust statistics. It is, however, not sufficient to make the sequential procedure robust against outliers. Note that the existing detectors are calculated from the observations, $X_1,\cdots, X_{n+k}$, and the estimates, $\hat\theta_n$. This means that the value of the detector can be distorted by outlying observations even if  the true parameter is properly estimated. Thus, additional measures are needed to lessen the impact of outliers on the detectors. Our second step for the robustness is from  the term $\pa_\theta H_{\A,k}$  in the proposed detector. This term gives a down-weight to the outlying observations, resulting in reducing the influence of outliers on our detector. This is actually similar to how the MDPDE has the robust property (cf. \cite{BHHJ:1998}). To see this, observe that
\begin{eqnarray*}
 \pa_\T H_{\A,k}(\theta) = \left\{ \begin{array}{ll}
   \displaystyle k(1+\A)\int U_\theta(z) f_\theta^{1+\A}(z)dz-\sum_{t=1}^k f_\theta^\A(X_t)U_\theta(X_t)   & \mbox{, $\alpha > 0$,}\vspace{0.15cm}\\
   \displaystyle  - \sum_{t=1}^k  U_\theta(X_t)      & \mbox{, $\alpha = 0$.}
   \end{array}
 \right.
\end{eqnarray*}where $U_\theta(x)=\pa_\T \log f_\theta(x)$. By comparing with  $\pa_\theta H_{\A,k}(\T)$ with $\A=0$ in the score based detector, one can see that $\pa_\theta H_{\A,k}(\T)$ with $\A>0$ provides density power weight, $f_\theta^\A (X_t)$, to each $U_\theta(X_t)$, whereas  $\pa_\theta H_{0,k}(\T)$ gives equal weight.
In sum, the robustness of the sequential procedure based on $D_{\A,n}(k)$ with $\A>0$ is achieved via the robust estimator $\hat\theta_{\A,n}$ and the term $\pa_\theta H_{\A,k}(\T)$ that gives a  down-weight to outliers.
\end{remark}


In order to perform the sequential procedure with the stopping time $k_{\A,n}$, it needs to derive the following limiting probability:
\begin{eqnarray*}
\lim_{n\rightarrow\infty} P\big( k_{\A,n} <\infty\ |\ H_0 \big),
\end{eqnarray*}
which indeed represents the type I error of the test procedure. This limiting value is expressed as a function of the boundary function, so the boundary function may be chosen such that the above value is equal to a given significance level. For example, as in Remark \ref{rm.max} below, when a constant boundary function, i.e., $b(\cdot)=b$, is employed, one can calculate the critical value $b$ by solving the equation. 

 To establish the limiting behavior of $k_{\A,n}$, particularly in the case of $\A>0$, we impose further assumptions.
\begin{enumerate}
\item[\bf A4.] The integral $\int f_\T^{1+\A}(z)dz$ is differentiable two times with respect to $\T$ and the derivative can be taken under the integral sign.
\item[\bf A5.] The true parameter $\T_0$ is in the interior of $\Theta$.
\item[\bf A6.] $\paa l_\A (x;\T)$ is continuous in $\T$ and there exists an open neighborhood $N(\T_0)$ of $\T_0$ such that $\E\big[ \sup_{\T \in N(\T_0)} \big\| \paa l_\A (X_t;\T)\big\|\big] <\infty$.
\item[\bf A7.] The matrices $\mathcal{J}_\A$ and $\mathcal{I}_\A$ defined by
\begin{eqnarray*}
\mathcal{I}_\A&:=&\E \big[ \pa_\theta l_\A(X_t;\T_0) \pa_{\theta'} l_\A(X_t;\T_0)\big],\\
 \mathcal{J}_\A&:=&\E \big[ \paa l_\A(X_t;\T_0)\big]=(1+\A)\int f_{\T_0}^{\A-1}(z) \pa_\T f_{\T_0}(z)  \pa_{\T'} f_{\T_0}(z) \ud z
\end{eqnarray*}
exist and are positive definite.
\end{enumerate}
We also assume the boundary function satisfies the following condition:
\begin{enumerate}
\item[{\bf B}.] The boundary function $b(\cdot)$ is continuous on $(0,\infty)$ and $\inf_{t>0} b(t) >0$.
\end{enumerate}
The following  theorem is the first result in this study.
\begin{thm}\label{thm2}
Suppose that assumptions {\bf A1}--{\bf A7} and {\bf B} hold. Then, we
have that for each $\A\geq0$,
\begin{eqnarray*}
\lim_{n\rightarrow\infty} P\big( k_{\A,n} <\infty\ |\ H_0 \big)=P\left(  \sup_{0<s<1} \frac{ \big\| W_d(s)\big\|}{b(s/(1-s))} >1 \right),
\end{eqnarray*}
where $\{W_d(s)\}$ is a  $d$-dimensional standard Wiener process.
\end{thm}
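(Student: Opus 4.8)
The plan is to reduce the detector to a functional of independent centered random walks built from the score-type quantities $\pa_\theta l_\A(X_t;\T_0)$, and then to pass to the limit via a multivariate invariance principle combined with a time change that maps the open-ended horizon $k\geq 1$ onto the interval $(0,1)$. Throughout I note that $\{k_{\A,n}<\infty\}$ is precisely the event $\{\sup_{k\geq1} \|\hat{\mathcal{I}}_{\A,n}^{-1/2}\pa_\theta H_{\A,n+k}(\hat\theta_{\A,n})\|/[\sqrt{n}(1+k/n)b(k/n)]>1\}$, so it suffices to identify the limit of this normalized sup.

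First I would linearize the detector. Because $\pa_\theta H_{\A,n}(\hat\theta_{\A,n})=0$, a Taylor expansion of $\pa_\theta l_\A(X_t;\hat\theta_{\A,n})$ about $\T_0$, justified by assumptions \textbf{A5}--\textbf{A7} and the strong consistency from Theorem~\ref{thm1}, gives the representation $\hat\theta_{\A,n}-\T_0=-\mathcal{J}_\A^{-1}\frac1n\sum_{t=1}^n \pa_\theta l_\A(X_t;\T_0)+o_p(n^{-1/2})$. Expanding $\pa_\theta H_{\A,n+k}(\hat\theta_{\A,n})=\sum_{t=n+1}^{n+k}\pa_\theta l_\A(X_t;\hat\theta_{\A,n})$ in the same way and substituting this representation yields
\[
\pa_\theta H_{\A,n+k}(\hat\theta_{\A,n})=\sum_{t=n+1}^{n+k}\pa_\theta l_\A(X_t;\T_0)-\frac{k}{n}\sum_{t=1}^n \pa_\theta l_\A(X_t;\T_0)+R_n(k).
\]
Setting $\eta_t=\mathcal{I}_\A^{-1/2}\pa_\theta l_\A(X_t;\T_0)$, which are i.i.d., mean zero (since $\T_0$ minimizes $\E[l_\A(X_t;\cdot)]$, so that $\E[\pa_\theta l_\A(X_t;\T_0)]=0$ under \textbf{A4}) with identity covariance, and replacing $\hat{\mathcal{I}}_{\A,n}^{-1/2}$ by $\mathcal{I}_\A^{-1/2}$ using consistency of $\hat{\mathcal{I}}_{\A,n}$, the numerator reduces to $\big\|\sum_{t=n+1}^{n+k}\eta_t-\tfrac{k}{n}\sum_{t=1}^n\eta_t\big\|$ up to a negligible error.

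Next I would perform the weak-convergence step. The historical sum $T_n=\sum_{t=1}^n\eta_t$ and the monitoring sums $\sum_{t=n+1}^{n+k}\eta_t$ are built from disjoint, hence independent, data, so by the multivariate FCLT, with $k=\lfloor nt\rfloor$, one has $n^{-1/2}\sum_{t=n+1}^{n+k}\eta_t\Rightarrow W^{(2)}(t)$ and $n^{-1/2}T_n\Rightarrow Z\sim N(0,I_d)$, where $W^{(2)}$ is a $d$-dimensional Wiener process independent of $Z$. Hence the normalized numerator converges to $\|G(t)\|$ with $G(t):=W^{(2)}(t)-tZ$. A direct covariance computation gives $\mathrm{Cov}(G(t_1),G(t_2))=t_1(1+t_2)I_d$ for $t_1\leq t_2$, so under the change of variables $s=t/(1+t)$ the rescaled process $s\mapsto(1+t)^{-1}G(t)$ is mean zero Gaussian with covariance $\min(s_1,s_2)I_d$; that is, it is a standard $d$-dimensional Wiener process $W_d$ on $(0,1)$. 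Since $1+k/n=1/(1-s)$ and $b(k/n)=b(s/(1-s))$, the continuous mapping theorem identifies $\sup_{t>0}\|G(t)\|/[(1+t)b(t)]$ with $\sup_{0<s<1}\|W_d(s)\|/b(s/(1-s))$.

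I expect the main obstacle to be making the supremum rigorous over the \emph{unbounded} index $k\geq1$ rather than a compact block, and this is where assumption \textbf{B} is essential. I would split $\sup_{k\geq1}$ into the finite-horizon part $1\leq k\leq Nn$, where the FCLT and continuous mapping apply after the time change, and the tail $k>Nn$. On the tail the normalization $\sqrt{n}(1+k/n)b(k/n)$ grows like $(k/\sqrt n)\,b(k/n)$, while the numerator is dominated by $\tfrac{k}{n}\|T_n\|\approx\tfrac{k}{\sqrt n}\|Z\|$, so the detector is asymptotically bounded by $\|Z\|/\inf_{t\geq N}b(t)$; because $\inf_{t>0}b(t)>0$, this is finite and matches the limit functional as $s\to1$ (where $W_d(s)\to -Z$ and $b(s/(1-s))\to b(t)|_{t\to\infty}$), so letting $N\to\infty$ shows the tail adds nothing beyond what is already captured near $s=1$. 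The same bound simultaneously certifies uniform negligibility of $R_n(k)$ and of the whitening error in the first step: dividing these errors by $\sqrt{n}(1+k/n)$ produces quantities of order $\tfrac{\sqrt k}{n+k}$ and $\tfrac{k}{n+k}\,o_p(1)$, both $o_p(1)$ uniformly in $k$ (the former peaking at $k\sim n$). Combining the finite-horizon convergence with the tail estimate closes the argument.
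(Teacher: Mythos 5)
Your architecture is essentially the paper's: linearize $\pa_\theta H_{\A,n+k}(\hat\theta_{\A,n})$ about $\T_0$ using the first-order condition to obtain the two-term functional $\sum_{t=n+1}^{n+k}\pa_\theta l_\A(X_t;\T_0)-\frac{k}{n}\sum_{t=1}^{n}\pa_\theta l_\A(X_t;\T_0)$ (this is the content of Lemma \ref{lm4}), apply the multivariate FCLT, and pass to $\sup_{0<s<1}\|W_d(s)\|/b(s/(1-s))$ via the time change $s=t/(1+t)$; your covariance computation is a correct verification of the distributional identity $\{W_d(1+t)-(1+t)W_d(1)\}\stackrel{d}{=}\{(1+t)W_d(t/(1+t))\}$ that the paper simply quotes in the proof of Theorem \ref{thm2}. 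The split into a finite horizon $k\leq Nn$ plus a tail is also how the paper proceeds in Lemma \ref{lm5}.

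The gap lies in the two places where a supremum over the \emph{infinite} index set must be controlled: in both you assert the required bound without a mechanism, and pointwise-in-$k$ estimates do not survive an infinite supremum. (i) On the tail $k>Nn$ you claim the numerator ``is dominated by $\frac{k}{n}\|T_n\|$.'' That is not a pathwise bound: writing $M_k=\sum_{t=n+1}^{n+k}\eta_t$, the monitoring sum fluctuates at scale $\sqrt{k\log\log k}$, and what is actually needed is $\lim_{N\to\infty}\limsup_{n\to\infty}P\big(\sup_{k>Nn}\|M_k\|/[\sqrt{n}(1+k/n)]>\ep\big)=0$; smallness of each individual normalized term does not imply smallness of the supremum over infinitely many $k$. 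This is precisely why the paper invokes the H\'ajek--R\'enyi--Chow maximal inequality, paired with the law of the iterated logarithm on the limit side so that the truncated limit functional $\sup_{0<s<T}$ converges to $\sup_{s>0}$ as $T\to\infty$ (see the proof of Lemma \ref{lm5}). (ii) Your parenthetical that ``the same bound simultaneously certifies uniform negligibility of $R_n(k)$'' conflates two different issues: the Taylor remainder contains $\big[\frac{1}{n+k}\paa H_{\A,n+k}(\T^*_{n,k})-\mathcal{J}_\A\big]\sqrt{n}(\hat\theta_{\A,n}-\T_0)$ evaluated at intermediate points $\T^*_{n,k}$, and its negligibility uniformly over all $k\geq1$ requires an almost-sure, uniform-over-the-horizon law of large numbers for the Hessian (the paper's Lemma \ref{lm1} device combined with the dominated-convergence argument near $\T_0$ in Lemmas \ref{lm2} and \ref{lm3}, resting on assumption {\bf A6}); it does not follow from the tail estimate. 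Both holes are fillable---by exactly the paper's Lemmas \ref{lm1}--\ref{lm5}---but as written the proposal stops short of a proof at the very point that distinguishes open-ended sequential monitoring from a fixed-horizon CUSUM statement.
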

\begin{remark}\label{rm.max}
If the maximum norm and a constant boundary functions, $b(\cdot)=b$, are considered,  then we have
\begin{eqnarray*}\label{CB}
\lim_{n\rightarrow\infty} P\big( k_{\A,n} <\infty\ |\ H_0 \big)&=&1 - \Big[ P\Big( \sup_{0<s<1} |W(s)| \leq b \Big) \Big]^d \nonumber\\
&=&1-\left\{ \frac{4}{\pi}\sum_{k=0}^\infty\frac{(-1)^k}{2k+1}\exp\bigg(-\frac{\pi^2(2k+1)^2}{8b^2}\bigg)\right\}^d.
\end{eqnarray*}
Since the summands in the infinite series decrease exponentially  with increasing $k$, the critical value $b$ can be accurately computed by numerical method.
The critical values corresponding to the significance levels of 1\%, 5\%, and 10\% are given in Table \ref{tab:cr.value}.
\end{remark}
\begin{table}[H]
  \centering
  {\small
  \caption{Critical values at the significance levels of $\A$=1\%, 5\% and 10\%.}\vspace{0.1cm}
  \tabcolsep=7.5pt
  \renewcommand{\arraystretch}{1} \label{tab:cr.value}
    \begin{tabular}{cllllllllll}
    \toprule
       & \multicolumn{10}{l}{$d$} \\
\cmidrule{2-11}        $\A$     & 1     & 2     & 3     & 4     & 5     & 6     & 7     & 8     & 9     & 10 \\
    \midrule
    1\%   & 2.807 & 3.023 & 3.143 & 3.226 & 3.289 & 3.340 & 3.383 & 3.419 & 3.451 & 3.480 \\
    5\%   & 2.241 & 2.493 & 2.632 & 2.728 & 2.800 & 2.859 & 2.907 & 2.948 & 2.984 & 3.016 \\
    10\%  & 1.960 & 2.231 & 2.381 & 2.484 & 2.561 & 2.623 & 2.675 & 2.719 & 2.758 & 2.792 \\
    \bottomrule
    \end{tabular}} 
\end{table}%

\subsection{DPD based sequential change test under the alternative hypothesis}

Let $\{X_{0,t} | t\in\mathbb{N}\}$  and  $\{X_{1,t} | t\in\mathbb{N}\}$ be the sequences of {\it i.i.d.} random variables from $f_{\T_0}$ and $f_{\T_1}$, respectively, where $\T_0\neq \T_1$.  Denote the observations up to the monitoring time by $\{X_1, \cdots,X_{n+k}\}$ and consider the following alternative hypothesis: for some fixed $k^*>0$,
\begin{eqnarray*}
H_1: X_t=\left\{\begin{split}
 X_{0,t}, \quad &t=1,\cdots, n+k^*.\\
 X_{1,t},\quad  &t=n+k^*+1,\cdots ,n+k.
 \end{split}\right.
\end{eqnarray*}
The alternative hypothesis represents a situation  that the historical data $\{X_1,\cdots,X_n\}$ follows from the density $f_{\T_0}$ and the parameter changes from $\T_0$ to $\T_1$ at $t=n+k^*+1$. To establish the limiting behavior of $k_{\A,n}$ under $H_1$, we assume the followings:
\begin{enumerate}
\item[\bf A8.] For some closed  neighborhood $N'(\T_0)$ of $\T_0$,   $\E\big[ \sup_{\T \in N'(\T_0)} \big\| \pa_\T l_\A (X_{1,t};\T)\big\|\big] <\infty$.
\item[\bf A9.] $\big\|  E \big[\pa_{\theta}\,l_\A(X_{1,t};\T_0)\big]\big\|> 0$.
\end{enumerate}
\begin{thm}\label{thm3}
Suppose that assumptions {\bf A1}--{\bf A9} and {\bf B} hold.  If $\sup_{t>0} b(t) <\infty$, then we have that for each $\A\geq0$,
\begin{eqnarray*}
\lim_{n\rightarrow\infty} P\big( k_{\A,n} <\infty\ |\ H_1 \big)=1.
\end{eqnarray*}
\end{thm}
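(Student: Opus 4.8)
The plan is to exhibit a single monitoring horizon at which the detector is driven above the boundary with probability tending to one. Fix any $\lambda>0$ and evaluate the procedure at $k=k_n:=\lfloor\lambda n\rfloor$, so that $k_n/n\to\lambda$. By Remark~\ref{s.detector} we may write $\pa_\T H_{\A,n+k}(\HT)=\sum_{t=n+1}^{n+k}\pa_\T l_\A(X_t;\HT)$, and under $H_1$ this decomposes according to the change point as
\begin{eqnarray*}
\pa_\T H_{\A,n+k_n}(\HT)=\sum_{t=n+1}^{n+k^*}\pa_\T l_\A(X_{0,t};\HT)+\sum_{t=n+k^*+1}^{n+k_n}\pa_\T l_\A(X_{1,t};\HT)=:S_n^{(0)}+S_n^{(1)}.
\end{eqnarray*}
The idea is that $S_n^{(1)}$ grows linearly in $k_n$ in the direction $\mu:=\E[\pa_\T l_\A(X_{1,t};\T_0)]$, while $S_n^{(0)}$ is asymptotically negligible, so that the numerator of $D_{\A,n}(k_n)$ is of exact order $n$; since the boundary $b(k_n/n)$ converges to the finite number $b(\lambda)$, the detector must eventually exceed it.

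For $S_n^{(0)}$, note that it consists of only $k^*$ summands, a fixed number; by the consistency $\HT\to\T_0$ from Theorem~\ref{thm1} together with the integrability of the score implicit in the existence of $\mathcal{I}_\A$ (assumption~{\bf A7}) and a first-order expansion controlled by assumption~{\bf A6}, each summand is $O_P(1)$, whence $S_n^{(0)}/k_n=o_P(1)$. For $S_n^{(1)}$ I would express the moving-window sum as a difference of two ordinary partial sums of the i.i.d.\ sequence $\{\pa_\T l_\A(X_{1,t};\T)\}$ and apply the uniform strong law of large numbers furnished by assumption~{\bf A8}, namely
\begin{eqnarray*}
\sup_{\T\in N'(\T_0)}\Big\|\frac{1}{m}\sum_{t=1}^m\pa_\T l_\A(X_{1,t};\T)-\E\big[\pa_\T l_\A(X_{1,t};\T)\big]\Big\|\longrightarrow0\quad\text{a.s.}
\end{eqnarray*}
Combining this with $\HT\to\T_0$ and the continuity of $\T\mapsto\E[\pa_\T l_\A(X_{1,t};\T)]$ at $\T_0$, the two normalized partial sums both converge to $\mu$, and their difference gives $S_n^{(1)}/k_n\to\mu$ almost surely; by assumption~{\bf A9} we have $\mu\neq0$.

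Finally, the consistency of $\hat{\mathcal{I}}_{\A,n}$ for the positive-definite $\mathcal{I}_\A$ (assumption~{\bf A7}) yields $\big\|\hat{\mathcal{I}}_{\A,n}^{-1/2}(S_n^{(0)}+S_n^{(1)})\big\|/k_n\to\|\mathcal{I}_\A^{-1/2}\mu\|=:c>0$, so that
\begin{eqnarray*}
D_{\A,n}(k_n)=\frac{\big\|\hat{\mathcal{I}}_{\A,n}^{-1/2}\,\pa_\T H_{\A,n+k_n}(\HT)\big\|}{\sqrt{n}\,(1+k_n/n)}\sim\frac{k_n\,c}{\sqrt{n}\,(1+\lambda)}\sim\frac{\lambda c}{1+\lambda}\sqrt{n}\longrightarrow\infty.
\end{eqnarray*}
Since $b(k_n/n)\to b(\lambda)<\infty$ (finite by the continuity in~{\bf B}, and bounded by the hypothesis $\sup_{t>0}b(t)<\infty$), the event $\{D_{\A,n}(k_n)>b(k_n/n)\}$ holds for all large $n$, so $P(k_{\A,n}\leq k_n\mid H_1)\to1$ and the claim follows; the case $\A=0$ is identical upon replacing $\pa_\T l_0$ by $-U_\T$ and invoking the same uniform law. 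I expect the main obstacle to be the rigorous second step — transferring the strong law from the fixed value $\T_0$ to the data-dependent argument $\HT$ across a moving index window — which is precisely where assumption~{\bf A8} and Theorem~\ref{thm1} are combined.
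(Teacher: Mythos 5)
Your proof is correct, and its overall architecture is the one the paper uses: evaluate the detector at a single deterministic horizon $k_n$, split the gradient sum at the change point, show the post-change block grows linearly in the direction $\mu=\E[\pa_\T l_\A(X_{1,t};\T_0)]\neq 0$ (assumption {\bf A9}) while the pre-change block is negligible, and compare with the bounded boundary. However, your implementation differs from the paper's in three genuine ways, and the comparison is instructive. First, the paper never invokes the score equation $\pa_\T H_{\A,n}(\HT)=0$ in this proof: it keeps the full pre-change sum $\sum_{t=1}^{n+k^*}\pa_\T l_\A(X_{0,t};\HT)$, including the historical data, and bounds it by $O_P\bigl(\sqrt{n}\,(1+k^*/n)\bigr)$ by re-invoking the null-hypothesis machinery (Lemmas \ref{lm4} and \ref{lm5}); you instead cancel the historical portion exactly (legitimate eventually a.s., since $\HT$ is interior by Theorem \ref{thm1} and {\bf A5}, consistent with Remark \ref{s.detector}), leaving only $k^*$ summands, each $O_P(1)$ via a Taylor expansion under {\bf A6}--{\bf A7} — more elementary, and it bypasses those two lemmas entirely. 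Second, for the post-change block the paper proves and applies a dedicated moving-window uniform SLLN (Lemma \ref{unif}); your difference-of-two-partial-sums trick with the standard i.i.d.\ ULLN is valid, but only because your window is linear, $k_n=\lfloor \lambda n\rfloor$, so that the coefficients $\frac{n+k_n}{k_n-k^*}$ and $\frac{n+k^*}{k_n-k^*}$ stay bounded and the a.s.\ errors pass through. Third, this is exactly the trade-off between the two proofs: the paper works with an arbitrary sequence satisfying $k_n/\sqrt{n}\to\infty$, possibly sublinear, for which your difference trick fails (those coefficients blow up) and the moving-window lemma is genuinely needed; the payoff is that the paper's argument certifies detection delay at most $k_n$ for every such sequence, i.e.\ delay of order $O_P(n^{1/2+\epsilon})$ (cf.\ the remark following Theorem \ref{thm5}), whereas your argument only certifies delay at most $\lambda n$. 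One cosmetic point: your phrase ``the event $\{D_{\A,n}(k_n)>b(k_n/n)\}$ holds for all large $n$'' should read ``holds with probability tending to one,'' since the pre-change block is controlled only in probability; the displayed conclusion $P(k_{\A,n}\leq k_n\mid H_1)\to1$ that you then state is the correct one.
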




\begin{remark}\label{choice.A} Selection of an optimal $\A$ can be an issue in actual practice.
In the sequential test procedure, parameters are assumed not to change in the historical data, so one could consider to use the existing selection criteria such as \cite{warwick:2005}, \cite{fugisawa:eguchi:2006}, and \cite{durio:isaia:2011} to determine an optimal $\A$. However, it could not be appropriate because the degree of contamination in a monitoring period can be different from that in the historical data.  As an illustrative example,  consider a situation  that outliers are not included in the historical data but the arriving  data is contaminated by outliers. In this case, $\A=0$ is usually chosen as the optimal $\A$, but the sequential procedure implemented with $\A=0$ is likely to be distorted by outliers occurring in the monitoring period. In short, determining $\A$ in advance may or may not be helpful depending on the situation, so it seems difficult to establish a systematic selection rule.   Nevertheless,  we recommend practitioners to use an $\A$ in $[0.2,0.3]$ based on our simulation results.  This is because too large $\A$ such as $\A\geq 0.5$ can  lead to a significance loss in powers and the procedure with $\A \leq 0.1$ can be influenced to some degree by outliers.  The procedure with  an $\A$  in [0.2,0.3] shows strong enough robustness while performing as efficiently as the score based procedure (see Section \ref{Sec:sim}).
\end{remark}

As aforementioned in Section \ref{Sec:intro}, the MDPDE can be conveniently applied to various parametric models including time series models and multivariate models.   
Once such MDPDE is set up, our sequential test procedure can be extended to the corresponding models. In the following section, we develop a sequential change point test for time series models based on the DP divergence. As an application, the sequential test procedure for GARCH models are provided. All the remarks mentioned in this section still hold for the extended cases.

\section{DP divergence based sequential change point test in time series models}\label{Sec:3}
We first briefly introduce MDPD estimation procedure in time series models. However, since we are focused not on estimation but on sequential testing procedure, the asymptotic properties of the MDPDE for time series models are not dealt with in detail. Indeed, the MDPD estimation procedure has been already applied to various time series models, for example, in GARCH models, multivariate Gaussian time series models, and time series models of counts, and  its strong consistency and asymptotic normality were well established. See, for example, \cite{lee:song:2009}, \cite{kim:lee:2011}, and \cite{kang:lee:2014}. Hence, in this section, we assume that the strong consistency and $\sqrt{n}$-consistency are established, and aim to provide some sufficient conditions for deriving the asymptotic behaviors of our stopping time below. For simplicity of presentation, we consider univariate time series model.

\subsection{MDPDE for time series models}\label{Sub3:1}
 Assume that a time series $\{X_t | t\in\mathbb{Z}\}$ is strictly stationary and ergodic, and that  the conditional mean and variance are specified by a $d$-dimensional parameter vector $\T\in\Theta$. Then, the time series model that generates $\{X_t\}$ may be represented by
 \begin{eqnarray}\label{TS}
X_t=\mu_t(\theta)+\sigma_t(\theta)\epsilon_t,
\end{eqnarray}
where  $\mu_t(\T)=E(X_t|\mathcal{F}_{t-1})$, $\sigma_t^2(\T)=Var(X_t|\mathcal{F}_{t-1})$,  where $\mathcal{F}_t=\sigma(X_s|s\leq t)$, and $\{\epsilon_t|t\in\mathbb{Z}\}$ is a sequence of i.i.d. random variables from a density $f_\ep$ with zero mean and unit variance. Various times series models such as ARMA models, GARCH-type models, and ARMA-GARCH models are included in the model above. It is noteworthy that, due to Theorem 20.1 in \cite{billingsley:1995}, each $\mu_t(\T)$ and $\sigma_t(\T)$ can be expressed as a measurable function of $\{X_{t-1},X_{t-2},\cdots\}$ and $\T$. We assume that the parameter space $\Theta$ is a compact subset of $\mathbb{R}^d$ and the true parameter $\T_0$ is in the interior of $\Theta$. To apply MDPD estimation procedure, we consider the situation that the error distribution $f_\ep$ is specified. Hereafter, we denote the process from the above model with $\T$ by $\{X_t^{\T}|t\in\mathbb{Z}\}$ and $X_t^{\T_0}$ is simply denoted by $X_t$ except when $\T_0$ changes in the observations.


Let $f_\T (x|\mathcal{F}_{t-1})$ be the conditional density of $X_t^\T$ given $\mathcal{F}_{t-1}$.
Then, the DP divergence between the two conditional densities $f_{\T_0} (x|\mathcal{F}_{t-1})$ and $f_\T (x|\mathcal{F}_{t-1})$ is defined by
\begin{eqnarray*}
&&d_\alpha\left(f_{\theta_0}(\cdot|\mathcal{F}_{t-1}),f_\theta(\cdot|\mathcal{F}_{t-1})\right)\\
&&\\
&&=\left\{\begin{array}{ll}
                         \displaystyle \int\left\{ f_\theta^{1+\A}(x|\mathcal{F}_{t-1})-(1+\frac{1}{\A})f_{\theta_0}(x|\mathcal{F}_{t-1})f_\theta^\A(x|\mathcal{F}_{t-1})+\frac{1}{\A}f_{\theta_0}^{1+\alpha}(x|\mathcal{F}_{t-1}) \right\}dx &, \A>0 \vspace{0.3cm}\\
                          \displaystyle \int f_{\theta_0}(x|\mathcal{F}_{t-1})
                           \left\{\log f_{\theta_0}(x|\mathcal{F}_{t-1})-\log
                           f_\theta(x|\mathcal{F}_{t-1})\right\}dx&,\A=0.
                        \end{array}\right.
\end{eqnarray*}
Given the observations $X_1,\cdots, X_n$, the MDPDE for the model (\ref{TS}) is obtained in the same manner as in (\ref{MDPDE}):
\begin{eqnarray*}
\hat \theta_{\alpha, n}^o = \argmin_{\theta \in \Theta}\, \frac{1}{n} \sum_{t=1}^n l_{\alpha}(X_t;\T):=\argmin_{\theta \in \Theta} \frac{1}{n} H_{\alpha,n}(\theta),
\end{eqnarray*}
where
\begin{eqnarray*}
l_\A(X_t;\theta)=\left\{\begin{array}{ll}
                          \displaystyle\int f_{\theta}^{1+\alpha}(x|\mathcal{F}_{t-1})dx-\left(1+\frac{1}{\alpha}\right)f_{\theta}^\alpha(X_t|\mathcal{F}_{t-1}) &, \alpha>0\vspace{0.3cm} \\
                           -\log f_{\theta}(X_t|\mathcal{F}_{t-1}).
                           &,\alpha=0.
                        \end{array}\right.
\end{eqnarray*}
We note that due to (\ref{TS}), the conditional density is given by
\[f_{\theta}(x|\mathcal{F}_{t-1}) =\frac{1}{\sigma_t(\T)}f_\ep \bigg( \frac{x-\mu_t(\T)}{\sigma_t(\T)} \bigg).  \]
Without confusion, we share the same notations $l_\A(X_t;\T)$, $H_{\A,n}(\T), \mathcal{I}_\A$, and $\mathcal{J}_\A$ in Section \ref{Sec:2}. Throughout this section, $l_\A(x;\T)$ is assumed to be twice continuously differentiable with respect to $\T$.
We impose the further assumptions on $\{l_\A(X_t;\T)|t\in\mathbb{Z}\}$ as follows: for each $\A\geq0$,
\begin{enumerate}
\item[\bf M1.] $\{l_\A(X_t;\T)|t\in\mathbb{Z}\}$ is strictly stationary and ergodic for each $\T\in\Theta$.
\item[\bf M2.]  $\mathcal{I}_\A=\E \big[ \pa_\theta l_\A(X_t;\T_0) \pa_{\theta'} l_\A(X_t;\T_0)\big]$ and $\mathcal{J}_\A=\E \big[ \paa l_\A(X_t;\T_0)\big]$ are non-singular.
\end{enumerate}
\begin{remark}
Assumption {\bf M1} is ensured by the strictly stationarity and ergodicity of $\{\mu_t(\T)|t\in\mathbb{Z}\}$ and $\{\sigma_t^2(\T)|t\in\mathbb{Z}\}$. For example, in GARCH models, assumption {\bf G1} in subsection \ref{subsec:GARCH} below implies the stationarity and ergodicity of $\{\sigma_t^2(\T)|t\in\mathbb{Z}\}$.
\end{remark}
When estimating the model (\ref{TS}), $\mu_t(\T)$ and $\sigma_t(\T)$ are usually not explicitly obtained because of the initial value issue.  In this case, they are generally approximated by using a recursion with some initial values chosen. In GARCH models and  ARMA-GARCH models, one can find the approximated processes, for exmple, in \cite{berkes.et.al:2003} and \cite{francq:zakoian:2004}. Hereafter, we denote the approximated processes by $\{ \tilde \mu_t(\T) |t=1\cdots,n\}$ and $\{ \tilde \sigma^2_t(\T) |t=1\cdots,n\}$. Then the MDPDE actually used is given by
\begin{eqnarray}\label{MDPDE.ts}
\hat \theta_{\alpha, n} = \argmin_{\theta \in \Theta}\, \frac{1}{n} \sum_{t=1}^n \tilde l_{\alpha}(X_t;\T):=\argmin_{\theta \in \Theta} \frac{1}{n} \tilde H_{\alpha,n}(\theta),
\end{eqnarray}
where
\begin{eqnarray*}
\tilde l_\A(X_t;\theta)=\left\{\begin{array}{ll}
                          \displaystyle \frac{1}{\tilde \sigma_t^{\A+1}(\T)} \int f_\ep^{1+\A} \bigg( \frac{x-\tilde \mu_t(\T)}{\tilde \sigma_t(\T)} \bigg)dx-\left(1+\frac{1}{\alpha}\right)\frac{1}{\tilde\sigma^\A_t(\T)}f^\A_\ep \bigg( \frac{X_t-\tilde \mu_t(\T)}{\tilde \sigma_t(\T)}\bigg) &, \alpha>0\vspace{0.3cm} \\
                          \displaystyle
                           -\log \frac{1}{\tilde\sigma_t(\T)}f_\ep \bigg( \frac{X_t-\tilde \mu_t(\T)}{\tilde \sigma_t(\T)}\bigg)
                           &,\alpha=0.
                        \end{array}\right.
\end{eqnarray*}
\begin{remark} In the case that the support of $f_\ep$  is $\mathbb{R}$, we can see that
\[\int f_\ep^{1+\A} \bigg( \frac{x-\tilde \mu_t(\T)}{\tilde \sigma_t(\T)} \bigg)dx=\tilde\sigma_t(\T) \int f_\ep^{1+\A}(x)dx.\]
Since $\int f^{1+\A}_\ep(x)dx$ does not include any parameters, for each $\A$, one can calculate the integral in advance  analytically or numerically. For example, if $\ep_t$ follows $N(0,1)$, the integral is equal to $(1+\A)^{-1/2}$. Hence, through replacing $\int f_\ep^{1+\A}(x)dx$ with the pre-obtained value, the computational burden in optimizing the above  objective function can be significantly reduced.
\end{remark}
\noindent As mentioned above, the strong consistency and $\sqrt{n}$--consistency of $\hat\T_{\A,n}$ is assumed to hold:
\begin{enumerate}
\item[\bf M3.] $\hat\T_{\A,n}$ converges almost surely to $\T_0$ and $\sqrt{n}(\hat\T_{\A,n}-\T_0)=O_P(1)$.
\end{enumerate}
Since $d_\alpha(f_{\theta_0}(\cdot|\mathcal{F}_{t-1}),f_\theta(\cdot|\mathcal{F}_{t-1}))$  is almost surely nonnegative and minimized at $\T_0$, the minimum of $\E[ d_\alpha(f_{\theta_0}(\cdot|\mathcal{F}_{t-1}),f_\theta(\cdot|\mathcal{F}_{t-1}))]$ is obtained also at $\T_0$. Noting that
\begin{eqnarray}\label{EL}
\E[l_\A (X_t;\T)] =\E[ d_\alpha(f_{\theta_0}(\cdot|\mathcal{F}_{t-1}),f_\theta(\cdot|\mathcal{F}_{t-1}))]-\frac{1}{\A}\E \Big[\int f_{\T_0}(x|\mathcal{F}_{t-1})dx\Big],
\end{eqnarray}
one can see that $\E[l_\A (X_t;\T)]$ has the minimum value at the true parameter $\T_0$. Therefore, since $\{l_\A(X_t;\T)|t\geq1\}$ is strictly stationary and ergodic by assumption {\bf M1}, the strong consistency of $\hat\T_{\A,n}$ can be shown by the standard arguments, for example, if the followings are satisfied:
\begin{eqnarray*}
\E \sup_{\T\in\Theta} l_\A(X_t;\T)<\infty\quad\mbox{and}\quad\frac{1}{n}\sum_{t=1}^n \sup_{\T\in\Theta} \big| l_\A(X_t;\theta)-\tilde l_\A(X_t;\theta)\big|=o(1)\quad a.s.
\end{eqnarray*}
When fitting a model for which the MDPDE has not yet been established, one can verify the strong consistency by checking that
 the two conditions above hold for the model under consideration. Although we assume $\sqrt{n}$--consistency of $\hat\T_{\A,n}$ in assumption {\bf M3}, this can indeed be derived under  the conditions {\bf S1}--{\bf S4} introduced in the following subsection. For more details, see Lemma \ref{root.n.con} in Appendix.
\subsection{DPD based sequential test in time series models}\label{Sub3:2}
Suppose that $\{X_1, \cdots, X_n\}$ is a historical data from (\ref{TS}) with $\theta_0$ and $X_{n+1}, X_{n+2},$ $\cdots$ are being observed sequentially. At monitoring time $n+k$, we wish to test the following null hypothesis:
\begin{eqnarray*}
 H_0: \theta_0\text{ does not change over } t \leq n+k.
\end{eqnarray*}
Let $\hat{\mathcal{I}}_{\A,n}$ be a consistent estimator for $\mathcal{I}_\A $ obtained from  the historical data. By assumption {\bf M2}, $\hat{\mathcal{I}}_{\A,n}$ is invertible for sufficiently large $n$. Hence, similarly to (\ref{detector}) and (\ref{stop}), we can define a DPD based  detector and stopping time as follows:
\begin{eqnarray}\label{detector.ts}
\tilde D_{\A,n}(k):=\frac{ \big\| \hat{\mathcal{I}}_{\A,n}^{-\frac{1}{2}}\, \pa_\theta \tilde H_{\A,n+k}(\hat\theta_{\A,n})\big\|}{\sqrt{n} \Big(1+\frac{k}{n}\Big)}
\end{eqnarray}
and
\begin{eqnarray}\label{stop.ts}
\tilde{k}_{\A,n}:=\min \Big\{ k\geq1 \big|\, \tilde D_{\A,n}(k) >  b\Big(\frac{k}{n}\Big)\Big\},
\end{eqnarray}
where $\hat\T_{\A,n}$ is the MDPDE defined in (\ref{MDPDE.ts}) and obtained also from the historical data.  The boundary function $b(\cdot)$ is assumed to satisfy assumption ${\bf B}$ in Section \ref{Sec:2}.
We now present the following sufficient conditions under which the result in Theorem \ref{thm2} holds also for the stopping time $\tilde k_{\A,n}$.
\begin{enumerate}
\item[\bf S1.] $\{\pa_\T l_\A(X_t ;\T_0)\}$ is a martingale difference with respect to $\mathcal{F}_t$.
\item[\bf S2.]$\displaystyle \frac{1}{\sqrt{n}}\sum_{t=1}^n \big\| \pa_{\theta}\,l_\A(X_t;\theta_0)-\pa_{\theta}\,\tilde l_\A(X_t;\theta_0)\big\|= o(1)\quad a.s.$
\item[\bf S3.]For some neighborhood $N_1(\T_0)$ of $\T_0$, $$\displaystyle \frac{1}{n}\sum_{t=1}^{n}\sup_{\theta \in N_1(\T_0)}\big\|\paa  l_\A(X_t;\theta)
-\paa  \tilde l_\A(X_t;\theta)\big\|= o(1)\quad a.s.$$
\item[\bf S4.] For some neighborhood $N_2(\T_0)$ of $\T_0$,
 $$\displaystyle\E \sup_{\theta \in N_2(\T_0)}\big\|\paa  l_\A(X_t;\T)\big\|  <\infty.$$
\end{enumerate}
Condition {\bf S1} is required to use the functional central limit
 theorem (FCLT) and apply the Hájek-Rényi-Chow inequality in Lemma \ref{lm.I3} below.
This condition  is commonly satisfied in stationary time series models. In particular, if the derivative can be taken under the integral in $l_\A(X_t;\T)$, the condition is deduced from the fact that  \[\E\big[f_{\T_0}^{\A-1}(X_t|\mathcal{F}_{t-1})\pa_{\T}f_{\T_0}(X_t|\mathcal{F}_{t-1})|\mathcal{F}_{t-1}\big]
=\int f_{\T_0}^\A(x|\mathcal{F}_{t-1})\,\pa_\T f_{\T_0}(x|\mathcal{F}_{t-1})dx.\]
 Condition {\bf S4} together with  the continuity of $\paa l_\A(x;\T)$ is used to verify Lemma \ref{lm.I0}, which are very helpful
for proving Lemma \ref{lm.I1}.
As will be seen in Appendix, Lemmas \ref{lm.I1}-\ref{lm.I3} are key lemmas for deriving the asymptotic behavior of $\tilde k_{\A,n}$.
In most of  time series models including GARCH-type terms, similar conditions like {\bf S2} and  {\bf S3}, usually the case of $\A=0$, are often established and usefully used in  deriving asymptotic properties of the estimator such as QMLE.
\begin{thm}\label{thm4}
Suppose that assumptions  {\bf M1}--{\bf M3}, {\bf B}, and conditions  {\bf S1}-{\bf S4} hold. Then, we
have that for each $\A\geq0$,
\begin{eqnarray*}
\lim_{n\rightarrow\infty} P\big( \tilde k_{\A,n} <\infty\ |\ H_0 \big)=P\left(  \sup_{0<s<1} \frac{ \big\| W_d(s)\big\|}{b(s/(1-s))} >1 \right),
\end{eqnarray*}
where $\{W_d(s)\}$ is a  $d$-dimensional standard Wiener process.
\end{thm}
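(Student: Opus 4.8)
The plan is to prove Theorem \ref{thm4} by reducing $\tilde D_{\A,n}(k)$, uniformly in the monitoring index $k$, to the detector driven by the \emph{exact} score $\pa_\T l_\A(X_t;\T_0)$, and then invoking a martingale functional central limit theorem so that the distributional identity established in Theorem \ref{thm2} carries over unchanged. Since $\{\tilde k_{\A,n}<\infty\}=\{\sup_{k\geq1}\|\hat{\mathcal{I}}_{\A,n}^{-1/2}\pa_\T\tilde H_{\A,n+k}(\hat\theta_{\A,n})\|/(\sqrt{n}(1+k/n)b(k/n))>1\}$, the whole problem is to identify the weak limit of the process $k\mapsto \hat{\mathcal{I}}_{\A,n}^{-1/2}\pa_\T\tilde H_{\A,n+k}(\hat\theta_{\A,n})/\sqrt n$ under the time change $t=k/n$. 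The algebraic backbone is the first-order condition $\pa_\T\tilde H_{\A,n}(\hat\theta_{\A,n})=0$, which lets me rewrite the numerator as $\pa_\T\tilde H_{\A,n+k}(\hat\theta_{\A,n})-(1+k/n)\,\pa_\T\tilde H_{\A,n}(\hat\theta_{\A,n})$ — the subtracted term being zero — and then Taylor-expand both pieces about $\T_0$.

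First I would strip off the approximation error. By \textbf{S2}, $\frac{1}{\sqrt n}\|\pa_\T\tilde H_{\A,n}(\T_0)-\pa_\T H_{\A,n}(\T_0)\|=o(1)$ a.s., and \textbf{S3} together with \textbf{M3} and a mean value expansion controls the corresponding difference of the linearized terms; this is precisely the content I expect Lemmas \ref{lm.I0}--\ref{lm.I2} to supply, and it shows that $\pa_\T\tilde H_{\A,n+k}(\hat\theta_{\A,n})$ may be replaced by $\pa_\T H_{\A,n+k}(\hat\theta_{\A,n})$ with an error that is uniformly negligible relative to the normalization $\sqrt n(1+k/n)$. Next I would Taylor-expand: writing $S_m:=\sum_{t=1}^m\pa_\T l_\A(X_t;\T_0)$, the ergodic theorem under \textbf{M1} together with \textbf{S4} and the continuity of $\paa l_\A$ gives $\frac1m\paa H_{\A,m}(\cdot)\to\mathcal{J}_\A$ uniformly on a neighborhood of $\T_0$, while \textbf{M3} gives $\sqrt n(\hat\theta_{\A,n}-\T_0)=O_P(1)$ and the estimating equation furnishes $\hat\theta_{\A,n}-\T_0=-(n\mathcal J_\A)^{-1}S_n+o_P(n^{-1/2})$. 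Substituting, the cross terms cancel through $\mathcal J_\A\mathcal J_\A^{-1}=I$, leaving
\[
\pa_\T\tilde H_{\A,n+k}(\hat\theta_{\A,n})=\big(S_{n+k}-S_n\big)-\tfrac{k}{n}\,S_n+R_{n,k},
\]
with $R_{n,k}$ uniformly negligible against $\sqrt n(1+k/n)$.

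It then remains to pass to the limit. Under \textbf{S1} and \textbf{M2}, $\{\pa_\T l_\A(X_t;\T_0)\}$ is a strictly stationary, ergodic martingale-difference sequence with covariance $\mathcal I_\A$, so the multivariate martingale FCLT yields $\hat{\mathcal I}_{\A,n}^{-1/2}S_{\lfloor nu\rfloor}/\sqrt n\Rightarrow W_d(u)$ in the Skorokhod space, using the consistency of $\hat{\mathcal I}_{\A,n}$. With $t=k/n$, the reduced numerator $\hat{\mathcal I}_{\A,n}^{-1/2}((S_{n+k}-S_n)-tS_n)/\sqrt n$ converges to $W_d(1+t)-W_d(1)-t\,W_d(1)$, and a direct covariance computation identifies this Gaussian process in distribution with $(1+t)\,W_d\!\big(t/(1+t)\big)$. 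Dividing by $(1+t)$ and applying the continuous mapping theorem to the supremum functional, with the substitution $s=t/(1+t)$ (so $b(k/n)$ becomes $b(s/(1-s))$), produces $\sup_{0<s<1}\|W_d(s)\|/b(s/(1-s))$, exactly the claimed limit; assumption \textbf{B} guarantees $\inf b>0$, so this supremum is a.s. finite and the functional is continuous.

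The main obstacle, as in the i.i.d. case, is uniformity in $k$ over the \emph{whole} half-line $k\geq1$, since the weak convergence above is only on compact $t$-sets. For $t=k/n$ in a bounded range the argument is routine, but the crossing probability over the remote tail $k>Kn$ (equivalently $s$ near $1$) must be shown to be arbitrarily small uniformly in $n$; I would control it with the Hájek--Rényi--Chow inequality applied to the martingale $\{S_m\}$ (the role I anticipate for Lemma \ref{lm.I3}), exploiting $\inf_{t>0}b(t)>0$ from \textbf{B}. Dovetailing this tail bound with the requirement that the remainder $R_{n,k}$ and the linearization error be negligible \emph{uniformly} against the slowly growing normalization $\sqrt n(1+k/n)$ — not merely for fixed $k$ — is the delicate technical core; the conditions \textbf{S2}--\textbf{S4} are tailored precisely so that these uniform bounds hold.
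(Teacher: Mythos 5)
Your proposal is correct and follows essentially the same route as the paper: the paper's Lemmas \ref{lm.I0}--\ref{lm.I2} carry out exactly your reduction (Taylor expansion about $\T_0$, the first-order condition $\pa_\T\tilde H_{\A,n}(\hat\theta_{\A,n})=0$ combined with uniform Hessian convergence and condition {\bf S2} to strip the approximation error), and Lemma \ref{lm.I3} supplies precisely the martingale FCLT plus Hájek--Rényi--Chow tail control you describe, after which the identity in law between $\{W_d(1+s)-(1+s)W_d(1)\}$ and $\{(1+s)W_d(s/(1+s))\}$ and the substitution $s\mapsto s/(1+s)$ finish the argument as in the proof of Theorem \ref{thm2}. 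Your bookkeeping of the remainder terms against the normalization $\sqrt{n}(1+k/n)$, uniformly in $k$, matches the paper's treatment, so there is no gap.
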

Next, we establish the asymptotic property of $\tilde k_{\A,n}$ under the alternative hypothesis below. To be more specific, let $\{X^{\T_0}_t | t\in\mathbb{Z}\}$  and  $\{X^{\T_1}_t | t\in\mathbb{Z}\}$ be the strictly stationary and ergodic processes from the model (\ref{TS}) with the parameter $\theta_0$ and $\theta_1(\neq \theta_0)$, respectively.  Then, the alternative hypothesis under consideration is expressed as follows  : for some fixed $k^*>0$,
\begin{eqnarray*}
H_1: X_t=\left\{\begin{split}
 X^{\T_0}_t, \quad &t=1,\cdots, n+k^*\\
 X^{\T_1}_t,\quad  &t=n+k^*+1,\cdots ,n+k.
 \end{split}\right.
\end{eqnarray*}
To establish the consistency of the test procedure, we assume that assumptions {\bf M1}  hold for
 $\{l_\A(X_t^{\T_1};\T)|t\in\mathbb{Z}\}$. Additionally, the following conditions are made: for some closed neighborhood $N_3(\T_0)$ of $\T_0$,
\begin{enumerate}
\item[\bf S5.]  $\displaystyle \E \sup_{\T \in N_3(\T_0)} \big\| \pa_\T\, l_\A (X^{\T_1}_t;\T)\big\| <\infty$.
\item[\bf S6.]$\displaystyle \frac{1}{k_n-k^*} \sum_{t>n+k^*}^{n+k_n}\sup_{\theta \in N_3(\T_0)}\big\| \pa_{\theta}\,l_\A(X^{\T_1}_t;\T)-\pa_{\theta}\,\tilde l_\A(X^{\T_1}_t;\T)\big\|= o(1)\ a.s.,\vspace{0.2cm}$ where $\{k_n\}$ is an increasing sequnce of positive integers.
\end{enumerate}
Like assumption {\bf A9}, we impose the following key assumption for establishing the consistency of the test procedure. Indeed, $\E\,  l_\A(X^{\T_1}_t;\T)$  has the minimum value at $\T_1$, so if the minimizer is unique,  the following condition is obviously fulfilled provided that the differentiation and the expectation are interchangeable.
\begin{enumerate}
\item[\bf S7.] $E \big\| \pa_{\theta}\,l_\A(X^{\T_1}_t;\T_0)\big\|>0.$
\end{enumerate}
\begin{thm}\label{thm5}
Suppose that assumptions in Theorem \ref{thm4} still hold for $\{X^{\T_0}_t | t\geq1\}$ and that assumptions {\bf M1} hold for $\{l_\A(X_t^{\T_1};\T)|t\in\mathbb{Z}\}$. If conditions {\bf S5}--{\bf S7} are satisfied and $\sup_{t>0} b(t) <\infty$, then we have that for each $\A\geq0$,
\begin{eqnarray*}
\lim_{n\rightarrow\infty} P\big( \tilde k_{\A,n} <\infty\ |\ H_1 \big)=1.
\end{eqnarray*}
\end{thm}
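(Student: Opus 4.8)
The plan is to exhibit a deterministic sequence $k_n\to\infty$ with $k_n/n\to\infty$ along which the detector overshoots the (bounded) boundary with probability tending to one. Since $\tilde k_{\A,n}$ is the first crossing time, the inclusions $\{\tilde D_{\A,n}(k_n)>b(k_n/n)\}\subseteq\{\tilde k_{\A,n}\le k_n\}\subseteq\{\tilde k_{\A,n}<\infty\}$ reduce the theorem to showing $P(\tilde D_{\A,n}(k_n)>b(k_n/n)\mid H_1)\to1$. Because $\sup_{t>0}b(t)<\infty$, it is in turn enough to prove that the scalar $\tilde D_{\A,n}(k_n)$ diverges to $+\infty$.

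First I would split the numerator of the detector into an in-control block and an out-of-control block. For $k_n>k^*$ (true for all large $n$),
\begin{eqnarray*}
\pa_\T\tilde H_{\A,n+k_n}(\hat\T_{\A,n})=\pa_\T\tilde H_{\A,n}(\hat\T_{\A,n})+\sum_{t=n+1}^{n+k^*}\pa_\T\tilde l_\A(X^{\T_0}_t;\hat\T_{\A,n})+\sum_{t=n+k^*+1}^{n+k_n}\pa_\T\tilde l_\A(X^{\T_1}_t;\hat\T_{\A,n}).
\end{eqnarray*}
The first term vanishes identically: by {\bf M3} the minimizer $\hat\T_{\A,n}\to\T_0$ lies eventually in the interior of $\Theta$, so the first-order condition gives $\pa_\T\tilde H_{\A,n}(\hat\T_{\A,n})=0$. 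The second term is a sum of the \emph{fixed} number $k^*$ of summands, each convergent as $\hat\T_{\A,n}\to\T_0$, hence it is $O_P(1)$. Thus the entire in-control contribution is asymptotically negligible next to the out-of-control block.

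The heart of the proof is to show the out-of-control block grows linearly in $k_n$. Invoking {\bf S6} to discard the initialization error between $\tilde l_\A$ and $l_\A$, the task reduces to the convergence of $\tfrac{1}{k_n-k^*}\sum_{t=n+k^*+1}^{n+k_n}\pa_\T l_\A(X^{\T_1}_t;\hat\T_{\A,n})$. Since $\{l_\A(X^{\T_1}_t;\T)\}$ is stationary and ergodic (assumption {\bf M1} for the $\T_1$-process) and {\bf S5} supplies the dominating moment on $N_3(\T_0)$, a uniform ergodic theorem gives $\sup_{\T\in N_3(\T_0)}\big\|\tfrac1m\sum_{t=1}^m\pa_\T l_\A(X^{\T_1}_t;\T)-\E\,\pa_\T l_\A(X^{\T_1}_1;\T)\big\|\to0$ a.s. I would then rewrite the moving window $[n+k^*+1,\,n+k_n]$ as a difference of two initial-segment averages and use $k_n/n\to\infty$ so that the prefactors $\tfrac{n+k_n}{k_n-k^*}\to1$ and $\tfrac{n+k^*}{k_n-k^*}\to0$; this removes the dependence on the drifting left endpoint, and after plugging in $\hat\T_{\A,n}\to\T_0$ one obtains
\begin{eqnarray*}
\frac{1}{k_n-k^*}\sum_{t=n+k^*+1}^{n+k_n}\pa_\T l_\A(X^{\T_1}_t;\hat\T_{\A,n})\ \longrightarrow\ \Delta:=\E\big[\pa_\T l_\A(X^{\T_1}_1;\T_0)\big],
\end{eqnarray*}
with $\|\Delta\|>0$ by {\bf S7}. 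As $\hat{\mathcal I}_{\A,n}^{-1/2}\to\mathcal I_\A^{-1/2}$, a nonsingular matrix (by consistency of the estimator and {\bf M2}), the numerator is bounded below by $c\,(k_n-k^*)$ with $c=\|\mathcal I_\A^{-1/2}\Delta\|>0$, with probability tending to one.

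Finally I would assemble the pieces. The denominator of $\tilde D_{\A,n}(k_n)$ equals $\sqrt n\,(1+k_n/n)$, which under $k_n/n\to\infty$ is asymptotic to $k_n/\sqrt n$; hence $\tilde D_{\A,n}(k_n)\gtrsim c\,(k_n-k^*)/(k_n/\sqrt n)\sim c\sqrt n\to\infty$, eventually exceeding the bounded boundary and yielding $P(\tilde k_{\A,n}<\infty\mid H_1)\to1$. The main obstacle is the third step: controlling the ergodic average over the window $[n+k^*+1,\,n+k_n]$, whose \emph{both} endpoints drift to infinity, simultaneously with the random argument $\hat\T_{\A,n}$. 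Taking $k_n$ with $k_n/n\to\infty$ is exactly what makes the out-of-control window dominate the fixed-length historical segment and lets the uniform ergodic theorem deliver the limit $\Delta$; the remaining estimates are routine.
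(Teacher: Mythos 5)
Your proof is correct and its overall architecture is the same as the paper's: pick a deterministic sequence $k_n$, split $\partial_\theta \tilde H_{\alpha,n+k_n}(\hat\theta_{\alpha,n})$ into pre- and post-change blocks, show the post-change block grows at rate $k_n$ via a uniform ergodic limit combined with {\bf S5}--{\bf S7}, and finish using $\sup_{t>0}b(t)<\infty$ and the first-crossing-time inclusions. Two local differences deserve comment. (i) For the post-change block the paper invokes its Lemma \ref{unif} (a uniform SLLN over the moving window $(n+k^*,n+k_n]$) with any $k_n$ satisfying $k_n/\sqrt{n}\to\infty$; you instead impose $k_n/n\to\infty$ and obtain the moving-window limit elementarily, as a difference of two initial-segment averages whose prefactors $(n+k_n)/(k_n-k^*)$ and $(n+k^*)/(k_n-k^*)$ tend to $1$ and $0$. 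This is a genuine trade-off: your stronger growth condition is exactly what keeps those prefactors bounded and makes the difference trick rigorous (moving ergodic averages over windows of length $o(n)$ are delicate), but it buys only a weaker implicit detection-delay bound, whereas the paper's choice $k_n/\sqrt{n}\to\infty$ is what underlies its remark that $\tilde k_{\alpha,n}-k^*=O_P(n^{0.5+\epsilon})$. (ii) For the pre-change block the paper simply reuses its null-hypothesis Lemmas \ref{lm.I1}--\ref{lm.I3} applied to $\{X^{\theta_0}_t\}$, giving an $O_P\big(\sqrt{n}(1+k^*/n)\big)$ bound; you instead annihilate the first $n$ terms by the first-order condition $\partial_\theta\tilde H_{\alpha,n}(\hat\theta_{\alpha,n})=0$ and assert the remaining $k^*$ terms are $O_P(1)$ because each summand is ``convergent.'' That phrase is wrong as stated: the summands involve the fresh observations $X_{n+j}$, $j=1,\dots,k^*$, so they do not converge; what you need (and what is true) is tightness, which follows from stationarity together with a first-derivative moment bound near $\theta_0$ (obtainable from {\bf M2} and {\bf S4} by a Taylor expansion) and control of the $\tilde l_\alpha$ versus $l_\alpha$ initialization error through {\bf S2}--{\bf S3}; any $o_P(\sqrt{n})$ bound suffices after dividing by $\sqrt{n}(1+k_n/n)$. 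With that small repair your argument is complete.
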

\begin{remark} Let $\{k_n\}$ be an increasing sequence of positive integers satisfying $k_n/\sqrt{n}\rightarrow\infty$. In the proof of Theorem \ref{thm5}, we show that
\begin{eqnarray*}
\frac{\big\| \pa_\T \tilde H_{\A,n+k_n}(\hat{\theta}_{\A,n})\big\|}{\sqrt{n}\Big(1+\frac{k_n}{n}\Big)}
&\stackrel{P}{\longrightarrow}& \infty,
\end{eqnarray*}
which implies that $P(\tilde k_{\A,n} -k^* \leq k_n-k^*)\rightarrow 1$. Hence, we can see that the detection delay $\tilde k_{\A,n} -k^*$ is bounded by $O_P(k_n)$. That is, $\tilde k_{\A,n}-k^*=O_P(n^{0.5+\epsilon})$ for any $\epsilon>0$.  For more details of some theoretical results on delay time in sequential procedure,  we refer the reader to \cite{aue2004delay} and  \cite{aue2009delay}.
\end{remark}
\subsection{Application of DPD based sequential test to GARCH models}\label{subsec:GARCH}
Consider the following GARCH($p,q$) model:
\begin{eqnarray}\label{GARCH}
\begin{split}
X_t &= \sigma_t(\T)\,\epsilon_t,\\
\sigma_t^2(\T)&= \omega+\sum_{i=1}^p\A_{i} X_{t-i}^2+\sum_{j=1}^q\B_{j}\sigma^2_{t-j}(\T),
\end{split}
\end{eqnarray}
where $\T=(\W,\A_1,\cdots,\A_p,\B_1,\cdots,\B_q)'$ and $\{\epsilon_t | t\in
\mathbb{Z}\}$ is a sequence of i.i.d. random variables with zero mean and unit variance. We assume that
the process $\{X_t | t\in \mathbb{Z}\}$ from the model (\ref{GARCH}) with the true parameter $\T_0$ is strictly stationary
 and ergodic. It is well known that GARCH model has a unique strictly stationary and ergodic solution if and only if the top Lyapunov  exponent is strictly negative. For more details, see, for example, \cite{francq:zakoian:2019}. We further impose the following assumption to ensure the stationarity and ergodicity of $\{\sigma_t^2(\T)|t\in\mathbb{Z}\}$.
 \begin{enumerate}
\item[\bf G1.] $\displaystyle \sup_{\T\in\Theta}\sum_{j=1}^{q}\beta_j <1.$
\end{enumerate}
(cf. \cite{francq:zakoian:2004}).

As a robust estimator for the GARCH model, \cite{lee:song:2009} introduced the following MDPDE:
\begin{eqnarray}\label{MDPDE.GARCH}
 \hat{\theta}_{\A,n}= \argmin_{\theta \in \Theta}\frac{1}{n}\sum_{t=1}^n \tilde{l}_\A(X_t;\T)
 :=\argmin_{\theta \in \Theta} \frac{1}{n}\tilde{H}_{\A,n}(\T),
\end{eqnarray}
 where
\begin{eqnarray*}
\tilde{l}_\A(X_t;\T) &=&
\left\{
\begin{array}{lc}{\displaystyle\Big(\frac{ 1}{\sqrt{{\tilde{\sigma}}_t^2(\T)}}\Big)^\alpha
\Big\{\frac{1}{\sqrt{1+\alpha}}-\Big(1+\frac{1}{\alpha}\Big)
\exp\Big(-\frac{\A}{2}\frac{ X_t^2}{{\tilde{\sigma}}_t^2(\T) }\Big)
\Big\}} &,\A > 0
 \\ \\
 {\displaystyle \frac{X_t^2}{\tilde{\sigma}_t^2(\T)}
+\log\tilde{\sigma}_t^2(\T)}&,\A=0\end{array}\right.
\end{eqnarray*}
and  $\{{\tilde{\sigma}}_t^2(\T)| 1\leq t\leq n\}$ is obtained recursively by
\begin{eqnarray*}\label{tilde1}
{\tilde{\sigma}}_t^2(\T)= \W + \sum_{i=1}^{p}\alpha_i\, X_{t-i}^2+\sum_{j=1}^{q}\beta_j{\tilde{\sigma}}_{t-j}^2(\T).
\end{eqnarray*}
Here, the initial values could be arbitrarily chosen. $l_\A(X_t;\T)$ is defined by replacing $\tilde{\sigma}_t^2(\T)$ in $\tilde{l}_\A(X_t;\T)$ with $\sigma_t^2(\T)$ and it is readily check that $l_\A(X_t;\T)$ is twice continuously differentiable w.r.t $\T$.
For each $\T\in\Theta$ , since $\{\sigma_t^2(\T)|t \in \mathbb{Z}\}$ is strictly stationary and ergodic by assumption {\bf G1},  $\{l_\A(X_t;\T) |t\in \mathbb{Z}\}$ also becomes a strictly stationary ergodic process and thus assumption {\bf M1} in Subsection \ref{Sub3:1} is satisfied. The following assumptions are further made to establish the asymptotics of the MDPDE above.
\begin{enumerate}
\item[\bf G2.] $\theta_0 \in \Theta\,$ and $\Theta$ is a compact subset in $(0,\infty)\times[0,\infty)^{p+q}$.
\item[\bf G3.] If $q>0\,$, ${\mathcal{A}}_{\theta_0}(z)$ and
${\mathcal{B}}_{\theta_0}(z)$ have no common root,
${\mathcal{A}}_{\theta_0}(1) \neq 1$, and $\alpha_{0p}+\beta_{0q}
\neq 0$, where ${\mathcal{A}}_{\T}(z)=\sum_{i=1}^p \A_i\,z^i$
and ${\mathcal{B}}_{\T}(z)=1-\sum_{j=1}^q\beta_j\,z^j$.
(Conventionally, ${\mathcal{A}}_{\T}(z)=0$ if $p=0$ and
${\mathcal{B}}_{\T}(z)=1$ if $q=0$.)
\item[\bf G4.] $\theta_0$ is in the interior of $\Theta$.
\end{enumerate}
The following results are established by \cite{lee:song:2009}, from which and Lemma \ref{inverse} we can see that assumptions {\bf M2} and {\bf M3} hold.
\begin{prop}\label{Lee:Song}
For each $\A \geq 0,$ let $\{ \hat{\T}_{\A,n}\}$ be a sequence of
the MDPDEs satisfying $\eqref{MDPDE.GARCH}$. Suppose that $\epsilon_t$s are i.i.d. random variables from $N(0,1)$.
Then, under assumptions {\bf G1}--{\bf G3}, $\hat{\T}_{\A,n}$ converges
to $\theta_0$ almost surely. If, in addition, assumption {\bf G4} holds, then
\[\sqrt{n}\,(\,\hat{\theta}_{\A,n} -\,\theta_0) \stackrel
{d}{\longrightarrow} N\,\Big(\,0\,,
\mathcal{J}_\A^{-1} \,\mathcal{I}_\A \,\mathcal{J}_\A^{-1}\,\Big)\,,\]
where $\mathcal{J}_\A=\E \big[ \paa l_\A(X_t;\T_0)\big]$ and $\mathcal{I}_\A=\E \big[ \pa_\theta l_\A(X_t;\T_0) \pa_{\theta'} l_\A(X_t;\T_0)\big]$.
\end{prop}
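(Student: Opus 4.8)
The plan is to treat this as a standard M-estimation problem for stationary ergodic sequences, handling the strong consistency and the asymptotic normality separately, and being careful throughout about the gap between the true conditional variance $\sigma_t^2(\T)$ and its recursively computed approximation $\tilde\sigma_t^2(\T)$.

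For the strong consistency, I would first establish that $\{l_\A(X_t;\T)\}$ is strictly stationary and ergodic for each $\T$ (which follows from assumption \textbf{G1}, since it forces $\{\sigma_t^2(\T)\}$ to be stationary and ergodic) and that $\E\sup_{\T\in\Theta}|l_\A(X_t;\T)|<\infty$. With Gaussian errors the weight $\exp(-\frac{\A}{2}X_t^2/\sigma_t^2(\T))$ is bounded, so the only moment issue is the factor $\sigma_t^{-\A}$, which is controlled by the lower bound $\sigma_t^2\geq\W>0$ coming from \textbf{G2}. A uniform ergodic theorem then gives $\sup_{\T}|\frac{1}{n}H_{\A,n}(\T)-\E l_\A(X_t;\T)|\to0$ a.s. Next I would show the initialization error is negligible, namely $\frac{1}{n}\sum_t\sup_\T|l_\A(X_t;\T)-\tilde l_\A(X_t;\T)|\to0$ a.s., using that $|\sigma_t^2(\T)-\tilde\sigma_t^2(\T)|$ decays geometrically and uniformly in $\T$ because $\sup_\T\sum_j\beta_j<1$ by \textbf{G1}. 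Finally, identifiability: by \eqref{EL} the map $\T\mapsto\E l_\A(X_t;\T)$ is minimized at $\T_0$, and \textbf{G3} guarantees that $\sigma_t^2(\T)=\sigma_t^2(\T_0)$ a.s. forces $\T=\T_0$, so the minimizer is unique. Standard M-estimation arguments then yield $\hat\theta_{\A,n}\to\T_0$ a.s.

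For the asymptotic normality, since $\T_0$ is interior by \textbf{G4}, the first-order condition $\pa_\T\tilde H_{\A,n}(\hat\theta_{\A,n})=0$ holds for large $n$, and a mean-value expansion gives
\[\sqrt{n}(\hat\theta_{\A,n}-\T_0)=-\Big(\tfrac{1}{n}\paa\tilde H_{\A,n}(\bar\theta_n)\Big)^{-1}\tfrac{1}{\sqrt{n}}\pa_\T\tilde H_{\A,n}(\T_0)\]
for some $\bar\theta_n$ between $\hat\theta_{\A,n}$ and $\T_0$. The three ingredients are then: (i) a martingale central limit theorem for the score, $\frac{1}{\sqrt{n}}\pa_\T H_{\A,n}(\T_0)=\frac{1}{\sqrt{n}}\sum_t\pa_\T l_\A(X_t;\T_0)\stackrel{d}{\to}N(0,\mathcal{I}_\A)$, valid because $\{\pa_\T l_\A(X_t;\T_0)\}$ is a stationary ergodic martingale difference with respect to $\mathcal{F}_t$ with finite covariance $\mathcal{I}_\A$; (ii) the uniform ergodic theorem for the Hessian, $\frac{1}{n}\paa H_{\A,n}(\bar\theta_n)\to\mathcal{J}_\A$ a.s., using continuity of $\paa l_\A$, the consistency just proved, and an integrable envelope near $\T_0$; and (iii) showing that replacing $H_{\A,n}$ by $\tilde H_{\A,n}$ perturbs the score by $o_P(\sqrt{n})$ and the Hessian by $o(n)$ a.s. Combining these via Slutsky's theorem, together with positive definiteness of $\mathcal{J}_\A$, delivers the sandwich covariance $\mathcal{J}_\A^{-1}\mathcal{I}_\A\mathcal{J}_\A^{-1}$.

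The main obstacle is the volatility approximation control in part (iii), and to a lesser extent in the consistency step. One must bound not only $|\sigma_t^2(\T)-\tilde\sigma_t^2(\T)|$ but also the differences of its first and second $\T$-derivatives, since these enter the score and Hessian. The derivatives of $\sigma_t^2(\T)$ are themselves infinite series in the past observations, so establishing that their initialization errors decay geometrically and remain summable after division by $\sqrt{n}$ requires a careful recursive argument together with sufficiently many moments of $X_t$; this is precisely the content one must verify to obtain analogues of conditions \textbf{S2} and \textbf{S3} in the GARCH setting. The moment bookkeeping needed to guarantee finiteness and positive definiteness of $\mathcal{I}_\A$ and $\mathcal{J}_\A$ is otherwise routine given the boundedness of the Gaussian weight, but should be checked to justify the martingale CLT.
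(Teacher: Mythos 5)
This proposition is not proved in the paper at all: it is quoted directly from \cite{lee:song:2009}, with the paper adding only Lemma \ref{inverse} (positive definiteness of $\mathcal{I}_\A$ and $\mathcal{J}_\A$) so that assumptions {\bf M2} and {\bf M3} can be invoked. Your outline --- strong consistency via a uniform ergodic theorem, identifiability from {\bf G3}, and geometric decay of the initialization error, followed by asymptotic normality via the first-order condition, a Taylor expansion, a martingale-difference CLT for the score, and $o_P(\sqrt{n})$ (score) and $o(n)$ (Hessian) control of the $\sigma_t^2$-versus-$\tilde\sigma_t^2$ gap --- is precisely the standard argument used in that reference, and the same ingredients reappear in the paper's own auxiliary results (Lemma \ref{root.n.con} derives $\sqrt{n}$-consistency from exactly this expansion, and Lemma \ref{S56} establishes the {\bf S2}/{\bf S3}-type approximation bounds you flag as the main obstacle). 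So your proposal is correct and takes essentially the same route as the source on which the paper relies.
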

\noindent As a consistent estimator of $\mathcal{I}_\A$,  let us consider\vspace{-0.19cm}
\begin{eqnarray}\label{I.GARCH}
\hat{\mathcal{I}}_{\A,n}=\frac{1}{n}\sum_{t=1}^n \pa_{\theta}
\tilde l_\A(X_t; \hat\T_{\A,n})\,\pa_{\theta'}\tilde{l}_\A(X_t;\hat \T_{\A,n}).
\end{eqnarray}
For the consistency of the estimator, see Lemma 6 in \cite{song:kang:2019}. Then, based on (\ref{I.GARCH}) and $\tilde H_{\A,n}(\T)$
in (\ref{MDPDE.GARCH}), we can construct a detector and stopping time for GARCH models  as in (\ref{detector.ts}) and (\ref{stop.ts}), respectively.

We now check whether the conditions introduced in Subsections \ref{Sub3:2} are fulfilled.
Under assumptions {\bf G1}--{\bf G4}, conditions {\bf S1}--{\bf S3} were already shown in \cite{lee:song:2009} when proving the asymptotic normality of the MDPDE. For details, see (i) and (iv) on page 337 of \cite{lee:song:2009}. Condition {\bf S4} can be found in Lemma 3 in \cite{song:kang:2019}. Hence, the result in Theorem \ref{thm4} holds for the GARCH models with Gaussian errors.

For the consistency of the test procedure, we assume that assumptions {\bf G1}--{\bf G4} are also fulfilled for $\T_1(\neq \T_0)$. Then, similarly to the above, assumptions {\bf M1}--{\bf M3} are also satisfied  for $\T_1$. Since condition {\bf S7} is a generally accepted assumption when $\T_1\neq\T_0$, we just deal with  conditions {\bf S5} and {\bf S6}. For this, we introduce a further moment condition, that is, $\E [(X_t^{\T_1})^{4+\epsilon}] <\infty$ for some $\epsilon>0$. For $\T_1$ satisfying this moment condition, Lemma \ref{S56} shows that \vspace{-0.19cm}
\[ \sum_{t>n+k^*}^{n+k_n} \sup_{\T\in \widetilde \Theta} \big\| \pa_{\theta}\,l_\A(X^{\T_1}_t;\theta)-\pa_{\theta}\,\tilde l_\A(X^{\T_1}_t;\theta)\big\|= O(1)\quad a.s.,\]
where we note that  $\widetilde \Theta$ can be taken as any subset of $\Theta$ whose elements have components $\A_i$ and $\B_j$ bounded away from zero. Since $\T_0$ is in the interior of $\Theta$, one can take a neighborhood $N_3(\T_0)$ satisfying conditions {\bf S5} and {\bf S6}. Hence, the consistency of the test procedure is asserted.
\begin{thm}\label{thm6}
Let $\tilde k_{\A,n}$ be the stopping time in (\ref{stop.ts}) that is constructed using (\ref{I.GARCH}) and $\tilde H_{\A,n}(\T)$ in (\ref{MDPDE.GARCH}). Suppose that assumptions {\bf G1}-{\bf G4} hold for $\T_0$ and $\T_1(\neq\T_0)$. If the boundary function $b(\cdot)$ satisfies assumption {\bf B}, we have that under $H_0$,
\begin{eqnarray*}
\lim_{n\rightarrow\infty} P\big( \tilde k_{\A,n} <\infty\ |\ H_0 \big)=P\left(  \sup_{0<s<1} \frac{ \big\| W_d(s)\big\|}{b(s/(1-s))} >1 \right),
\end{eqnarray*}
where $\{W_d(s)\}$ is a  $d$-dimensional standard Wiener process and $d=p+q+1$.  In addition, if $\E [(X_t^{\T_1})^{4+\epsilon}] <\infty$ for some $\epsilon>0$ and  $\sup_{t>0} b(t) <\infty$, then under $H_1$,
\begin{eqnarray*}
\lim_{n\rightarrow\infty} P\big( \tilde k_{\A,n} <\infty\ |\ H_1 \big)=1.
\end{eqnarray*}
\end{thm}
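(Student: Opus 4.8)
The plan is to obtain Theorem \ref{thm6} as a direct specialization of the general time-series results, Theorems \ref{thm4} and \ref{thm5}, to the GARCH($p,q$) model (\ref{GARCH}) with $d=p+q+1$, so that the whole task reduces to checking that the hypotheses of those two theorems hold in this model. First I would record that the stopping time $\tilde k_{\A,n}$ here is exactly the object in (\ref{stop.ts}) built from the GARCH objective $\tilde H_{\A,n}$ in (\ref{MDPDE.GARCH}) and the estimator $\hat{\mathcal I}_{\A,n}$ in (\ref{I.GARCH}), and that $b(\cdot)$ is assumed to satisfy {\bf B}. Hence nothing new has to be proved about the detector itself; only the model-specific regularity conditions must be verified.

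For the null conclusion I would invoke Theorem \ref{thm4}, whose hypotheses are {\bf M1}--{\bf M3}, {\bf B}, and {\bf S1}--{\bf S4}. Assumption {\bf M1} follows because, under {\bf G1}, $\{\sigma_t^2(\T)\}$ is strictly stationary and ergodic for each $\T$, hence so is $\{l_\A(X_t;\T)\}$. Assumptions {\bf M2} and {\bf M3} are supplied by Proposition \ref{Lee:Song} (strong consistency and asymptotic normality of the GARCH MDPDE, hence $\sqrt n$-consistency) together with the invertibility statement in Lemma \ref{inverse}; the consistency of $\hat{\mathcal I}_{\A,n}$ needed in the detector is Lemma~6 of \cite{song:kang:2019}. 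Conditions {\bf S1}--{\bf S3}, namely the martingale-difference property of $\pa_\T l_\A(X_t;\T_0)$ and the two almost-sure bounds comparing $l_\A$ with $\tilde l_\A$, were already established by \cite{lee:song:2009} in proving asymptotic normality (parts (i) and (iv), p.~337), and {\bf S4} is Lemma~3 of \cite{song:kang:2019}. With {\bf B} assumed, Theorem \ref{thm4} applies verbatim and yields the stated Wiener-process limit.

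For the alternative conclusion I would appeal to Theorem \ref{thm5}. Assuming {\bf G1}--{\bf G4} also for $\T_1\neq\T_0$, the same arguments give {\bf M1}--{\bf M3} for $\{X_t^{\T_1}\}$, while {\bf S7} is the standard non-degeneracy condition $\E\|\pa_\T l_\A(X_t^{\T_1};\T_0)\|>0$, which holds whenever the unique minimizer of $\E\, l_\A(X_t^{\T_1};\T)$ is $\T_1\neq\T_0$ and differentiation passes through the expectation. The remaining, genuinely model-specific work is {\bf S5} and {\bf S6} on a closed neighborhood $N_3(\T_0)$. Here I would bring in the extra moment assumption $\E[(X_t^{\T_1})^{4+\epsilon}]<\infty$ and use Lemma \ref{S56}, which shows that $\sum_{t>n+k^*}^{n+k_n}\sup_{\T\in\widetilde\Theta}\|\pa_\T l_\A(X_t^{\T_1};\T)-\pa_\T \tilde l_\A(X_t^{\T_1};\T)\|$ is $O(1)$ almost surely on any $\widetilde\Theta\subset\Theta$ whose coordinates $\A_i,\B_j$ are bounded away from zero; dividing by $k_n-k^*$ gives {\bf S6}, and the same moment control yields the integrability {\bf S5}. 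Since $\T_0$ is interior to $\Theta$, such an $N_3(\T_0)$ can be chosen, and Theorem \ref{thm5} then delivers $P(\tilde k_{\A,n}<\infty\mid H_1)\to1$.

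The main obstacle I anticipate is precisely {\bf S5}--{\bf S6} for the misspecified process $\{X_t^{\T_1}\}$: unlike the null case, the score $\pa_\T l_\A$ is evaluated along a sequence whose conditional variance differs from the one used to build $\tilde\sigma_t^2(\T)$, so one must control both the geometric decay of the initial-value error in the volatility recursion and the tails of the exponential weights $\exp(-\tfrac{\A}{2}X_t^2/\tilde\sigma_t^2(\T))$ entering the GARCH $\tilde l_\A$. This is exactly what forces the stronger $4+\epsilon$ moment condition, and it is handled in Lemma \ref{S56}; once that lemma is in hand, everything else is routine bookkeeping of the type already carried out for the QMLE and the MDPDE in the cited works.
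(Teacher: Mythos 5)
Your proposal is correct and follows essentially the same route as the paper: the paper likewise derives Theorem \ref{thm6} by specializing Theorems \ref{thm4} and \ref{thm5}, checking {\bf M1} via {\bf G1}, {\bf M2}--{\bf M3} via Proposition \ref{Lee:Song} and Lemma \ref{inverse}, {\bf S1}--{\bf S3} via \cite{lee:song:2009}, {\bf S4} via Lemma~3 of \cite{song:kang:2019}, and {\bf S5}--{\bf S6} via Lemma \ref{S56} under the $4+\epsilon$ moment condition, with {\bf S7} taken as the standard non-degeneracy assumption. Your observation that Lemma \ref{S56} is the only genuinely model-specific work (and the reason the extra moment condition is needed) matches the paper's structure exactly, since that lemma in fact establishes both (\ref{S5}) and (\ref{S6}) directly.
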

\begin{remark}\label{berkes}
The detector $\tilde D_{\A=0,n}(k)$ in the GARCH models above  is essentially equal to the score based detector of \cite{berkes.et.al:2004seq} that is  constructed based on the quasi-MLE of \cite{berkes.et.al:2003}. Their sequential procedure has been recently extended to ARMA-GARCH models by \cite{song:kang:2020}.
\end{remark}

\section{Simulation study}\label{Sec:sim}
In the present section, we evaluate the finite sample performance of the proposed sequential procedure and compare with the score based procedure in the following GARCH(1,1) models:
\begin{eqnarray*}
\begin{split}
X_t &= \sigma_t(\T)\,\epsilon_t,\\
\sigma_t^2(\T)&= \W+\A_1 X_{t-1}^2+\B_1\sigma^2_{t-1}(\T),
\end{split}
\end{eqnarray*}
where $\{\epsilon_t \}$ is a sequence of i.i.d. random variables from $N(0,1)$ and $\T=(\W,\A_1,\B_1)$.
We use the maximum norm  and  the constant boundary function to implement the test procedures at the significance level of 5\%. The corresponding critical value for GARCH(1,1) model is 2.632, which is given in Table \ref{tab:cr.value}. The sample sizes of the historical data under consideration are $n=$500, 1000, and 1500, and the detector $\tilde D_{\A,n}(k)$ is monitored until $t=n+2000$. If the detector $\tilde D_{\A,n}(k)$ crosses over the critical value during the monitoring period, $H_0$ is rejected. The empirical sizes and powers are calculated as the ratio of the rejection number out of 2000 repetitions.  We consider $\A$ values in $\{0, 0.1, 0.2, 0.3, 0.5\}$. As mentioned in Remark \ref{berkes}, $\tilde D_{\A,n}(k)$ with $\A=0$ represents the score based detector. We shall check the robustness of the proposed procedure using $\tilde D_{\A,n}(k)$ with $\A>0$ and compare with the procedure with $\A=0$.\\

\noindent {\bf (i) Simulation results for uncontaminated cases}

We first address the cases where data are not contaminated by outliers. The following two parameters are considered to evaluate the empirical sizes:\vspace{-0.2cm}
\[ \T_1=(0.2, 0.3, 0.2),\quad \T_2=(0.2, 0.1, 0.8), \vspace{-0.2cm}\]
 where the second one is employed to see the performance in a more volatile situation. The empirical sizes are depicted in  Figure \ref{fig:size}, where $k$ denotes the monitoring time after $n$. We first note that some oversizes are observed particularly when $n$=500 and $k$=2000
 \begin{figure}[!t]
\includegraphics[height=0.6\textwidth,width=1.0\textwidth]{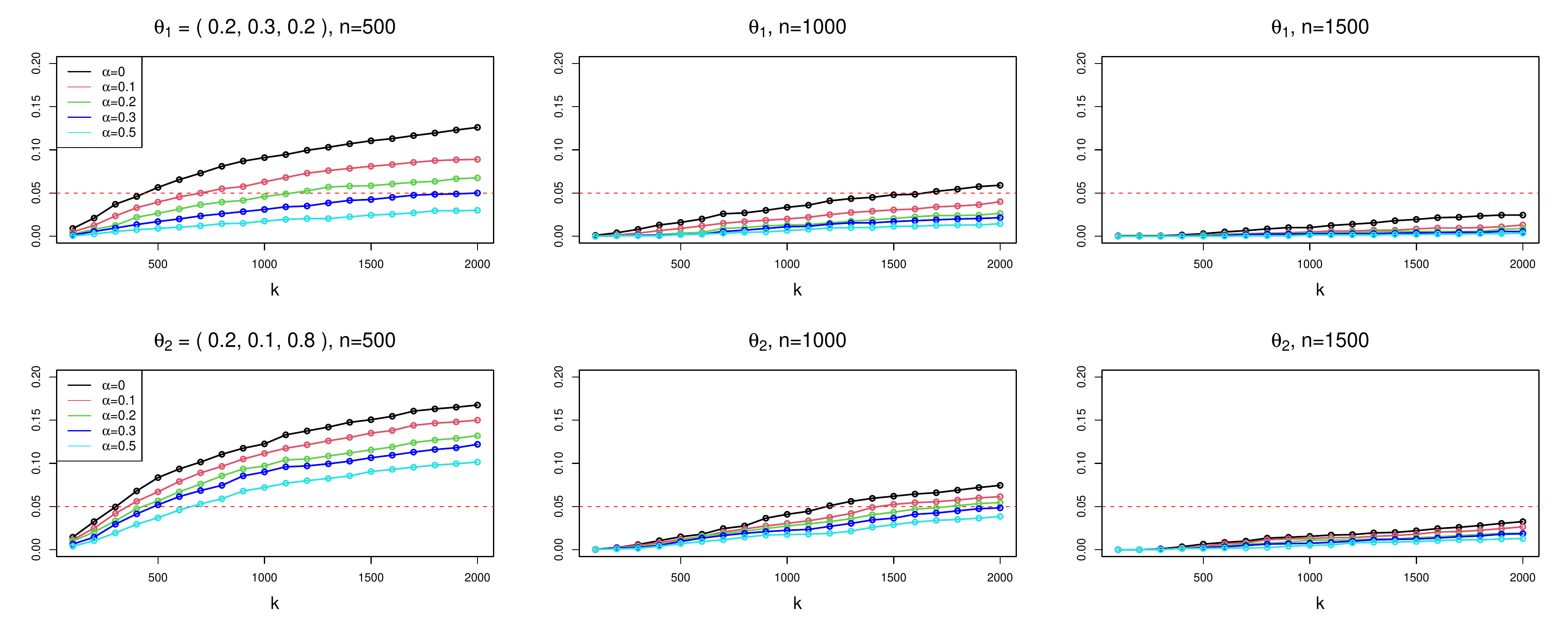}\vspace{-1cm}
\caption{\small The plots of the empirical sizes when no outlier exists. The dashed red lines represent the significance level of  5\%.} \label{fig:size}
\end{figure}
but all the empirical sizes become less than the nominal level 0.05 as $n$ increases.
This means that when no outlier exists, both of the score based and the proposed procedures perform properly under $H_0$. Additionally,  we can see that the empirical sizes decrease with an increase in $\A$, and a little higher sizes are observed  in the case of the parameter $\T_2$.

\begin{figure}[!t]
\includegraphics[height=0.9\textwidth,width=1.0\textwidth]{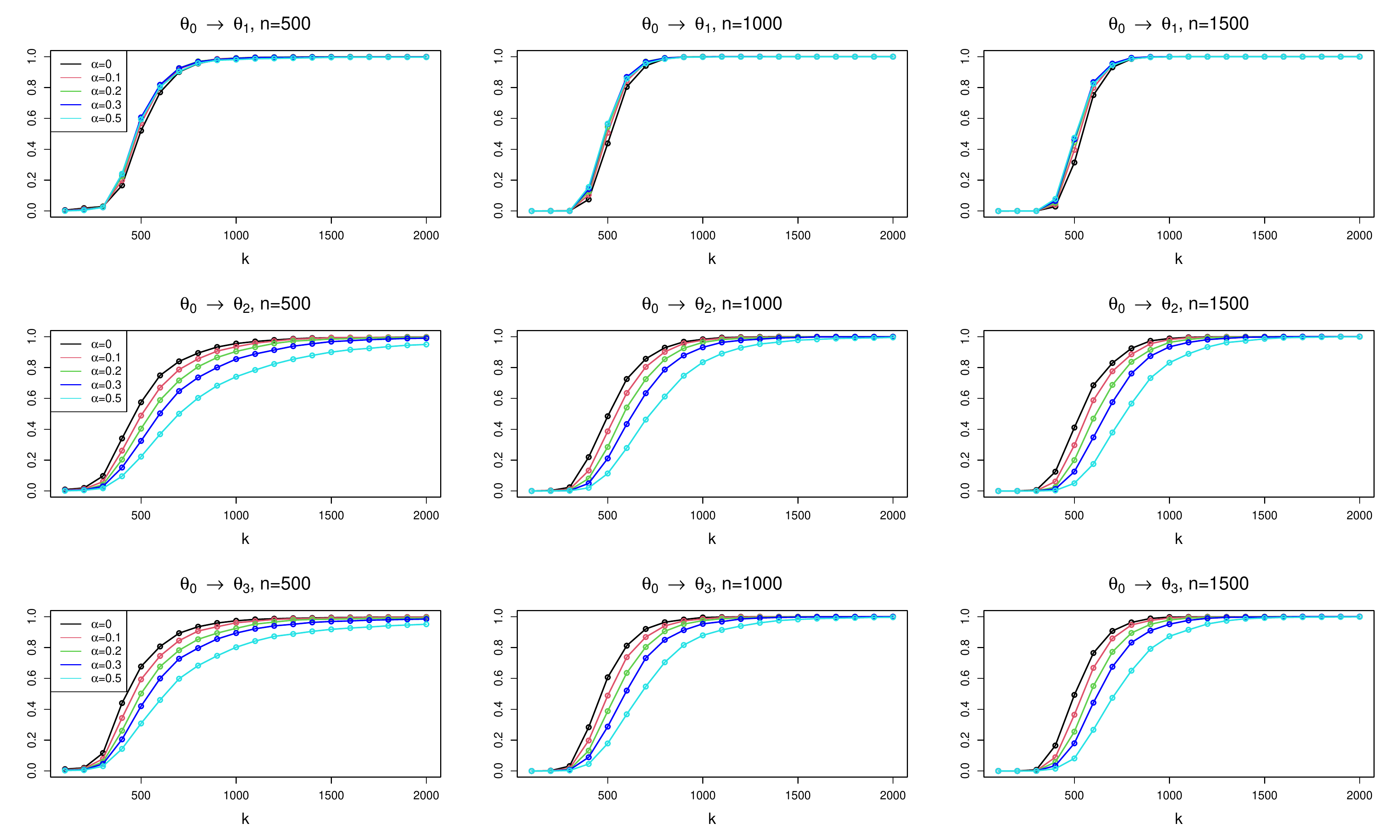}\vspace{-1cm}
\caption{\small The plots of the empirical powers   when the parameter changes at $k^*=250$  and no outlier exists.}\label{fig:power}\vspace{0.8cm}
\includegraphics[height=0.3\textwidth,width=1.0\textwidth]{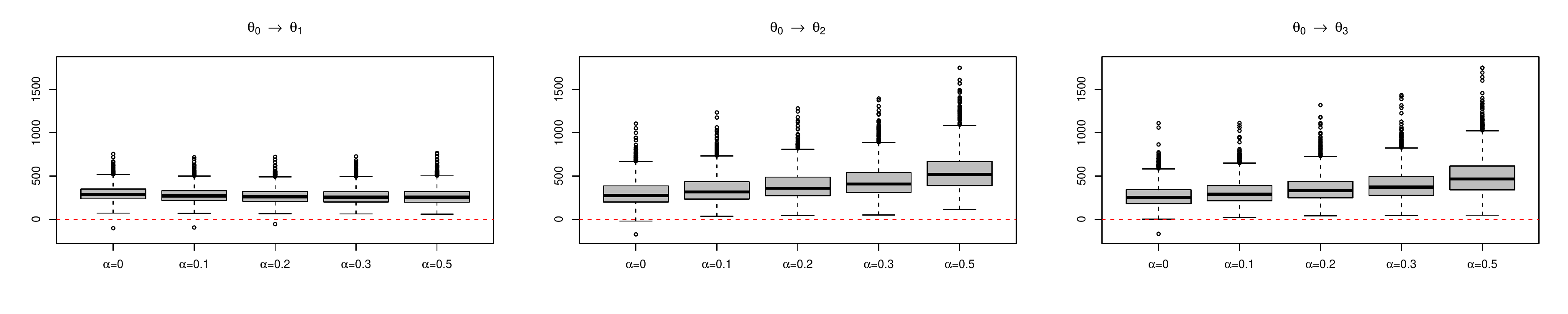}\vspace{-1cm}
\caption{\small The box plots of the delay times  for $n=1500$ when no outlier exists.}\label{fig:delay}
\end{figure}

To evaluate the empirical powers, we set the parameter $\T_0:=(0.2,0.2,0.6)$ as the default parameter and then change to  $\T_1$,  $\T_2$,  and $\T_3$=(0.5, 0.2, 0.6) at $k^*=250$.  The first change  s  a change to a less  volatile state,  which is set to be relatively larger than the second change  to $\T_2$.  The third one reflects the level shift of the volatility without any change in  $\A_1$ and $\B_1$.
The results are presented in Figure \ref{fig:power}.  One can see that all the procedures yields reasonably good powers approaching one after the change point $k^*$=250.  In particular,  the empirical powers of the score based procedure approaches more rapidly.  As expected,  it is observed that the power of the proposed procedure decreases generally  with an increase in  $\A$.  It should  be  noted that although the powers of the procedure with a large $\A$  such as $\A=0.5$ get closer to one as $k$ increases,  the powers evaluated near after the change point are quite low compared to the procedure with $\A=0$ or $\A=0.1$.   See,  for example,  the values at $k=500$  in the middle and bottom panels in Figure \ref{fig:power}.   This indicates that too large $\A$ can decreases the efficiency of the proposed procedure under $H_1$.  When the change is comparatively large,  all the procedures perform similarly,  see the top panel.
 Our findings can also be seen  in the box plots of the delay times, i.e., $\tilde k^\A_n -k^*,$ displayed in Figure \ref{fig:delay},  where we can see the general tendency that the smaller $\A$, the shorter the detection delay tends to be.  Table \ref{tab:delay} presents the  average delay times and  its ratios  to that of the score based procedure.   The delay time tends to be longer as the sample size of the historical data $n$ increases.  The procedure with $\A$ in [0.1, 0.3] is observed to take  about 1.1 to 1.5 times  longer than the score based one in the second and the third changes.

Overall, our proposed procedure performs reasonably in the case of no outlying observations. The score based procedure generally outperforms the proposed procedure and the performance of our procedure decreases as $\A$ increases.  It should also be noted that  a large tuning parameter $\A$ can lead to some   loss in efficiency and thus,  based on the above results,  we first recommend practitioner to use $\A$ less than 0.5 when data appears not to be contaminated or the degree of the contamination does not seem to be severe.

\begin{table}[htbp]
    \caption{\small Average delay times of the score based procedure($\A=0$) and the proposed procedure ($\A>0$).}
   \tabcolsep=4.2pt
     \renewcommand{\arraystretch}{1.2}
   {\scriptsize
   \centering
  \begin{tabular}{rccccccccccccccccc}
\toprule
        & \multicolumn{5}{c}{$\theta_0 \rightarrow \theta_1$} &       & \multicolumn{5}{c}{$\theta_0 \rightarrow \theta_2$} &       & \multicolumn{5}{c}{$\theta_0 \rightarrow \theta_3$} \\
\cmidrule{2-6}\cmidrule{8-12}\cmidrule{14-18}      & $\A$  &       &       &       &       &       & $\A$  &       &       &       &       &       & $\A$  &       &       &       &  \\
\multicolumn{1}{c}{$n$} & 0     & 0.1   & 0.2   & 0.3   & 0.5   &       & 0     & 0.1   & 0.2   & 0.3   & 0.5   &       & 0     & 0.1   & 0.2   & 0.3   & 0.5 \\
\cmidrule{2-6}\cmidrule{8-12}\cmidrule{14-18}\multicolumn{1}{c}{500} & 245   & 232   & 222   & 218   & 219   &       & 216   & 256   & 303   & 349   & 450   &       & 171   & 210   & 250   & 290   & 382 \\
      & [1.00] & [0.95] & [0.91] & [0.89] & [0.89] &       & [1.00] & [1.18] & [1.40] & [1.61] & [2.08] &       & [1.00] & [1.23] & [1.46] & [1.70] & [2.23] \\
\multicolumn{1}{c}{1000} & 266   & 249   & 240   & 235   & 233   &       & 255   & 293   & 336   & 380   & 473   &       & 215   & 255   & 295   & 338   & 422 \\
      & [1.00] & [0.94] & [0.90] & [0.88] & [0.88] &       & [1.00] & [1.15] & [1.32] & [1.49] & [1.85] &       & [1.00] & [1.19] & [1.37] & [1.57] & [1.96] \\
\multicolumn{1}{c}{1500} & 287   & 270   & 260   & 257   & 255   &       & 277   & 316   & 360   & 408   & 515   &       & 252   & 291   & 330   & 373   & 466 \\
      & [1.00] & [0.94] & [0.91] & [0.90] & [0.89] &       & [1.00] & [1.14] & [1.30] & [1.47] & [1.86] &       & [1.00] & [1.15] & [1.31] & [1.48] & [1.85] \\
\bottomrule\\ \vspace{-0.6cm}
\end{tabular}%
      }
      {\small The figures in the brackets are the ratios of the average delay time to that of the score based procedure.}
  \label{tab:delay}%
\end{table}%

\newpage
\noindent {\bf (ii) Simulation results for contaminated cases}

 Next, we explore the performances in the contaminated cases. For this, we consider the following three scenarios: (i) outliers are observed in the historical data (say, H-type); (ii) outliers occur during the monitoring period (M-type); (iii) outliers are involved in both of the historical data and the monitoring period (HM-type). In the case (i), we generate the historical data $\{X_1,\cdots,X_n\}$ as the following scheme: $X_t=X_{t,o} + s\cdot p_t \cdot sign(X_{t,o})$, where $\{X_{t,o}\}$ is the uncontaminated path
\begin{figure}[!t]
\includegraphics[height=0.6\textwidth,width=1.0\textwidth]{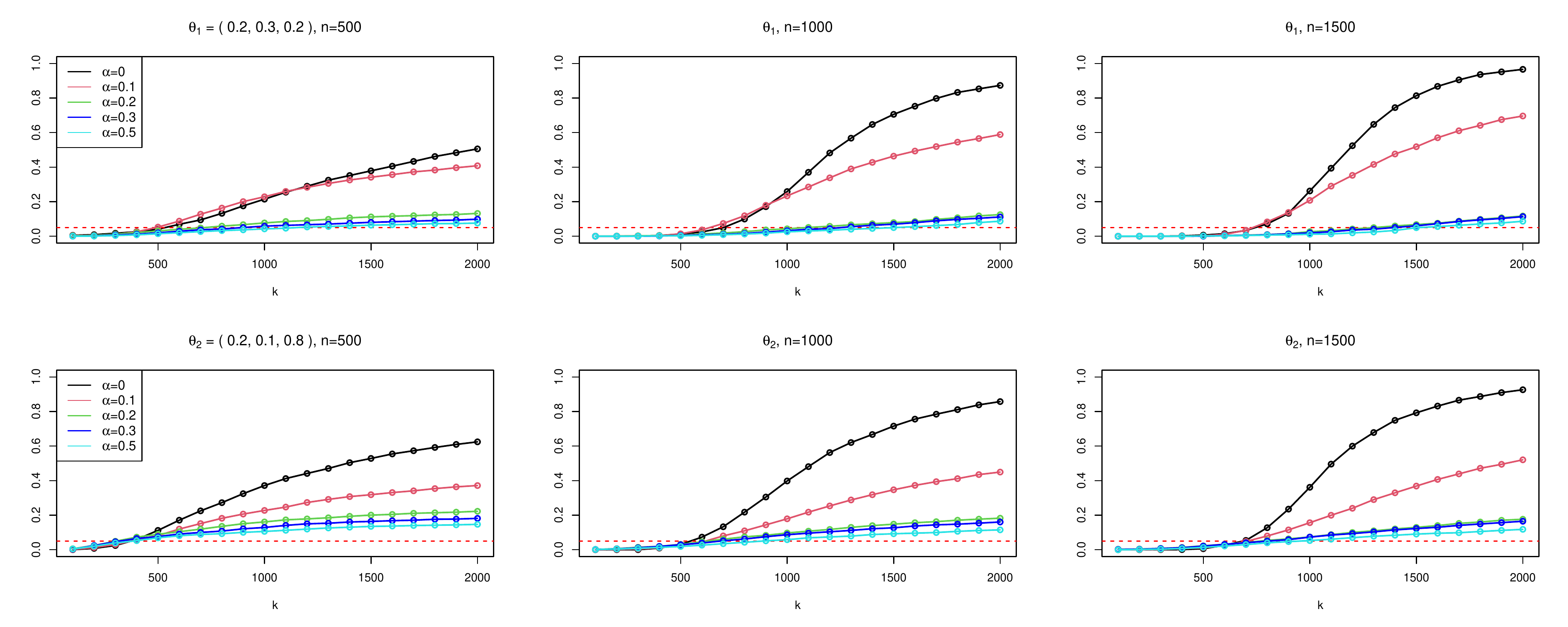}\vspace{-1cm}
\caption{\small The plots of the empirical sizes when  outliers are included in the historical data (H-type case).}\label{fig:size_h}\vspace{0.3cm}
\includegraphics[height=0.6\textwidth,width=1.0\textwidth]{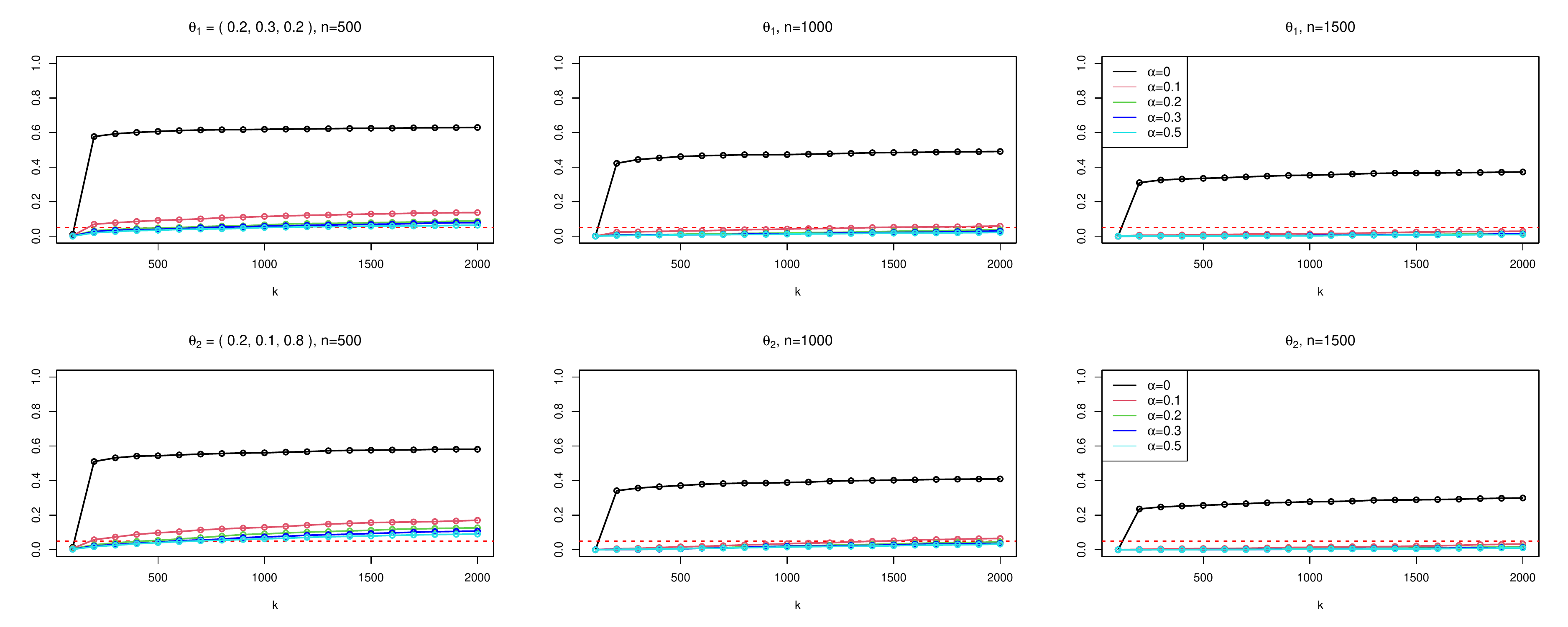}\vspace{-1cm}
\caption{\small The plots of the empirical sizes when  outliers are included in the arrived data (M-type case).}\label{fig:size_m}
\end{figure}
 from the GARCH(1,1) model above,  $s=5\sqrt{\W/(1-\A_1-\B_1)}$, i.e., 5 times the standard deviation of  the GARCH process, and $\{p_t\}$ is a sequence of i.i.d. random variables from Bernoulli distribution with parameter $p$. In this simulation, $p=0.03$ is considered. $\{X_{t,o}\}$ and $\{p_t\}$ are assumed to be independent. For the case (ii), we use the same method to contaminate $\{X_{n+1},\cdots, X_{n+200}\}$, which describes the situation that outliers occurs before the change point $k^*=250$. In the case (iii), we contaminate both of the historical data and $\{X_{n+1},\cdots,X_{n+200}\}$ using the aforementioned method.

\begin{figure}[!t]
\includegraphics[height=0.6\textwidth,width=1.0\textwidth]{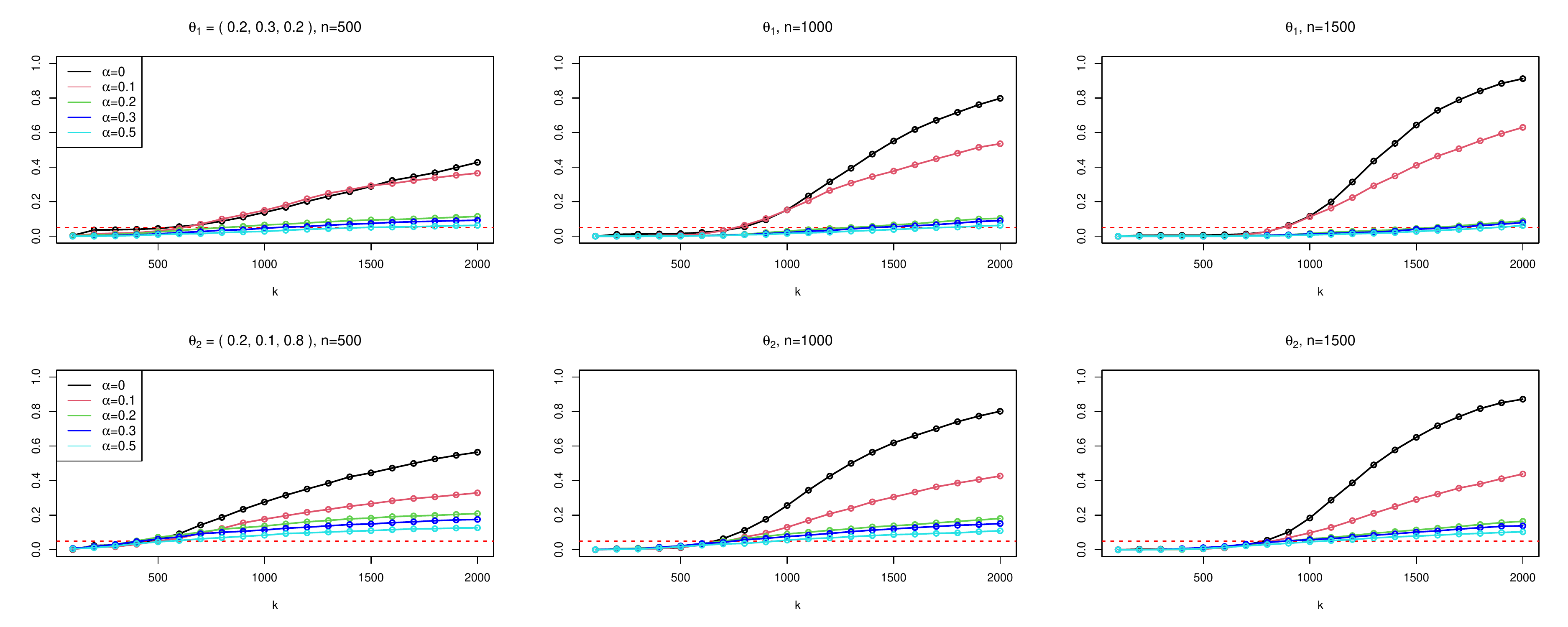}\vspace{-0.75cm}
\caption{\small The plots of the empirical sizes when  outliers are included in both  the historical and the arrived data (HM-type case).}\label{fig:size_hm}
\end{figure}

The empirical sizes for  the three contaminated cases are given in Figures \ref{fig:size_h} - \ref{fig:size_hm}.  As shown in the figures, the score based procedure exhibits severe size distortions for all three scenarios.  On the other hand, the proposed procedure, in particular  with $\A\geq 0.2$, yields fairly stable sizes. The procedure with $\A=0.1$ is observed to be somewhat distorted by outliers except for M-type case. As can be seen in Figure \ref{fig:size_m}, the score based procedure is distorted near the point where the outliers occur whereas our procedure with $\A>0$ are hardly affected. From this, it can be surmised that the proposed procedure is more robust to outliers occurring in the monitoring period. Interestingly, the results of the third scenario shown in Figure \ref{fig:size_hm} are similar to the first case in Figure \ref{fig:size_h}. The procedure damaged by the outliers in the historical data seems to  become insensitive to the newly occurring outliers.

The empirical powers and the corresponding box plots of the delay times under H-, M-, and HM- type contaminations are  presented in Figures \ref{fig:power_H}-\ref{fig:delay_HM}, respectively.  From Figures \ref{fig:power_H} and  \ref{fig:power_HM}, we can observe the significant power losses of the score based procedure whereas the procedure with $\A\geq0.2$ shows the power curves similar to the ones in Figure \ref{fig:power}, i.e, obtained in the uncontaminated cases. This indicates that the proposed procedure with $\A \geq0.2$ is adequately robust against the outliers. Although some power loses are observed for smaller $\A=0.1$ in the first and third scenarios,  for  all $\A$ values considered, our procedure shows strong robustness to outliers in the arrived data, see Figure \ref{fig:power_M}, where the higher powers of the score based procedure near $k$=200 reflect that the score based procedure is compromised by the outliers in the monitoring period. The box plots in Figures \ref{fig:delay_H} and \ref{fig:delay_HM} also consistently show the distortions of the score based procedure and the robustness of our procedure with $\A\geq 0.2$.   The early detections of the score based procedure in  Figure \ref{fig:delay_M} are due to the outliers that  occur before the change point.  This phenomenon is clearly observed particularly when $n$ is small.  For example,  in Table \ref{tab:delay_outliers},  we can see minus average delay times in the case of M-type outlier and  $n=500.$  Here   $d_\A$ in the brackets is  defined as follows:
\[ d_\A :=\frac{\mbox{ Average delay time of the procedure with } \A \mbox { in contaminated case}}{\mbox{ Average delay time of the procedure with } \A \mbox { in uncontaminated case}}.\]
A value of  $d_\A$ closer  to one means that the procedure is less affected  by the corresponding contamination.  It can be seen in the table that $d_\A$ with $\A \geq 0.2$ is  much closer to one than $d_{\A=0}$.  
 
Overall, our findings strongly support the validity and robustness of the proposed sequential procedure.  In particular,  our procedure with a small $\A$ is sufficiently robust against outliers and powerful as the score based procedure.  For the reasons mentioned in Remark \ref{choice.A},  we do not select an optimal $\A$,  but  we recommend to consider  $\A$ values in $[0.2,0.3]$ based on our simulation results.


\begin{figure}
\includegraphics[height=0.9\textwidth,width=1.0\textwidth]{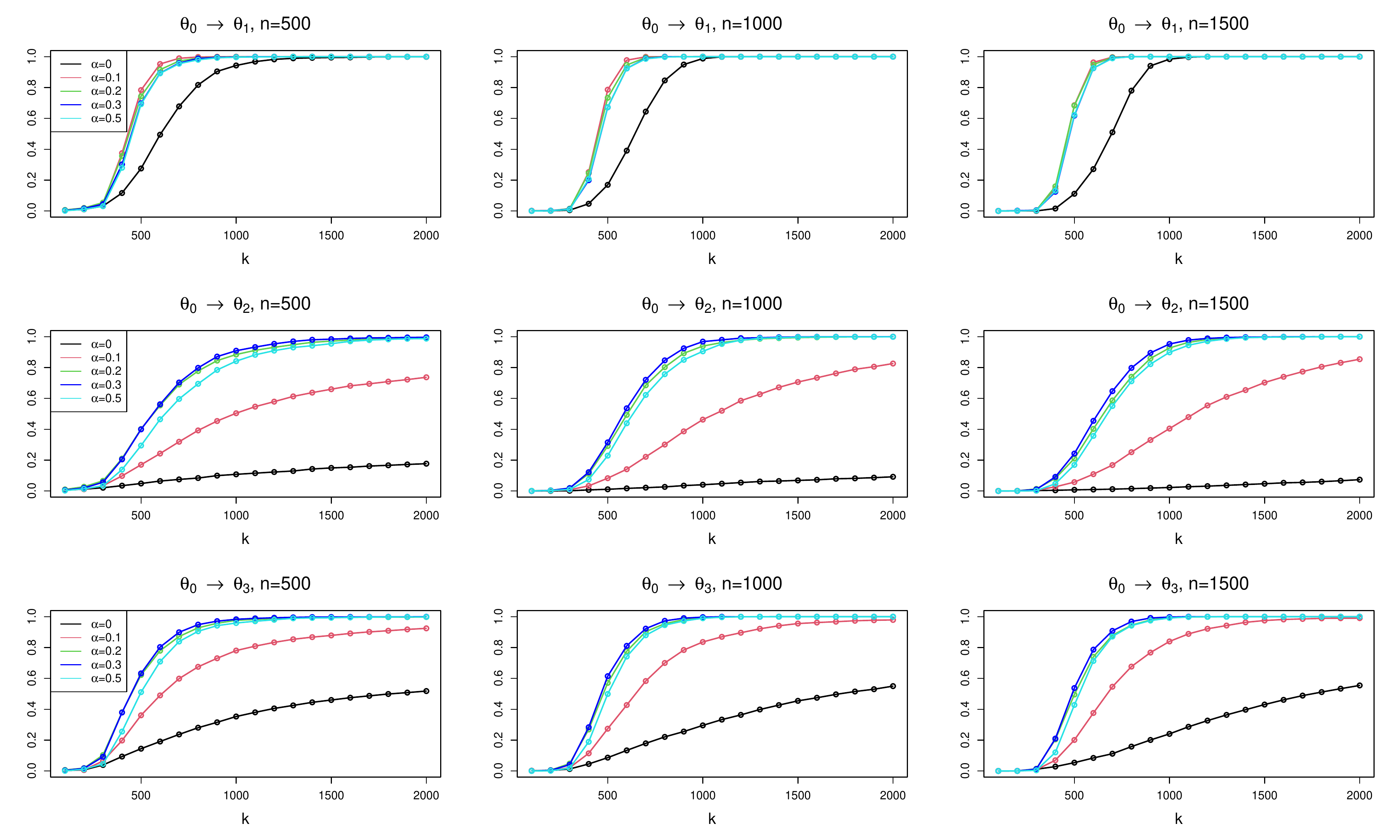}\vspace{-0.75cm}
\caption{\small The plots of the empirical powers when the parameter changes at $k^*=250$ and outliers are included in the historical data (H-type case).}\label{fig:power_H}\vspace{0.8cm}
\includegraphics[height=0.3\textwidth,width=1.0\textwidth]{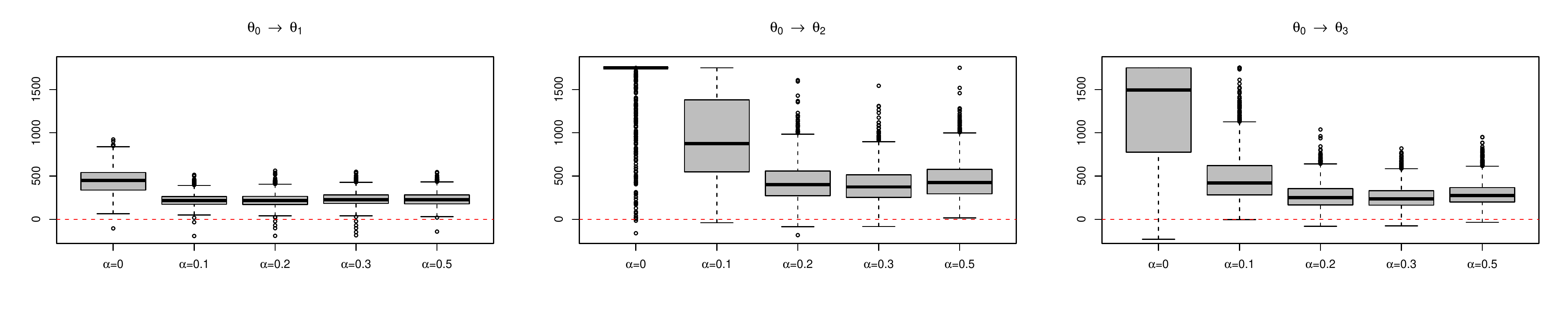}\vspace{-1cm}
\caption{\small The box plots of the delay times  for $n=1500$ when  outliers are included in the historical data.}\label{fig:delay_H}
\vspace{0.8cm}
\end{figure}

\begin{figure}
\includegraphics[height=0.9\textwidth,width=1.0\textwidth]{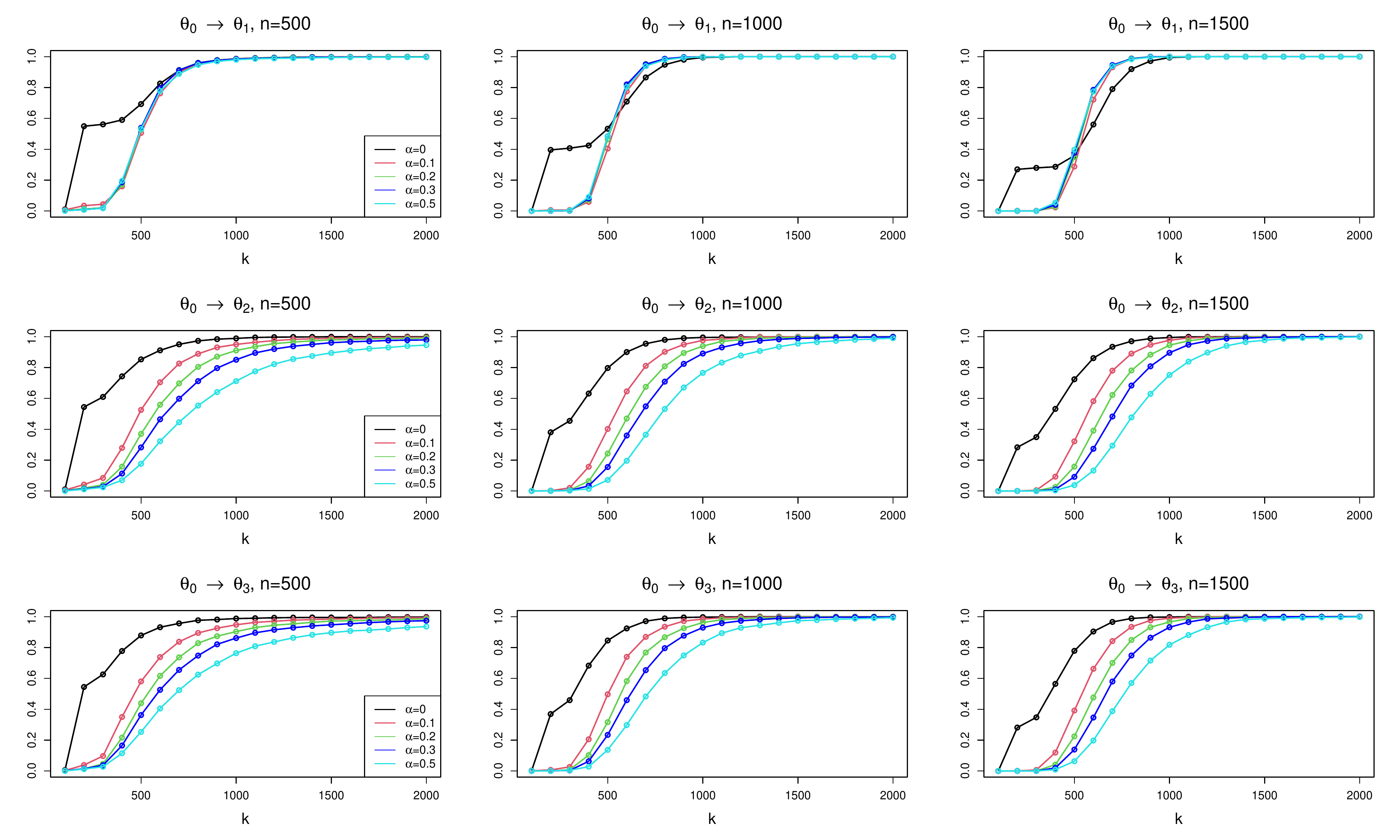}\vspace{-0.75cm}
\caption{\small The plots of the empirical powers when the parameter changes at $k^*=250$ and outliers are included in the arrived data (M-type case).}\label{fig:power_M}\vspace{0.8cm}
\includegraphics[height=0.3\textwidth,width=1.0\textwidth]{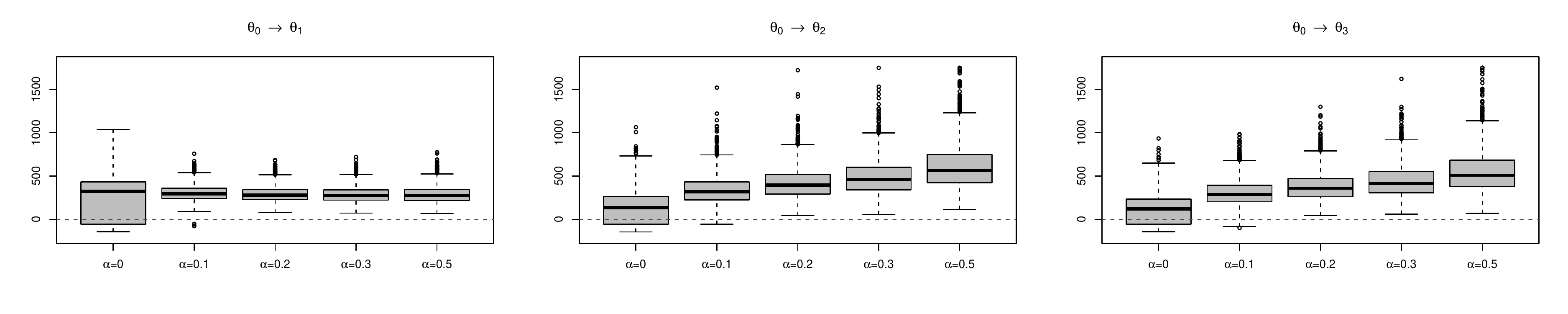}\vspace{-1cm}
\caption{\small The box plots of the delay times  for $n=1500$ when  outliers are included in the arrived data.}\label{fig:delay_M}
\vspace{0.8cm}
\end{figure}

\begin{figure}
\includegraphics[height=0.9\textwidth,width=1.0\textwidth]{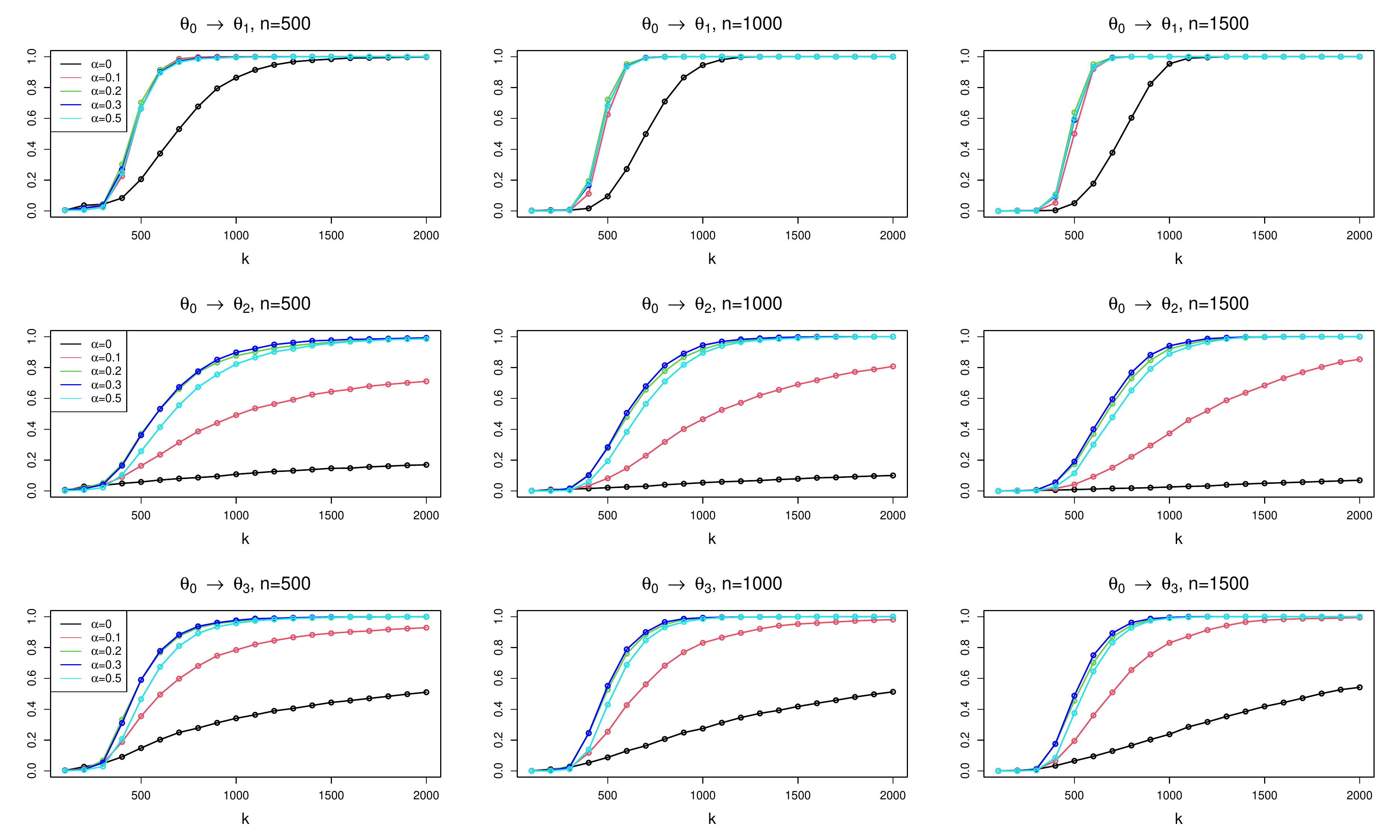}\vspace{-0.75cm}
\caption{\small The plots of the empirical powers when the parameter changes at $k^*=250$ and outliers are included in both the historical  and arrived data (HM-type case).}\label{fig:power_HM}\vspace{0.8cm}
\includegraphics[height=0.3\textwidth,width=1.0\textwidth]{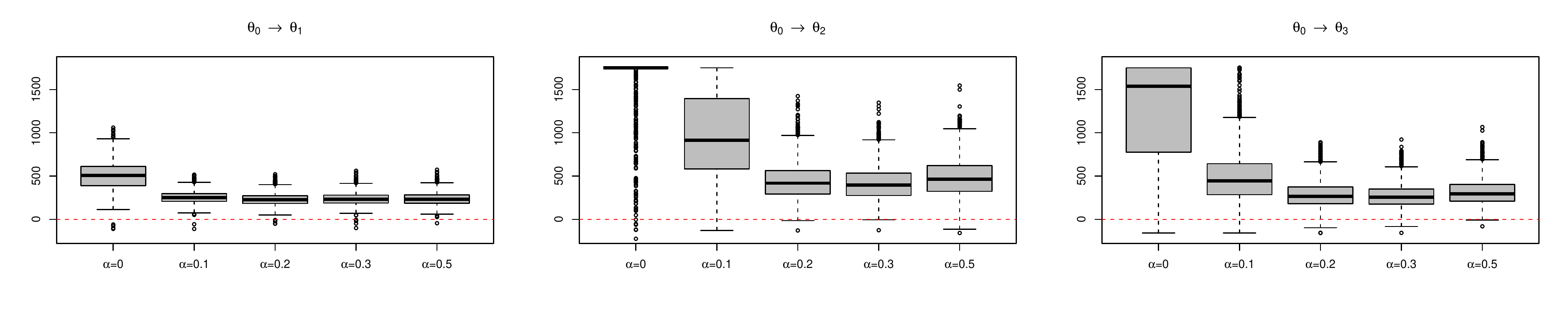}\vspace{-1cm}
\caption{\small The box plots of the delay times  for $n=1500$ when  outliers are included in both the historical and arrived  data.}\label{fig:delay_HM}
\vspace{0.8cm}
\end{figure}

\begin{table}[h]
  \caption{\small Average delay times of the score based procedure ($\A=0$) and the proposed procedure ($\A>0$) in the presence of outliers.}
   \tabcolsep=3pt
     \renewcommand{\arraystretch}{1.2}
   {\scriptsize
   \centering
   \begin{tabular}{crrccccccccccccccccc}
\toprule
      &       &       & \multicolumn{5}{c}{$\theta_0 \rightarrow \theta_1$} &       & \multicolumn{5}{c}{$\theta_0 \rightarrow \theta_2$} &       & \multicolumn{5}{c}{$\theta_0 \rightarrow \theta_3$} \\
\cmidrule{4-8}\cmidrule{10-14}\cmidrule{16-20}\multicolumn{1}{p{3.165em}}{Outlier} &       &       & $\A$  &       &       &       &       &       & $\A$  &       &       &       &       &       & $\A$  &       &       &       &  \\
\multicolumn{1}{l}{Type} & \multicolumn{1}{c}{$n$} &       & 0     & 0.1   & 0.2   & 0.3   & 0.5   &       & 0     & 0.1   & 0.2   & 0.3   & 0.5   &       & 0     & 0.1   & 0.2   & 0.3   & 0.5 \\
\cmidrule{1-2}\cmidrule{4-8}\cmidrule{10-14}\cmidrule{16-20}      & \multicolumn{1}{c}{500} &       & 353   & 176   & 183   & 198   & 200   &       & 1751  & 744   & 310   & 306   & 377   &       & 1558  & 360   & 193   & 190   & 244 \\
      &       &       & [1.44] & [0.76] & [0.82] & [0.91] & [0.91] &       & [8.11] & [2.91] & [1.02] & [0.88] & [0.84] &       & [9.11] & [1.71] & [0.77] & [0.66] & [0.64] \\
H     & \multicolumn{1}{c}{1000} &       & 394   & 195   & 199   & 212   & 215   &       & 1751  & 816   & 353   & 333   & 384   &       & 1460  & 394   & 222   & 214   & 251 \\
      &       &       & [1.48] & [0.78] & [0.83] & [0.90] & [0.92] &       & [6.87] & [2.78] & [1.05] & [0.88] & [0.81] &       & [6.79] & [1.55] & [0.75] & [0.63] & [0.59] \\
      & \multicolumn{1}{c}{1500} &       & 448   & 217   & 216   & 228   & 227   &       & 1751  & 874   & 402   & 374   & 426   &       & 1496  & 420   & 252   & 238   & 274 \\
      &       &       & [1.56] & [0.80] & [0.83] & [0.89] & [0.89] &       & [6.32] & [2.77] & [1.12] & [0.92] & [0.83] &       & [5.94] & [1.44] & [0.76] & [0.64] & [0.59] \\
\cmidrule{1-2}\cmidrule{4-8}\cmidrule{10-14}\cmidrule{16-20}      & \multicolumn{1}{c}{500} &       & -60   & 249   & 241   & 239   & 240   &       & -59   & 238   & 316   & 380   & 498   &       & -59   & 212   & 282   & 330   & 422 \\
      &       &       & [-0.24] & [1.07] & [1.09] & [1.10] & [1.10] &       & [-0.27] & [0.93] & [1.04] & [1.09] & [1.11] &       & [-0.35] & [1.01] & [1.13] & [1.14] & [1.10] \\
M     & \multicolumn{1}{c}{1000} &       & 228   & 272   & 258   & 255   & 255   &       & 80    & 289   & 365   & 424   & 531   &       & 74    & 252   & 318   & 368   & 461 \\
      &       &       & [0.86] & [1.09] & [1.08] & [1.09] & [1.09] &       & [0.31] & [0.99] & [1.09] & [1.12] & [1.12] &       & [0.34] & [0.99] & [1.08] & [1.09] & [1.09] \\
      & \multicolumn{1}{c}{1500} &       & 324   & 295   & 280   & 274   & 274   &       & 136   & 318   & 397   & 460   & 564   &       & 121   & 287   & 360   & 414   & 510 \\
      &       &       & [1.13] & [1.09] & [1.08] & [1.07] & [1.07] &       & [0.49] & [1.01] & [1.10] & [1.13] & [1.10] &       & [0.48] & [0.99] & [1.09] & [1.11] & [1.09] \\
\cmidrule{1-2}\cmidrule{4-8}\cmidrule{10-14}\cmidrule{16-20}      & \multicolumn{1}{c}{500} &       & 429   & 213   & 198   & 206   & 207   &       & 1751  & 772   & 333   & 333   & 409   &       & 1674  & 356   & 215   & 213   & 266 \\
      &       &       & [1.75] & [0.92] & [0.89] & [0.94] & [0.95] &       & [8.11] & [3.02] & [1.10] & [0.95] & [0.91] &       & [9.79] & [1.70] & [0.86] & [0.73] & [0.70] \\
      & \multicolumn{1}{c}{1000} &       & 452   & 226   & 205   & 212   & 213   &       & 1751  & 805   & 360   & 348   & 408   &       & 1665  & 404   & 240   & 232   & 274 \\
HM    &       &       & [1.7] & [0.91] & [0.85] & [0.9] & [0.91] &       & [6.87] & [2.75] & [1.07] & [0.92] & [0.86] &       & [7.74] & [1.58] & [0.81] & [0.69] & [0.65] \\
      & \multicolumn{1}{c}{1500} &       & 507   & 250   & 226   & 233   & 232   &       & 1751  & 915   & 418   & 397   & 465   &       & 1540  & 445   & 265   & 254   & 293 \\
      &       &       & [1.77] & [0.93] & [0.87] & [0.91] & [0.91] &       & [6.32] & [2.90] & [1.16] & [0.97] & [0.90] &       & [6.11] & [1.53] & [0.80] & [0.68] & [0.63] \\
\bottomrule\\ \vspace{-0.6cm}
\end{tabular}%
}    

{\small The figures in the brackets are the ratio $d_\A$.}
  \label{tab:delay_outliers}%
\end{table}%

\section{Real data analysis}\label{Sec:real}

This section presents two real data applications. The first one is given to describe that our sequential procedure works as well as the score based procedure when there is no outlying observation. In the second application, the data  including  some deviating observations are considered and we demonstrate that the score based procedure could result in a misleading result whereas the proposed procedure produces reliable result.

We analyse the log return series of the S\&P500 index from Jan 2000 to Dec 2004 and the Hang Seng index from Jan 1988 to Dec 1996. We fit  the GARCH(1,1) model  to each data set since each series shows typical features such as arch effect and it is also the most commonly used model in empirical practice.
Considering the possibility of outliers being present in the historical data, we use a robust test to check whether a parameter change exists in the historical data. In this study, we employ the following robust test for parameter change  introduced by \cite{song:kang:2019}:
\begin{eqnarray*}
\tilde T^\A_n=\max_{1\leq k \leq n}
\frac{1}{n}\pa_{\T'}\tilde{H}_{\A,k}(\HT) \hat{\mathcal{I}}_{\A}^{-1}\,\pa_{\T}\tilde{H}_{\A,k}(\HT):=\max_{1\leq k \leq n} \tilde T^\A_n(k),
\end{eqnarray*}
where $\A\geq0$ is a tuning parameter controlling the trade-off between efficiency and robustness as in MDPDE and
 $\hat{\mathcal{I}}_\A$ is a consistent estimator of $\mathcal{I}_\A=\E \big[ \pa_\theta l_\A(X_t;\T_0) \pa_{\theta'} l_\A(X_t;\T_0)\big]$ usually given by (\ref{I.GARCH}).  Under the null hypothesis of no parameter change, $\tilde T^{\A}_n$  converges in distribution to $\sup_{0\leq s \leq 1} \big\|W_d^o(s)\big\|_2^2$, where $ \{ W_d^o(s) |s\geq0\}$ is the $d$-dimensional standard Wiener process and $d$ is the number of parameters. When $\tilde T^\A_n$ is large, $H_0$ is rejected and the change point is located as $\argmax_{1\leq k\leq n}\tilde T^\A_n(k)$.  We also note that $\tilde T_n^\alpha$ with $\A=0$ becomes the score test for parameter change.  As demonstrated in \cite{song:kang:2019}, $\tilde T_n^\A$ with $\A$ close to zero can effectively detect changes in parameters when seemingly outliers are included in a data set being suspected of having parameter changes.

As in  the simulation study above, we employ again the maximum norm and the constant boundary function to conduct the monitoring  procedure. Here, we also consider $\A$ values in $\{$0, 0.1, 0.2, 0.3, 0.5$\}$. As mentioned in Remark \ref{choice.A}, we do not try to select an optimal $\A$. In stead, we implement the sequential procedure with  each $\A$ and incorporate the results to make a decision. The log return series and the plot of the detector $\tilde D_{\A,n}(k)$ for the S\&P500 index and the Hang Seng index are presented in Figures \ref{SP:2000} and \ref{HS:1988}, respectively, where the dashed blue vertical line in left subfigure denotes the monitoring start date and the dashed horizontal line in the right subfigure indicates the critical value, 2.381, corresponding to the significance level of 10\%.  More detailed analyses are as follows. \vspace{0.3cm}

\begin{table}[!t]
\centering
{\small
  \tabcolsep=5pt
  \renewcommand{\arraystretch}{1.2}
    \caption{Results of the parameter constancy test for the historical data} \label{Tab:alpha}
\begin{tabular}{clcccccc}
\toprule
\multirow{2}[4]{*}{} & \multirow{2}[4]{*}{Index} & \multicolumn{1}{c}{\multirow{2}[4]{*}{\makecell{\hspace{-0.9cm}The period of\\the Historical data}}} & \multicolumn{5}{c}{$\tilde T^\A_n$ [p-value]} \\
\cmidrule{4-8}      &       &       & 0     & 0.1   & 0.2   & 0.3   & 0.5 \\
\midrule
(i)   &\hspace{-0.4cm} S\&P500 & 1/3/2000 - 12/31/2001 & 1.59 [0.44] & 1.30 [0.62] & 1.40 [0.55] & 1.49 [0.50] & 1.66 [0.41] \\
(ii)  &\hspace{-0.4cm} Hang Seng & 1/4/1988 - 12/31/1990 & 0.67 [0.97] & 0.57 [0.99] & 0.62 [0.98] & 0.58 [0.99] & 0.79 [0.93] \\
\bottomrule
\end{tabular}%
}
\end{table}

\noindent {\bf (i) The S\&P500 index from Jan 2000 to Dec 2004}

As can be seen in the left part of Figure \ref{SP:2000}, there seems to be no observations that can be regarded as outlier  compared to the log return series in Figure \ref{HS:1988}. We consider the series from Jan 2000 to Dec 2001 ($n$=499) as the historical data and start monitoring at the first trading day of Jan 2002 ($k$=1). In order to check the parameter constancy in the historical data, we first implement $\tilde T^\A_n$ for each $\A$ considered. The results are summarized in the first row of  Table \ref{Tab:alpha}, where  we can see that all the p-values of $\tilde T^\A_n$ are greater than 10\% and thus accept the null hypothesis that there exists no parameter change in the historical data.
\begin{figure}[!t]
\includegraphics[height=0.4\textwidth,width=\textwidth]{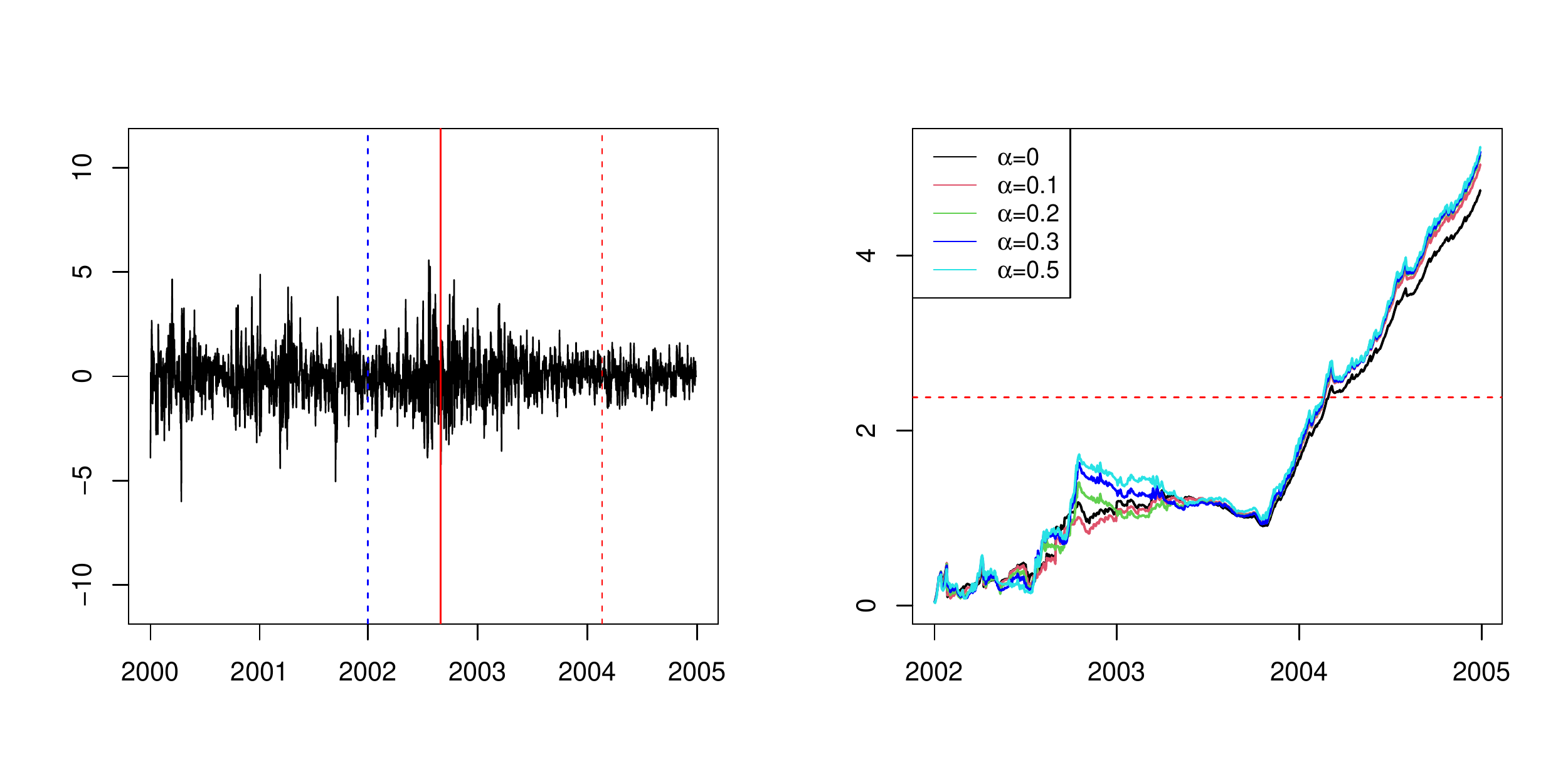}\vspace{-1cm}
\caption{\small The log return series (L) of the S\&P500 index from Jan 3, 2000 to Dec 31, 2004 and the plot (R) of  $\tilde D_{\A,n}(k)$ starting from Jan 2002.  }\label{SP:2000}
\end{figure}

\begin{table}[h]
  \centering
  {\small
  \tabcolsep=6pt
  \renewcommand{\arraystretch}{1.2}
  \caption{\small Results of the change point analyses and parameter estimates for the S\&P500 index.}
   \begin{tabular}{cccccrrrcrrr}
\toprule
\multicolumn{4}{c}{Change point analyses} &       & \multicolumn{7}{c}{Parameter estimates} \\
\cmidrule{1-4}\cmidrule{6-12}      &       & \multicolumn{1}{c}{\multirow{2}[4]{*}{\makecell{$\tilde T^\A_n$\\ \mbox{[p-value]}}}} & \multicolumn{1}{c}{\multirow{2}[4]{*}{\makecell{Estimated\\chg.pt ($c_\A$)}}} &       & \multicolumn{3}{c}{1st period } &       & \multicolumn{3}{c}{2nd period} \\
\cmidrule{6-8}\cmidrule{10-12}\multicolumn{1}{c}{$\A$} & \multicolumn{1}{l}{$n+\tilde k_{\A,n}$} &       &       &       & \multicolumn{1}{c}{$\hat\W$} & \multicolumn{1}{c}{$\hat\A_1$} & \multicolumn{1}{c}{$\hat\B_1$} &       & \multicolumn{1}{c}{$\hat\W$} & \multicolumn{1}{c}{$\hat\A_1$} & \multicolumn{1}{c}{$\hat\B_1$} \\
\cmidrule{1-4}\cmidrule{6-8}\cmidrule{10-12}
0     & 1045   & 4.14 [0.008] & 667   &       & 0.163 & 0.141 & 0.779 &       & 0.012 & 0.051 & 0.930 \\
0.1   & 1039   & 3.81 [0.014] & 667   &       & 0.134 & 0.123 & 0.805 &       & 0.013 & 0.045 & 0.935 \\
0.2   & 1038   & 3.51 [0.024] & 667   &       & 0.120 & 0.113 & 0.817 &       & 0.014 & 0.039 & 0.940 \\
0.3   & 1038   & 3.28 [0.034] & 714   &       & 0.104 & 0.117 & 0.825 &       & 0.006 & 0.001 & 0.985 \\
0.5   & 1037   & 3.04 [0.051] & 714   &       & 0.101 & 0.114 & 0.826 &       & 0.006 & 0.001 & 0.985 \\
\bottomrule
\multicolumn{12}{l}{$\tilde T_{\A,n}$ and the estimated change points  are obtained using the data up to $t=n(=499)+\tilde k_{\A,n}$. }\\
\multicolumn{12}{l}{ The 1st (resp. 2nd) period covers from $t=1$ (resp. $t=c_\A+1$)  to $t=c_\A$ (resp. $t=n+\tilde k_{\A,n}$).}
\end{tabular}}
  \label{Tab:SP500}%
\end{table}

The results of the sequential procedure, that is, the plots of the detector $\tilde D_{\A,n}(k)$, are presented in the right part
of Figure \ref{SP:2000}. One can see that the all the plots show similar shapes going over the critical value about two years later. More exactly, each procedure stops at $\tilde k_{\A,n}$=546, 540, 539, 539, and 538  for $\A$=0, 0.1, 0.2, 0.3, and 0.5, respectively, indicating that the parameter changed somewhere in the monitoring period. To locate the change point, for each $\A$, we apply $\tilde T_{\A,n}$ again to the return series up to each stopped point, that is, $n(=499)+\tilde k_{\A,n}$. The change points, say $c_\A$,  are obtained to be  Aug 30, 2002 ($t$= 667) and Nov 6, 2002 ($t$= 714) by $\A\in \{0, 0.1, 0.2\}$ and $\A\in \{0.3, 0.5\}$, respectively.

The results of change point analyses and  the parameter estimates are summarized in Table \ref{Tab:SP500}, where  the estimates
in the first and  second periods are obtained based on the data up to $t=c_\A$ and the data from $t=c_\A+1$ to $t=n+\tilde k_{\A,n}$, respectively. The estimates in each row are obtained by the MDPDE  with the corresponding $\A$. We recall that the MDPDE with $\A=0$ becomes the MLE, hence the estimates in the first row are the results by the MLE. From the table, we can  clearly  see that the parameters are estimated differently before and after each change point: $\W$ and $\A_1$ are estimated to be smaller and $\B_1$ is obtained to be greater in the 2nd period.  It should be noted that the DPD based inferences prefer to use  $\A$ close to zero when the data is not contaminated or the degree of contamination is small (cf. \cite{BHHJ:1998} and \cite{song:kang:2019}).  Since this example is presumed not to include outlying observations, we rely on the results obtained  by $\A \leq 0.2$. Therefore, we conclude that  $t=667$ is a proper change point. The change point and the earliest stopped date, which is obtained by $\A=0.2$, are additionally depicted by the red solid line and dotted line in the return series  in Figure \ref{SP:2000}, respectively. \vspace{0.3cm}

\begin{figure}[t]
\includegraphics[height=0.4\textwidth,width=\textwidth]{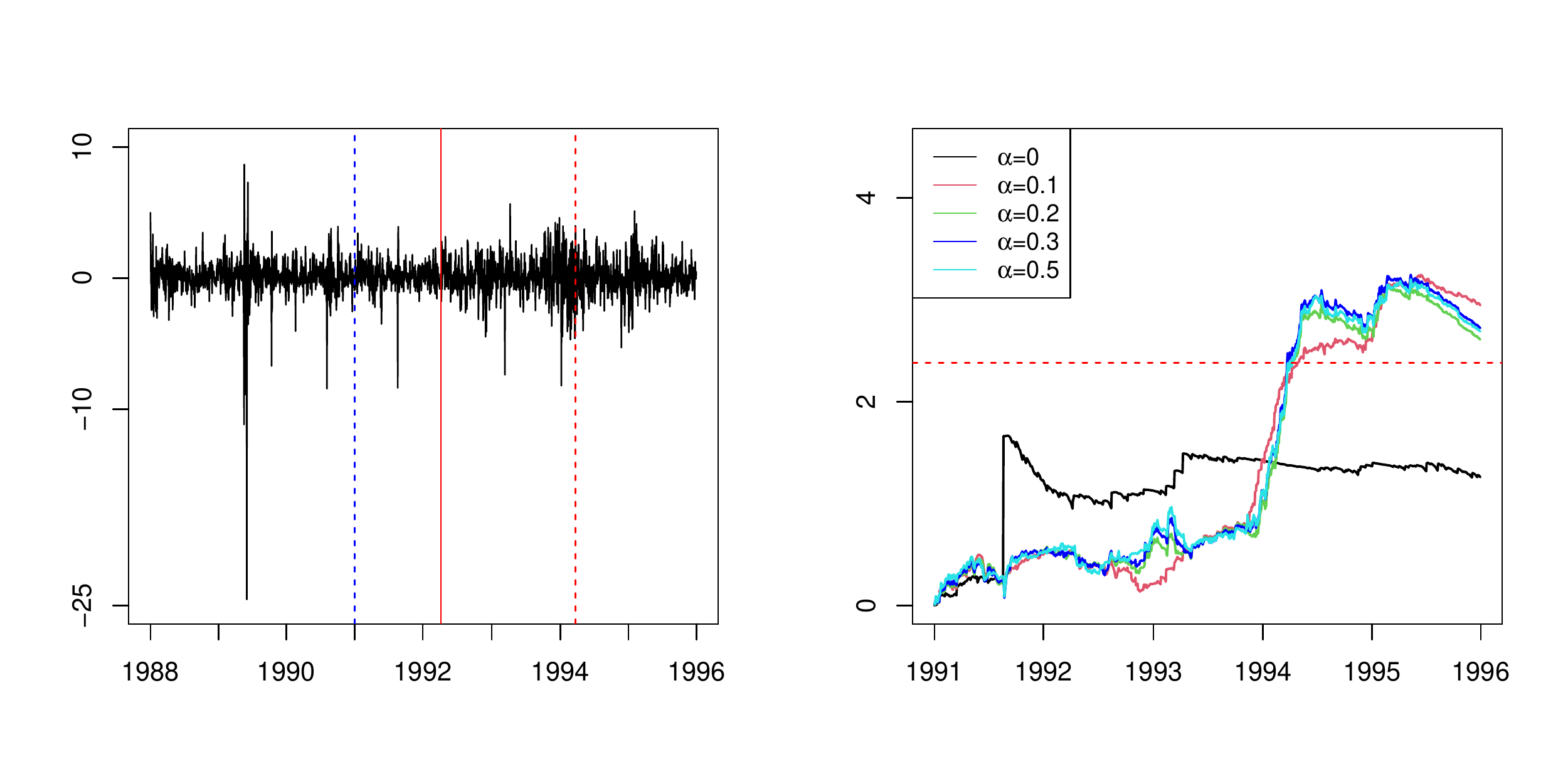}\vspace{-1cm}
\caption{\small The log return series (L) of the Hang Seng index from Jan 4, 1988 to Dec 29, 1995 and the plot (R) of $\tilde D_{\A,n}(k)$ starting from 1991.}\label{HS:1988}
\end{figure}

\noindent {\bf (ii) The Hang Seng index from Jan 1988 to Dec 1996}

In this application, we start the monitoring at January 4, 1991 ($k$=1) and thus the series from 1988 to 1990 is considered as the historical data ($n$=741).  As is clearly shown in the log return series in  Figure \ref{HS:1988}, this data has some apparent outlying observations in the historical data.
Since the score test for parameter change, i.e.,  $\tilde T^\A_n$ with $\A=0$, could be unduly influenced by such observations (cf. \cite{song:kang:2019}), we use  $\tilde T^\A_n$ with $\A>0$ to judge whether or not the parameters change over the historical observations. The test results are presented in the second row of  Table \ref{Tab:alpha}. From the p-values of $\tilde T^\A_n$ for  $\A>0$,  we accept the null hypothesis that the historical data does not undergo parameter change.

The plots of $\tilde D_{\A,n}(k)$ are provided in the right part of Figure \ref{HS:1988}. Unlike the previous analysis, the proposed and the score based procedures yield different results. That is, all $\tilde D_{\A,n}(k)$ with $\A>0$ cross over the dashed horizontal line but the path of $\tilde D_{\A,n}(k)$  with $\A=0$ evolves under the line. In particular, it is noteworthy that  $\tilde D_{\A,n}(k)$  with $\A=0$ shows a jump before long, which may be due to the  deviating observation occurred before 1992,  while all $\tilde D_{\A,n}(k)$ with $\A>0$ move stably  during that period.  This would be taken as an indirect indication that $\tilde D_{\A,n}(k)$ with $\A>0$ is robust against such outlying observations.
Each monitoring stops at $\tilde k_{\A,n}$=828, 804, 803, and 809 for $\A$=0.1, 0.2, 0.3, and 0.5, respectively. After which, to locate a change point that may have occurred before each stopped date, we again conduct the test  $\tilde T^\A_n$ using the data up to $t=n+\tilde k_{\A,n}$.  In the case of $\A=0$, since the score based procedure does not give an alarm signal for parameter change, $\tilde T^{\A=0}_n$ is applied to the whole series.

Table \ref{Tab:HS} shows the test results and the parameter estimates before and after the change points.  We first note that $\tilde T^\A_n$ with $\A=0$ produces the p-value over 10\% and thus it retains the null hypothesis that the parameters do not change over the whole observations. In contrast, all the  p-values of  $\tilde T^\A_n$ with $\A>0$ are obtained to be less than 1\%, rejecting the null hypothesis, and each $\tilde T^\A_n$ locates a change point before each stopped date. For each $\A>0$, we can see the distinct differences between the estimates in the first and second periods. Here, one may doubt that the differences might be due to outliers observed in the first period. We however note that the MDPDE as a robust estimator can reduce the impact of outliers on the estimation. This means that the differences in the estimates are highly likely due to the genuine change in the parameters. Hence, we primarily conclude that the parameter has changed in the monitoring period and presumably conclude that the score based procedure fails to catch the change due to the outlying observations.
 Recalling one of our findings in the simulation study that the procedure with $\A=0.1$ is somewhat affected by outliers, we estimate the change point based on the results obtained by $\A$=0.2 and 0.3. Therefore, we decide that Apr 4, 1992 ($t=1056$) is a suitable change point, which and the stopped point, Mar 23, 1994 ($t=1544$), are also depicted by the red solid and dotted lines in the left part of Figure \ref{HS:1988}, respectively.

\begin{table}[t]
  \centering
  {\small
  \tabcolsep=6pt
  \renewcommand{\arraystretch}{1.2}
  \caption{\small Results of the change point analyses and parameter estimates for the Hang Seng index.}
   \begin{tabular}{cccccrrrcrrr}
\toprule
\multicolumn{4}{c}{Change point analyses} &       & \multicolumn{7}{c}{Parameter estimates} \\
\cmidrule{1-4}\cmidrule{6-12}      &       & \multicolumn{1}{c}{\multirow{2}[4]{*}{\makecell{$\tilde T^\A_n$\\ \mbox{[p-value]}}}} & \multicolumn{1}{c}{\multirow{2}[4]{*}{\makecell{Estimated\\chg.pt ($c_\A$)}}} &       & \multicolumn{3}{c}{1st period} &       & \multicolumn{3}{c}{2nd period} \\
\cmidrule{6-8}\cmidrule{10-12}\multicolumn{1}{c}{$\A$} & $n+\tilde k_{\A,n}$ &       &       &       & \multicolumn{1}{c}{$\hat\W$} & \multicolumn{1}{c}{$\hat\A_1$} & \multicolumn{1}{c}{$\hat\B_1$} &       & \multicolumn{1}{c}{$\hat\W$} & \multicolumn{1}{c}{$\hat\A_1$} & \multicolumn{1}{c}{$\hat\B_1$} \\
\cmidrule{1-4}\cmidrule{6-8}\cmidrule{10-12}
0     & $\cdot$ & 2.34[0.15] & $\cdot$ &       & 0.112 & 0.119 & 0.828 &       &   &  &  \\
0.1   & 1569   & 7.48[0.00] & 1144  &       & 0.258 & 0.149 & 0.584 &       & 0.083 & 0.071 & 0.891 \\
0.2   & 1545   & 6.49[0.00] & 1056  &       & 0.200 & 0.103 & 0.658 &       & 0.044 & 0.057 & 0.919 \\
0.3   & 1544   & 5.79[0.00] & 1056  &       & 0.188 & 0.094 & 0.669 &       & 0.050 & 0.060 & 0.909 \\
0.5   & 1550   & 4.96[0.00] & 1061  &       & 0.178 & 0.082 & 0.679 &       & 0.072 & 0.077 & 0.873 \\
\bottomrule
\multicolumn{12}{l}{In the case of $\A=0$, $\tilde T_{\A,n}$ and the estimates  are obtained based on the whole series.}\\
\multicolumn{12}{l}{ The 1st (resp. 2nd) period covers from $t=1$ (resp. $t=c_\A+1$)  to $t=c_\A$ (resp. $t=n+\tilde k_{\A,n}$).}
\end{tabular}}
  \label{Tab:HS}%
\end{table}

\section{Concluding remark}\label{Sec:con}
This study proposed a robust sequential procedure for monitoring parameter changes. We constructed the DP divergence based detector and investigated the asymptotic behaviors of the induced stopping time under the null and alternative hypotheses. In particular, we provided a set of sufficient conditions for time series models under which the proposed procedure has an asymptotically controlled size and consistency in powers. The simulation study showed that the score based procedure is sensitively influenced by outliers whereas our procedure is strongly robust to outliers. The usefulness of our procedure was also demonstrated in real data analysis, where the proposed procedure detected a change point that was missed by the score based procedure.

Extension to other models including multivariate models are also interesting. In particular, we expect that our sequential procedure can naturally applied to inter-valued time series models that have recently attracted lots of attention. We leave the extension to multivariate models as a possible topic of future study.


\section{Appendix}
In this appendix, we provide the proofs of Theorems \ref{thm2}-\ref{thm5} for the case of $\A > 0$. We begin with two technical lemmas, which are usefully used in proving the lemmas and theorems below.

\begin{lm}\label{lm1}
Let $\{X_n| n\geq1\}$ be a sequence of random variables on a probability space $(\Omega,\mathcal{F},P)$ such that $X_n$ converges almost surely to a random variable $X$. Then, for any fixed integer $m$,
\[ \lim_{n\rightarrow\infty}\sup_{k\geq m} X_{n+k} =X \quad a.s.\]
\end{lm}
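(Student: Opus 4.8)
The plan is to prove the identity pointwise on the almost-sure convergence event and then invoke that this event has full measure. Let $\Omega_0 := \{\omega : X_n(\omega) \to X(\omega)\}$, so that $P(\Omega_0)=1$ by hypothesis. The first step is a harmless reindexing: substituting $j=n+k$ gives $\sup_{k\geq m} X_{n+k} = \sup_{j\geq n+m} X_j$ (for $n$ large enough that $n+m\geq 1$, which is all that matters for a limit in $n$). This exhibits the quantity of interest as the supremum of an ever-shrinking tail of the sequence, whose starting index $n+m$ tends to infinity as $n\to\infty$.

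The key observation is then that $a_n(\omega):=\sup_{j\geq n+m} X_j(\omega)$ is non-increasing in $n$, since increasing $n$ shrinks the index set over which the supremum is taken. Consequently $\lim_{n\to\infty} a_n(\omega) = \inf_n a_n(\omega) = \limsup_{j\to\infty} X_j(\omega)$, which is precisely the definition of the limit superior. On $\Omega_0$ the sequence $\{X_j(\omega)\}$ converges in $\mathbb{R}$ and is therefore bounded, so every tail supremum is finite and $\limsup_{j\to\infty} X_j(\omega) = \lim_{j\to\infty} X_j(\omega) = X(\omega)$. This already yields the claim pointwise on $\Omega_0$, hence almost surely.

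If a fully self-contained argument is preferred over the $\limsup$ identity, I would execute the same idea with an explicit $\epsilon$--$N$ estimate: fixing $\epsilon>0$ and $\omega\in\Omega_0$, choose $N_0$ with $|X_j(\omega)-X(\omega)|<\epsilon$ for all $j\geq N_0$. Then for every $n\geq N_0-m$ the two-sided bound $X(\omega)-\epsilon \leq X_{n+m}(\omega) \leq \sup_{j\geq n+m} X_j(\omega) \leq X(\omega)+\epsilon$ gives $\big|\sup_{k\geq m} X_{n+k}(\omega) - X(\omega)\big| \leq \epsilon$, where the lower bound uses that the supremum dominates its first term and the upper bound uses that every tail term lies within $\epsilon$ of $X(\omega)$. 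Letting $n\to\infty$ and then $\epsilon\downarrow 0$ delivers the convergence for each $\omega\in\Omega_0$.

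There is no real obstacle here; the statement is essentially the elementary fact that almost-sure convergence forces the tail suprema (and hence the $\limsup$) to collapse to the limit. The only point needing a word of care is the well-definedness of the suprema, which could be $+\infty$ off $\Omega_0$, but since $\Omega_0$ is a full-measure event on which convergent real sequences are bounded, the exceptional set is negligible and does not affect the almost-sure conclusion.
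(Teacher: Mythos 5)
Your proof is correct and follows essentially the same route as the paper's: a pointwise argument on the full-measure convergence event, using the fact that tail suprema are monotone and collapse to the limit (your $\epsilon$--$N$ sandwich is, in substance, the paper's bound $\big|\sup_{k\geq m}X_{n+k}-X\big|\leq \sup_{k\geq m}|X_{n+k}-X|\leq\epsilon$). The $\limsup$ identification in your first paragraph is just a cleaner packaging of the same idea, so no further comment is needed.
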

\begin{proof}

Let $S:= \{ \omega \in \Omega | X_n(\omega)\rightarrow X(\omega)\}$ and $Z_n:=\sup_{k\geq m} |X_{n+k}-X|$. Then,
 for a fixed $\omega \in S$ and any $\ep>0$, there exists an integer $N$ such that $|X_n(\omega)-X(\omega)|\leq \ep$ for all $n\geq N+m$. This implies that $Z_N(\omega)=\sup_{k\geq m} |X_{N+k}(\omega)-X(\omega)| \leq\epsilon$. Since $\{Z_n(\omega)\}$ is a decreasing sequence, one can see that $Z_n(\omega) \leq \ep$ for all $n\geq N$. Thus, we have that for all $n\geq N$,
$$\Big|\sup_{k\geq m}X_{n+k}(\omega)-X(\omega)\Big| \leq \sup_{k\geq m}|X_{n+k}(\omega)-X(\omega)| \leq \ep,$$
which asserts the lemma.
\end{proof}

\begin{lm}\label{unif}
Let $C(\Theta,\mathbb{R}^{d'})$ be a space of continuous functions from $\Theta$ to $\mathbb{R}^{d'}$, where $\Theta$ is a compact subset of $\mathbb{R}^d$, and $\{f_t| t\geq1\}$ a stationary ergodic sequence of random elements in $C(\Theta,\mathbb{R}^{d'})$.
If $\E \sup_{\T\in\Theta} \|f_t(\T)\| <\infty$, then for a constant $c>0$,
\[ \sup_{\T\in\Theta} \Big\| \frac{1}{k_n-c}\sum_{t>n+c}^{n+k_n} f_t(\T) - \E f_1(\T) \Big\|=o(1)\quad a.s., \]
where $\{k_n\}$ is an increasing sequence of positive integers.
\end{lm}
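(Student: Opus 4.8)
The plan is to deduce the statement from the uniform strong law of large numbers for stationary ergodic sequences of continuous random functions --- the same result (Theorem~2.7 of \cite{straumann:mikosch:2006}, or Chapter~16 of \cite{ferguson:1996}) invoked right after assumption~\textbf{A3} --- by reducing the moving window $\{n+c+1,\dots,n+k_n\}$ to two initial segments of the form $\{1,\dots,m\}$. Throughout, set $\mu(\T):=\E f_1(\T)$ and write $S_m(\T):=\sum_{t=1}^{m} f_t(\T)$ and $D_m(\T):=S_m(\T)-m\,\mu(\T)$ for the centered partial sums.

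First I would record that, since $\E\sup_{\T\in\Theta}\|f_t\|<\infty$, each $f_t(\cdot)$ is continuous and $\{f_t\}$ is stationary ergodic, the cited uniform SLLN applies to the prefix averages and gives
\begin{equation*}
u_m:=\sup_{\T\in\Theta}\Big\|\tfrac1m\sum_{t=1}^m f_t(\T)-\mu(\T)\Big\|=\sup_{\T\in\Theta}\Big\|\tfrac{1}{m}D_m(\T)\Big\|\longrightarrow 0\quad a.s.
\end{equation*}
Because $\{k_n\}$ is a strictly increasing sequence of integers we have $k_n\uparrow\infty$, so both indices $n+c$ and $n+k_n$ diverge and hence $u_{n+c}\to0$ and $u_{n+k_n}\to0$ almost surely.

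Next I would telescope the window sum as $\sum_{t=n+c+1}^{n+k_n} f_t(\T)=S_{n+k_n}(\T)-S_{n+c}(\T)$ and substitute $S_m=m\mu+D_m$. The deterministic mean parts cancel exactly, since the window contains precisely $k_n-c$ terms, leaving
\begin{equation*}
\frac{1}{k_n-c}\sum_{t=n+c+1}^{n+k_n} f_t(\T)-\mu(\T)=\frac{D_{n+k_n}(\T)-D_{n+c}(\T)}{k_n-c}.
\end{equation*}
Taking $\sup_{\T}\|\cdot\|$ and using $\sup_\T\|D_m(\T)\|=m\,u_m$ then bounds the whole expression, uniformly in $\T$, by
\begin{equation*}
\frac{(n+k_n)\,u_{n+k_n}+(n+c)\,u_{n+c}}{k_n-c}.
\end{equation*}

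The main obstacle is precisely the control of this quotient: the prefactors $\frac{n+k_n}{k_n-c}$ and $\frac{n+c}{k_n-c}$ need not remain bounded, so knowing $u_{n+k_n},u_{n+c}\to0$ does not by itself force the bound to vanish --- this is the genuine moving-block content of the lemma. When $k_n$ is of the same order as $n$ (i.e.\ $\liminf_n k_n/n>0$) these ratios stay bounded and the bound is $o(1)\cdot O(1)=o(1)$, closing the argument under the stated first-moment hypothesis alone. For the faster-window regime actually used in the applications ($k_n/\sqrt n\to\infty$, where $k_n$ may be $o(n)$), one must instead supply a sharper rate on the centered sums, e.g.\ a law-of-the-iterated-logarithm estimate $\sup_\T\|D_m(\T)\|=O(\sqrt{m\log\log m})$ available under a second moment, which makes the quotient $O\!\big(\sqrt{n\log\log n}/k_n\big)\to0$; this quantitative step is where the growth rate of $\{k_n\}$ enters and is the heart of the proof. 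The passage from the pointwise to the uniform conclusion is automatic, as every estimate above was already taken in the $\sup_\T\|\cdot\|$ norm.
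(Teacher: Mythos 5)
Your telescoping reduction is sound as far as it goes, but as written the proof is incomplete: you establish the lemma only when $\liminf_n k_n/n>0$, and you explicitly leave open the regime $k_n=o(n)$, for which you propose a law-of-the-iterated-logarithm rate that is not available here (the lemma assumes only $\E\sup_{\T\in\Theta}\|f_t(\T)\|<\infty$, no second moments). The missing step is, however, a one-line observation: since $\{k_n\}$ is an increasing sequence of positive integers, $k_n\geq k_1+(n-1)\geq n$, so
\begin{equation*}
\frac{n+k_n}{k_n-c}\leq \frac{2k_n}{k_n-c}\longrightarrow 2,
\qquad
\frac{n+c}{k_n-c}\leq \frac{n+c}{n-c}\longrightarrow 1,
\end{equation*}
and your bound $\big((n+k_n)\,u_{n+k_n}+(n+c)\,u_{n+c}\big)/(k_n-c)$ is then $o(1)\cdot O(1)=o(1)$ a.s. In other words, the regime you could not close simply cannot occur under the stated hypothesis, and your argument is complete once this is noted. (Your caveat is not idle, though: if ``increasing'' were weakened to nondecreasing, so that windows of length $o(n)$ were admitted --- which is what the paper's later remark on delay times of order $O_P(n^{0.5+\epsilon})$ appears to need --- then a bare first moment is genuinely insufficient: moving-average strong laws over windows of length $n^{a}$, $0<a<1$, require moments of order $1/a$, and Borel--Cantelli counterexamples with heavy-tailed i.i.d.\ summands show the conclusion can fail. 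So identifying the ratio $(n+k_n)/(k_n-c)$ as the crux of the matter was exactly right.)

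Your route also differs from the paper's. The paper does not telescope and does not quote a ready-made uniform SLLN; it reruns Ferguson's bracketing argument directly on the moving window: it covers $\Theta$ by finitely many balls on which $\E\,\varphi_t(\T_j,r_{\T_j})\leq \E f_1(\T_j)+\ep$, where $\varphi_t(\T,r)$ is the local supremum of $f_t$ over $N(\T,r)$, bounds the windowed supremum by the maximum over $j$ of the windowed averages of $\varphi_t(\T_j,r_{\T_j})$, and then ``applies the ergodic theorem'' to those windowed averages, finishing for $d'=1$ as in \cite{ferguson:1996}. Note that this ergodic-theorem step conceals precisely the issue you isolated: Birkhoff's theorem concerns initial-segment averages, and transporting it to the window $(n+c,n+k_n]$ requires your telescoping identity together with the boundedness of $(n+k_n)/(k_n-c)$, which the paper never makes explicit. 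So your argument, once completed as above, is the more transparent of the two: it reuses the uniform SLLN as a black box instead of reproving it, and it surfaces and justifies the one step on which the whole lemma actually turns.
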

\begin{proof}
It is sufficient to show the above for $d'=1$, so we assume that $f_t(\T)$ is a random element in $C(\Theta,\mathbb{R})$. We follow the arguments in the proof of Theorem 16(a) in \cite{ferguson:1996}.

Letting $\varphi_t(\T,r):=\sup \{f_t(\T')\, |\, \T' \in N(\T,r)\}$, where $N(\T,r)=\{\T'\, |\, \|\T'-\T\|<r\}$, it follows from the monotone convergence theorem that
\[ \lim_{r\rightarrow0+}\E\, \varphi_t(\T,r) = \E\,f_t (\T):=f(\T).\]
For $\ep>0$ and $\T$, let $r_{\T,\ep}$, say $r_\T$ for short, be a positive constant satisfying $\E\, \varphi_t(\T,r_\T)\leq f(\T)+\ep$. Then, due to the compactness of $\Theta$, we can take a finite subcover $\{N(\T_j,r_{\T_j})\}_{j=1}^m$, that is, $\Theta \subset \cup_{j=1}^m N(\T_j,r_{\T_j})$, and thus one can see that
for all $\T\in\Theta$,
\begin{eqnarray*}
\frac{1}{k_n-c}\sum_{t>n+c}^{n+k_n} f_t(\T) \leq  \max_{1\leq j \leq m}\frac{1}{k_n-c}\sum_{t>n+c}^{n+k_n} \varphi_t(\T_j,r_{\T_j}).
\end{eqnarray*}
Applying the ergodic theorem to  $\frac{1}{k_n-c}\sum_{t>n+c}^{n+k_n} \varphi_t(\T_j,r_{\T_j})$, we have that almost surely,
\begin{eqnarray*}
\limsup_{n\rightarrow\infty} \sup_{\T\in\Theta}\frac{1}{k_n-c}\sum_{t>n+c}^{n+k_n} f_t(\T)
\leq \max_{1\leq j \leq m} \E\,\varphi_t(\T_j,r_{\T_j}) \leq  \sup_{\T\in\Theta} f(\T)+\ep.
\end{eqnarray*}
and thus letting $\ep\rightarrow0$, we obtain
\begin{eqnarray}\label{uni1}
\limsup_{n\rightarrow\infty} \sup_{\T\in\Theta}\frac{1}{k_n-c}\sum_{t>n+c}^{n+k_n} f_t(\T)
&\leq& \sup_{\T\in\Theta} f(\T)\quad a.s.
\end{eqnarray}
Using (\ref{uni1}) and  exactly the same arguments in page 110 in \cite{ferguson:1996}, one can show the uniform strong convergence for $d'=1$. This completes the proof.
\end{proof}
\subsection{Lemmas and proofs for Section \ref{Sec:2}}
\begin{lm}\label{lm2}Suppose that assumptions {\bf A1}-{\bf A7} hold and let $m$ be a nonnegative integer. Then, under  $H_0$,
\begin{eqnarray*}
\sup_{ k\geq m} \Big\|\frac{1}{n+k}\paa H_{\A,n+k}(\theta^*_{n,k})-\mathcal{J}_\A \Big\|=o(1)\quad a.s.,
\end{eqnarray*}
where $\{\theta^*_{n,k}\}$ is any double array of random vectors satisfying that $\|\theta^*_{n,k}-\theta_0\|\leq\|\hat\theta_{\A,n}-\theta_0\|$.
\end{lm}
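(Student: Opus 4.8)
The plan is to reduce everything to the strong law of large numbers applied to the stationary sequence $\{\paa l_\A(X_t;\T_0)\}$ together with a stochastic-equicontinuity argument, and then to convert the resulting pointwise-in-$(n+k)$ convergence into convergence uniform over $k\geq m$ by invoking Lemma \ref{lm1}. Writing $\paa H_{\A,n+k}(\T)=\sum_{t=1}^{n+k}\paa l_\A(X_t;\T)$ and inserting the value at $\T_0$, I would first split
\begin{eqnarray*}
\Big\|\tfrac{1}{n+k}\paa H_{\A,n+k}(\theta^*_{n,k})-\mathcal{J}_\A\Big\|
&\leq& \frac{1}{n+k}\sum_{t=1}^{n+k}\big\|\paa l_\A(X_t;\theta^*_{n,k})-\paa l_\A(X_t;\T_0)\big\|\\
&& {}+\Big\|\frac{1}{n+k}\sum_{t=1}^{n+k}\paa l_\A(X_t;\T_0)-\mathcal{J}_\A\Big\|,
\end{eqnarray*}
and then take $\sup_{k\geq m}$ of both sides, controlling each piece separately.

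For the second piece, under $H_0$ the variables $\{\paa l_\A(X_t;\T_0)\}$ are i.i.d.\ with mean $\mathcal{J}_\A$, and assumption {\bf A6} (with $\T_0\in N(\T_0)$) guarantees $\E\|\paa l_\A(X_t;\T_0)\|<\infty$, so the strong law of large numbers gives $\frac{1}{N}\sum_{t=1}^{N}\paa l_\A(X_t;\T_0)\to\mathcal{J}_\A$ a.s.\ as $N\to\infty$. Setting $X_N:=\big\|\frac{1}{N}\sum_{t=1}^{N}\paa l_\A(X_t;\T_0)-\mathcal{J}_\A\big\|$, which tends to $0$ a.s., Lemma \ref{lm1} yields $\sup_{k\geq m}X_{n+k}\to 0$ a.s., which disposes of the second piece.

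The main work, and the step I expect to be the obstacle, is the first piece, because $\theta^*_{n,k}$ depends on $k$. The crucial input is the uniform-in-$k$ bound $\|\theta^*_{n,k}-\T_0\|\leq\|\hat\theta_{\A,n}-\T_0\|=:\delta_n$ together with $\delta_n\to0$ a.s., which follows from Theorem \ref{thm1}. For $\rho>0$ small enough that $\{\|\T-\T_0\|\leq\rho\}\subset N(\T_0)$, define the modulus
\[ g_t(\rho):=\sup_{\|\T-\T_0\|\leq\rho}\big\|\paa l_\A(X_t;\T)-\paa l_\A(X_t;\T_0)\big\|. \]
By the continuity of $\paa l_\A(x;\T)$ in assumption {\bf A6}, $g_t(\rho)\downarrow0$ as $\rho\downarrow0$, and since $g_t(\rho)\leq 2\sup_{\T\in N(\T_0)}\|\paa l_\A(X_t;\T)\|$ is integrable by {\bf A6}, dominated convergence gives $\E[g_1(\rho)]\to0$ as $\rho\to0$. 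On the a.s.\ event where $\delta_n\to0$, for each fixed $\rho$ there is $N_0$ with $\delta_n\leq\rho$ for all $n\geq N_0$; for such $n$ and every $k\geq m$ we then have $\|\theta^*_{n,k}-\T_0\|\leq\rho$, so the first piece is dominated by $\frac{1}{n+k}\sum_{t=1}^{n+k}g_t(\rho)$ for every $k$ simultaneously. Applying the strong law to $\{g_t(\rho)\}$ and Lemma \ref{lm1} once more gives $\sup_{k\geq m}\frac{1}{n+k}\sum_{t=1}^{n+k}g_t(\rho)\to\E[g_1(\rho)]$ a.s., whence
\[ \limsup_{n\to\infty}\sup_{k\geq m}\frac{1}{n+k}\sum_{t=1}^{n+k}\big\|\paa l_\A(X_t;\theta^*_{n,k})-\paa l_\A(X_t;\T_0)\big\|\leq \E[g_1(\rho)]\quad a.s. \]
Letting $\rho\to0$ forces this $\limsup$ to be $0$ a.s., and combining the two pieces through the triangle inequality gives the claimed uniform convergence. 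The delicate point throughout is that both the envelope $g_t(\rho)$ and the bound on $\theta^*_{n,k}$ are free of $k$, so that a single application of the law of large numbers upgraded by Lemma \ref{lm1} controls the supremum over all $k\geq m$ at once.
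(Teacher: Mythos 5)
Your proposal is correct and follows essentially the same route as the paper's proof: the same triangle-inequality split at $\theta_0$, the same dominated-convergence argument making the expected modulus of continuity $\E[g_1(\rho)]$ small (the paper's neighborhood $N_\ep(\theta_0)$ plays the role of your $\rho$-ball), the same strong law for the i.i.d.\ summands, and the same upgrade to uniformity in $k$ via Lemma \ref{lm1}. The only difference is cosmetic bookkeeping — you let $\rho\to0$ after taking the $\limsup$, whereas the paper fixes $\ep$ in advance — so there is nothing substantive to add.
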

\begin{proof}
First, observe that by assumption {\bf A6},
\begin{eqnarray}\label{mom.cond1}
\E \sup_{\T \in N(\T_0)} \big\|\paa l_\A(X;\T)-\paa l_\A(X;\T_0)\big\| <\infty.
\end{eqnarray}
Let $N_r(\theta_0) := \{\theta \in \Theta :\|\theta-\theta_0\|\leq r\}$. Noting that $\paa l_\A(X;\T)$ is continuous in $\T$, one can see from the dominate convergence theorem that
\[ \lim_{n\rightarrow\infty}\E \sup_{\T \in N_{1/n}(\T_0)} \big\|\paa l_\A(X;\T)-\paa l_\A(X;\T_0)\big\|=0.\]
Thus, for any $\ep>0$, we can take a neighborhood $N_\ep(\T_0)$ such that
\begin{eqnarray}\label{mom.cond2}
\E \sup_{\T \in N_\ep(\T_0)} \big\|\paa l_\A(X;\T)-\paa l_\A(X;\T_0)\big\| <\ep.
\end{eqnarray}
Since $\hat{\theta}_{\A,n}$ converges almost surely to $\theta_0$, we have that for sufficiently large $n$,
\begin{eqnarray*}
&&\hspace{-0.5cm}\sup_{k\geq m} \Big\|\frac{1}{n+k}\paa H_{\A,n+k}(\theta^*_{n,k})-\mathcal{J}_\A\Big\|\\
&&\leq  \sup_{k\geq m} \frac{1}{n+k}\Big\|\paa  H_{\A,n+k}(\theta^*_{n,k})-\paa H_{\A,n+k}(\theta_0)\Big\|
+ \sup_{k\geq m} \Big\|\frac{1}{n+k}\paa H_{\A,n+k}(\theta_0)-\mathcal{J}_\A\Big\|\\
&&\hspace{-0cm}\leq
\sup_{k\geq m}\frac{1}{n+k} \sum_{t=1}^{n+k} \sup_{\theta\in N_{\ep}(\theta_0)} \big\| \paa l_\A (X_t;\theta)-\paa l _\A (X_t;\theta_0)\big\|
+ \sup_{k\geq m} \Big\|\frac{1}{n+k}\paa  H_{\A,n+k}(\theta_0)-\mathcal{J}_\A\Big\|\\
&&\hspace{-0.5cm}:=I_n +II_n.
\end{eqnarray*}
Noting that $\{l_\A (X_t;\theta)\}$ is a sequence of  i.i.d. random variables, we have that
\begin{eqnarray*}
&&I_n^o:=\frac{1}{n} \sum_{t=1}^{n} \sup_{\theta\in N_{\ep}(\theta_0)} \big\| \paa l_\A (X_t;\theta)-\paa l_\A (X_t;\theta_0)\big\|\
\stackrel{a.s.}{\longrightarrow}\ \E\sup_{\theta \in N_\ep(\theta_0)}\big\| \paa l_\A (X_t;\theta)- \paa l_\A (X_t;\theta_0)\big\|, \\
&& II_n^o:=\Big\|\frac{1}{n}\paa H_{\A,n}(\theta_0)-\mathcal{J}_\A\Big\|\ \stackrel{a.s.}{\longrightarrow}\ 0.
\end{eqnarray*}
Hence, it follows from Lemma \ref{lm1} that
\begin{eqnarray*}
&& I_n =\sup_{k\geq m} I_{n+k}^o\stackrel{a.s.}{\longrightarrow}\ \E\sup_{\theta \in N_\ep(\theta_0)}\big\| \paa l_\A (X_t;\theta)- \paa l_\A (X_t;\theta_0)\big\| < \ep,\\
&&II_n=\sup_{k\geq m} II_{n+k}^o\stackrel{a.s.}{\longrightarrow}\ 0,
\end{eqnarray*}
which establish the lemma.
\end{proof}

\begin{lm}\label{lm3}Suppose that assumptions {\bf A1}-{\bf A7} hold. Then, under  $H_0$,
\begin{eqnarray*}
\sup_{k\geq1} \frac{1}{n+k} \big\|\paa H_{\A,n+k}(\theta^*_{n,k}) \mathcal{J}_\A^{-1}(B_{\A,n}-\mathcal{J}_\A)\big\|=o(1)\quad a.s.,
\end{eqnarray*}
where $\{\theta^*_{n,k}\}$ is the one given in Lemma \ref{lm2} and  $B_{\A,n}=\paa H_{\A,n}(\theta^*_{n,0})/n$.
\end{lm}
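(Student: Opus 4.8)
The plan is to reduce this lemma to a direct consequence of Lemma \ref{lm2} by factorizing the norm and pulling the only genuinely $k$-dependent piece out of the supremum. Since $\|\cdot\|$ is submultiplicative as an induced matrix norm, for each $k\geq1$ one has
\[
\frac{1}{n+k}\big\|\paa H_{\A,n+k}(\T^*_{n,k})\,\mathcal{J}_\A^{-1}(B_{\A,n}-\mathcal{J}_\A)\big\|
\leq \frac{1}{n+k}\big\|\paa H_{\A,n+k}(\T^*_{n,k})\big\|\,\big\|\mathcal{J}_\A^{-1}\big\|\,\big\|B_{\A,n}-\mathcal{J}_\A\big\|.
\]
Because neither $\|\mathcal{J}_\A^{-1}\|$ nor $\|B_{\A,n}-\mathcal{J}_\A\|$ depends on $k$, taking the supremum over $k\geq1$ leaves
\[
\sup_{k\geq1}\frac{1}{n+k}\big\|\paa H_{\A,n+k}(\T^*_{n,k})\,\mathcal{J}_\A^{-1}(B_{\A,n}-\mathcal{J}_\A)\big\|
\leq \Big(\sup_{k\geq1}\frac{1}{n+k}\big\|\paa H_{\A,n+k}(\T^*_{n,k})\big\|\Big)\big\|\mathcal{J}_\A^{-1}\big\|\,\big\|B_{\A,n}-\mathcal{J}_\A\big\|,
\]
so it suffices to show that the first factor is almost surely bounded and that the last factor is $o(1)$ almost surely.

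For the first factor I would invoke Lemma \ref{lm2} with $m=1$, which gives $\sup_{k\geq1}\|\frac{1}{n+k}\paa H_{\A,n+k}(\T^*_{n,k})-\mathcal{J}_\A\|=o(1)$ a.s. By the triangle inequality this yields $\sup_{k\geq1}\frac{1}{n+k}\|\paa H_{\A,n+k}(\T^*_{n,k})\|\leq \|\mathcal{J}_\A\|$ plus an almost surely vanishing term, so for all sufficiently large $n$ this supremum is bounded almost surely (say by $\|\mathcal{J}_\A\|+1$). The middle factor $\|\mathcal{J}_\A^{-1}\|$ is a finite deterministic constant, since $\mathcal{J}_\A$ is positive definite and hence invertible by assumption {\bf A7}.

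For the last factor I would apply Lemma \ref{lm2} once more, now with $m=0$: the supremum of the differences over $k\geq0$ tends to $0$ a.s., and in particular its $k=0$ term does, giving $\|B_{\A,n}-\mathcal{J}_\A\|=\|\frac{1}{n}\paa H_{\A,n}(\T^*_{n,0})-\mathcal{J}_\A\|=o(1)$ a.s.\ (recall $B_{\A,n}=\paa H_{\A,n}(\T^*_{n,0})/n$ and that $\T^*_{n,0}$ satisfies the defining inequality of Lemma \ref{lm2}). Combining the three estimates, the right-hand side is a product of an almost surely bounded term, a constant, and an $o(1)$ term, hence $o(1)$ a.s., which is the claim. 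The proof is essentially bookkeeping layered on top of Lemma \ref{lm2}; the only point that needs care is the observation that the sole genuinely $k$-dependent quantity is the Hessian factor, so that the vanishing factor $\|B_{\A,n}-\mathcal{J}_\A\|$ escapes the supremum intact and the uniform bound on the Hessian factor is precisely what Lemma \ref{lm2} supplies.
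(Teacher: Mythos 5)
Your proposal is correct and follows essentially the same route as the paper: both arguments combine submultiplicativity of the induced norm with two applications of Lemma \ref{lm2} (with $m\geq 1$ to get the a.s.\ boundedness of $\sup_{k\geq1}\frac{1}{n+k}\|\paa H_{\A,n+k}(\theta^*_{n,k})\|$ via the triangle inequality, and with $m=0$ to get $\|B_{\A,n}-\mathcal{J}_\A\|=o(1)$ a.s.), together with the finiteness of $\|\mathcal{J}_\A^{-1}\|$ from assumption \textbf{A7}. The only cosmetic difference is that the paper bounds $\|\mathcal{J}_\A^{-1}(B_{\A,n}-\mathcal{J}_\A)\|$ as a single factor while you split it into $\|\mathcal{J}_\A^{-1}\|\,\|B_{\A,n}-\mathcal{J}_\A\|$; the substance is identical.
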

\begin{proof}
By Lemma \ref{lm2}, we have
\[\sup_{k\geq1} \frac{1}{n+k} \big\|\paa  H_{\A,n+k}(\theta^*_{n,k}) \big\|
\leq\sup_{k\geq1}  \Big\|\frac{1}{n+k}\paa  H_{\A,n+k}(\theta^*_{n,k})-\mathcal{J}_\A\Big\|
+\|\mathcal{J}_\A\|=O(1)\quad a.s.\]
and
\begin{eqnarray}\label{BJ}
\big\|\mathcal{J}_\A^{-1}(B_{\A,n}-\mathcal{J}_\A)\big\| \leq \big\|\mathcal{J}_\A^{-1}\big\|\,\sup_{k\geq0}  \Big\|\frac{1}{n+k}\paa H_{\A,n+k}(\theta^*_{n,k})-\mathcal{J}_\A\Big\|=o(1)\quad a.s.,
\end{eqnarray}
which yield the result.
\end{proof}

\begin{lm}\label{lm4}Suppose that assumptions {\bf A1}-{\bf A7} hold. Then, under  $H_0$,
\begin{eqnarray*}
\sup_{k\geq1}\frac{1}{\sqrt{n}\Big(1+\frac{k}{n}\Big)}\Big\| \pa_{\theta}H_{\A,n+k}(\hat{\theta}_{\A,n})-\pa_{\theta}H_{\A,n+k}(\theta_0)+\Big(1+\frac{k}{n}\Big)\pa_{\theta}H_{\A,n}(\theta_0)\Big\|=o_P(1).
\end{eqnarray*}
\end{lm}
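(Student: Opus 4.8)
The plan is to linearise the gradient $\pa_\theta H_{\A,n+k}(\hat\theta_{\A,n})$ of the detector around $\theta_0$ and then remove the estimation error $\hat\theta_{\A,n}-\theta_0$ using the first-order condition defining the MDPDE, thereby reducing everything to the score $\pa_\theta H_{\A,n}(\theta_0)$, whose normalised size is controlled by a central limit theorem. Writing $\Delta_n:=\hat\theta_{\A,n}-\theta_0$ and applying a mean value expansion coordinatewise,
\[
\pa_\theta H_{\A,n+k}(\hat\theta_{\A,n})-\pa_\theta H_{\A,n+k}(\theta_0)=\paa H_{\A,n+k}(\theta^*_{n,k})\,\Delta_n,
\]
where the intermediate point $\theta^*_{n,k}$ satisfies $\|\theta^*_{n,k}-\theta_0\|\le\|\Delta_n\|$; the fact that the vector field $\pa_\theta H_{\A,n+k}$ may force a different intermediate point in each row is immaterial, since all of them obey this bound and Lemma \ref{lm2} applies to the resulting double array. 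Because $\theta_0$ is interior (A5), the first-order condition $\pa_\theta H_{\A,n}(\hat\theta_{\A,n})=0$ yields, by the same expansion, $\pa_\theta H_{\A,n}(\theta_0)=-nB_{\A,n}\Delta_n$ with $B_{\A,n}=\paa H_{\A,n}(\theta^*_{n,0})/n$; since $B_{\A,n}\to\mathcal{J}_\A$ a.s. by Lemma \ref{lm2} and $\mathcal{J}_\A$ is positive definite (A7), $B_{\A,n}$ is invertible for large $n$ and $\Delta_n=-\tfrac1n B_{\A,n}^{-1}\pa_\theta H_{\A,n}(\theta_0)$.

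Substituting this identity and writing $1+\tfrac kn=\tfrac{n+k}n$, the vector inside the norm becomes
\[
\frac1n\Big\{(n+k)I-\paa H_{\A,n+k}(\theta^*_{n,k})\,B_{\A,n}^{-1}\Big\}\pa_\theta H_{\A,n}(\theta_0).
\]
I would then expand $B_{\A,n}^{-1}=\mathcal{J}_\A^{-1}-\mathcal{J}_\A^{-1}(B_{\A,n}-\mathcal{J}_\A)B_{\A,n}^{-1}$ to split the brace as
\[
-(n+k)\Big\{\tfrac1{n+k}\paa H_{\A,n+k}(\theta^*_{n,k})\mathcal{J}_\A^{-1}-I\Big\}+\paa H_{\A,n+k}(\theta^*_{n,k})\mathcal{J}_\A^{-1}(B_{\A,n}-\mathcal{J}_\A)B_{\A,n}^{-1},
\]
which is exactly the decomposition that Lemmas \ref{lm2} and \ref{lm3} are tailored to control.

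Dividing by $\sqrt n\,(1+\tfrac kn)=(n+k)/\sqrt n$, the quantity to be bounded is at most
\[
\frac{\|\pa_\theta H_{\A,n}(\theta_0)\|}{\sqrt n}\;\sup_{k\ge1}\frac1{n+k}\Big\|(n+k)I-\paa H_{\A,n+k}(\theta^*_{n,k})B_{\A,n}^{-1}\Big\|.
\]
Here $\|\pa_\theta H_{\A,n}(\theta_0)\|/\sqrt n=O_P(1)$, since $\pa_\theta H_{\A,n}(\theta_0)=\sum_{t=1}^n\pa_\theta l_\A(X_t;\theta_0)$ is a sum of i.i.d. mean-zero terms (A5 makes $\theta_0$ a stationary point of $\E[l_\A]$) with covariance $\mathcal{I}_\A$, so the CLT applies. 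For the supremum, the first summand of the brace contributes at most $\|\mathcal{J}_\A^{-1}\|\sup_{k\ge1}\|\tfrac1{n+k}\paa H_{\A,n+k}(\theta^*_{n,k})-\mathcal{J}_\A\|=o(1)$ a.s. by Lemma \ref{lm2}, and the second contributes at most $\|B_{\A,n}^{-1}\|\,\sup_{k\ge1}\tfrac1{n+k}\|\paa H_{\A,n+k}(\theta^*_{n,k})\mathcal{J}_\A^{-1}(B_{\A,n}-\mathcal{J}_\A)\|=O(1)\cdot o(1)$ a.s. by Lemma \ref{lm3}, using that $\|B_{\A,n}^{-1}\|\to\|\mathcal{J}_\A^{-1}\|$. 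Thus the supremum is $o(1)$ a.s., and the product of an $O_P(1)$ factor with an $o(1)$ a.s. factor is $o_P(1)$, which is the claim.

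The crux of the argument, and the reason Lemmas \ref{lm1}--\ref{lm3} are established beforehand, is the uniformity over the unbounded index set $k\ge1$: pointwise a.s. convergence of $\tfrac1{n+k}\paa H_{\A,n+k}$ for each fixed $k$ would not suffice, and it is the $\tfrac1{n+k}$ normalisation combined with the monotone-supremum device of Lemma \ref{lm1} that upgrades these statements to convergence uniform in $k$. A secondary technical point is that the intermediate points $\theta^*_{n,k}$ vary with $k$, so the Hessian cannot be replaced by $\mathcal{J}_\A$ termwise; Lemma \ref{lm2} is precisely what absorbs this $k$-dependence through the double array $\{\theta^*_{n,k}\}$.
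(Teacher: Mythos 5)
Your proof is correct, and its skeleton---mean-value expansion of $\pa_\theta H_{\A,n+k}$ at $\theta_0$, elimination of $\hat\theta_{\A,n}-\theta_0$ via the first-order condition, and control of the two resulting error terms by Lemmas \ref{lm2} and \ref{lm3} together with the bound $\|\pa_\theta H_{\A,n}(\theta_0)\|/\sqrt{n}=O_P(1)$---is the same as the paper's. The one genuine difference is how the estimation error is removed. The paper never inverts $B_{\A,n}$: it keeps the implicit identity $\hat\theta_{\A,n}-\theta_0=-\mathcal{J}_\A^{-1}\tfrac{1}{n}\pa_\theta H_{\A,n}(\theta_0)-\mathcal{J}_\A^{-1}(B_{\A,n}-\mathcal{J}_\A)(\hat\theta_{\A,n}-\theta_0)$, so its second error term carries the factor $\sqrt{n}(\hat\theta_{\A,n}-\theta_0)$, and it must therefore establish $\sqrt{n}$-consistency of the MDPDE inside the proof (from that same identity) before Lemma \ref{lm3} can be invoked. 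You instead solve explicitly, $\hat\theta_{\A,n}-\theta_0=-\tfrac{1}{n}B_{\A,n}^{-1}\pa_\theta H_{\A,n}(\theta_0)$, which is legitimate because $B_{\A,n}\to\mathcal{J}_\A$ a.s.\ (Lemma \ref{lm2} with $m=0$) and $\mathcal{J}_\A$ is positive definite, and then use the resolvent identity $B_{\A,n}^{-1}=\mathcal{J}_\A^{-1}-\mathcal{J}_\A^{-1}(B_{\A,n}-\mathcal{J}_\A)B_{\A,n}^{-1}$ to land on exactly the two terms that Lemmas \ref{lm2} and \ref{lm3} control, with $\|B_{\A,n}^{-1}\|=O(1)$ a.s.\ playing the role of the paper's $O_P(1)$ bound on $\sqrt{n}(\hat\theta_{\A,n}-\theta_0)$. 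Your route is slightly more economical: root-$n$ consistency never needs to be stated, and both matrix factors are almost surely $o(1)$, so the only stochastic bookkeeping is the single $O_P(1)$ score. The paper's route buys reusability: the implicit identity and the $\sqrt{n}$-consistency step it forces are exactly what reappear in Lemma \ref{root.n.con} for the time-series case (where they also yield asymptotic normality), and it avoids any invertibility caveat for $B_{\A,n}$ on an exceptional set. A final point in your favor: your explicit remark that the mean-value theorem produces a different intermediate point in each coordinate, and that Lemma \ref{lm2} absorbs this through the double array $\{\theta^*_{n,k}\}$, addresses a detail the paper's notation glosses over.
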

\begin{proof}
 By Taylor's theorem, we have  that  for each $ k\geq0$,
\begin{eqnarray}\label{TL}
\pa_{\theta}H_{\A,n+k}(\hat{\theta}_{\A,n})=\pa_{\theta}H_{\A,n+k}(\theta_0)+\paa H_{\A,n+k}(\theta^*_{n,k}) (\hat{\theta}_{\A,n}-\theta_0),
\end{eqnarray}
where $\theta^*_{n,k}$ is an intermediate point between $\theta_0$ and $\hat{\theta}_{\A,n}$. Here, observe that for $k=0$,
\[
\pa_{\theta}H_{\A,n}(\hat\theta_{\A,n})=\pa_{\theta}H_{\A,n}(\theta_0)+\paa H_{\A,n}(\theta^*_{n,0}) (\hat{\theta}_{\A,n}-\theta_0)=0.
\]
Then,  we can express that
\begin{eqnarray}\label{hat.theta}
\hat{\theta}_{\A,n}-\theta_0
= -\mathcal{J}_\A^{-1}\frac{1}{n}\pa_{\theta}H_{\A,n}(\theta_0)-\mathcal{J}_\A^{-1}(B_{\A,n}-\mathcal{J}_\A)(\hat{\theta}_{\A,n}-\theta_0),
\end{eqnarray}
where $B_{\A,n}=\paa H_{\A,n}(\theta^*_{n,0})/n$. Putting the above into (\ref{TL}), we obtain
\[\pa_{\theta}H_{\A,n+k}(\hat{\theta}_{\A,n})-\pa_{\theta}H_{\A,n+k}(\theta_0)=-\paa H_{\A,n+k}(\theta^*_{n,k})\Big\{
\mathcal{J}_\A^{-1}\frac{1}{n}\pa_{\theta}H_{\A,n}(\theta_0)+\mathcal{J}_\A^{-1}(B_{\A,n}-\mathcal{J}_\A)(\hat{\theta}_{\A,n}-\theta_0)\Big\}\]
and thus
\begin{eqnarray*}
&&\hspace{-0.5cm}\frac{1}{\sqrt{n} \Big(1+\frac{k}{n}\Big)}\Big\|\pa_{\theta}H_{\A,n+k}(\hat{\theta}_{\A,n})-\pa_{\theta}H_{\A,n+k}(\theta_0)+\Big(1+\frac{k}{n}\Big)\pa_{\theta}H_{\A,n}(\theta_0)\Big\|\\
&&\hspace{-0cm}\leq
\Big\|\Big( \mathcal{J}_\A-\frac{1}{n+k}\paa H_{\A,n+k}(\theta^*_{n,k})\Big)\,\mathcal{J}_\A^{-1}\frac{1}{\sqrt{n}}\pa_{\theta}H_{\A,n}(\theta_0)\Big\|\\
&&\hspace{0.5cm}
+\frac{1}{n+k}\big\|\paa H_{\A,n+k}(\theta^*_{n,k})\,\mathcal{J}_\A^{-1}
(B_{\A,n}-\mathcal{J}_\A)\sqrt{n}(\hat{\theta}_{\A,n}-\theta_0)\big\|\\
&&\hspace{-0.5cm}:=I_{n,k} +II_{n,k}.
\end{eqnarray*}
First note that $\E[\pa_\T l_\A(X;\T_0)]=0$ and that $\{\pa_\T l_\A(X_t;\T_0)\}$ is a sequence of i.i.d. random vectors. Then, by the multivariate functional central limit theorem (FCLT), we have that for each $T>0$,
\begin{eqnarray}\label{FCLT}
\frac{1}{\sqrt{n}} \pa_\T H_{\A,[ns]}(\T_0)=\frac{1}{\sqrt{n}}  \sum_{t=1}^{[ns]}  \pa_\T l_\A (X_t; \T_0)\stackrel {w}{\longrightarrow}\mathcal{I}_{\alpha}^{1/2}W_d(s)~~in~~\mathbb {D}([0,T],\mathbb {R}^d ),
\end{eqnarray}
where $\{W_d(s)\}$ is a  $d$-dimensional standard Wiener process. Hence, it follows that $\frac{1}{\sqrt{n}} \pa_\T H_{\A,n}(\T_0)$ is $O_P(1)$, which together with Lemma \ref{lm2} yields  $\sup_{k\geq1}I_{n,k}=o_P(1)$.\\
Next, to show  $\sup_{k\geq1}II_{n,k}=o_P(1)$, we prove that  $\sqrt n(\hat\T_{\A,n}-\T_0)=O_P(1)$. From (\ref{hat.theta}), we have
\begin{eqnarray}\label{root.hat}
 \sqrt{n} (\hat{\theta}_{\A,n}-\theta_0)
= -\mathcal{J}_\A^{-1}\frac{1}{\sqrt{n}}\pa_{\theta}H_{\A,n}(\theta_0)-
\mathcal{J}_\A^{-1}(B_{\A,n}-\mathcal{J}_\A)\sqrt{n}(\hat{\theta}_{\A,n}-\theta_0).
\end{eqnarray}
Since $\mathcal{J}_\A^{-1}(B_{\A,n} -\mathcal{J}_\A)=o_P(1)$ by (\ref{BJ}) and
 $\frac{1}{\sqrt{n}} \pa_\T H_{\A,n}(\T_0)=O_P(1)$, it can be shown that $\sqrt n(\hat\T_{\A,n}-\T_0)=O_P(1)$ (cf. Lemma \ref{root.n.con} below). Thus, due to Lemma \ref{lm3}, we have $\sup_{k\geq1}II_{n,k}=o_P(1)$. This completes the proof.
\end{proof}

\begin{lm}\label{lm5}Suppose that assumptions {\bf A1}-{\bf A5}, {\bf A7}, and {\bf B} hold. Then, under  $H_0$,
\begin{eqnarray*}
\sup_{k\geq1}  \frac{\Big\|\pa_{\theta}{H}_{\A,n+k}(\theta_0)-\Big(1+\frac{k}{n}\Big)\pa_{\theta}{H}_{\A,n}(\theta_0)\Big\|}{\sqrt{n}\Big(1+\frac{k}{n}\Big)b\Big(\frac{k}{n}\Big)}
\stackrel{d}{\longrightarrow} \sup_{s>0} \frac{ \big\|\mathcal{I}_\A^{\frac{1}{2}}\big( W_d(1+s)-(1+s)W_d(1)\big)\big\|}{(1+s)b(s)}.
\end{eqnarray*}
\end{lm}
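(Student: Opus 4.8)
The plan is to rewrite the numerator through partial sums of the i.i.d.\ score vectors and then combine the functional central limit theorem (\ref{FCLT}) with the continuous mapping theorem; the one genuinely delicate feature is that the supremum runs over the unbounded horizon $s\in(0,\infty)$. Put $S_m:=\pa_\theta H_{\A,m}(\T_0)=\sum_{t=1}^m\pa_\theta l_\A(X_t;\T_0)$ and $Y_n(u):=n^{-1/2}S_{[nu]}$, and recall that $\E[\pa_\theta l_\A(X_t;\T_0)]=0$ while $\mathcal{I}_\A$ is the finite, positive-definite covariance by {\bf A7}. With $s=k/n$ the term inside the supremum is $\|Y_n(1+s)-(1+s)Y_n(1)\|/\{(1+s)b(s)\}$. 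The key simplification is to cancel the factor $1+s$: setting
\[\tilde R_n(s):=\frac{Y_n(1+s)}{1+s}-Y_n(1),\qquad \tilde R(s):=\mathcal{I}_\A^{1/2}\Big(\frac{W_d(1+s)}{1+s}-W_d(1)\Big),\]
the term equals $\|\tilde R_n(s)\|/b(s)$ and the claimed limit is $\sup_{s>0}\|\tilde R(s)\|/b(s)$.

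I would establish that $\tilde R_n\Rightarrow\tilde R$ in the uniform topology on the \emph{compactified} half-line $[0,\infty]$, and then apply the continuous mapping theorem to the functional $f\mapsto\sup_{s>0}\|f(s)\|/b(s)$. On a finite block $[0,T]$, (\ref{FCLT}) gives $Y_n\Rightarrow\mathcal{I}_\A^{1/2}W_d$ in $\mathbb{D}([0,1+T],\mathbb{R}^d)$ with an a.s.\ continuous limit, so by continuity of the defining map at continuous paths $\tilde R_n\Rightarrow\tilde R$ in $\mathbb{D}([0,T],\mathbb{R}^d)$. Both processes moreover extend continuously to $s=\infty$: since $S_m/m\to0$ a.s., one has $Y_n(1+s)/(1+s)=\sqrt n\,S_{n+k}/(n+k)\to0$ as $s\to\infty$ for fixed $n$, so $\tilde R_n(\infty)=-Y_n(1)$ and likewise $\tilde R(\infty)=-\mathcal{I}_\A^{1/2}W_d(1)$, and these endpoints converge jointly with the finite-block pieces. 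Equivalently, under the time change $w=s/(1+s)\in(0,1)$ and the time-inversion property of Brownian motion, $\tilde R(s)$ coincides (up to sign) with $\mathcal{I}_\A^{1/2}\widehat W_d(w)$ for a standard Wiener process $\widehat W_d$, so that $\sup_{s>0}\|\tilde R(s)\|/b(s)=\sup_{0<w<1}\|\mathcal{I}_\A^{1/2}\widehat W_d(w)\|/b(w/(1-w))$; this matches, up to the factor $\mathcal{I}_\A^{1/2}$ that is later absorbed by the detector normalization $\hat{\mathcal{I}}_{\A,n}^{-1/2}$, the form appearing in Theorem \ref{thm2}.

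The main obstacle is the tightness of $\tilde R_n$ at the endpoint $s=\infty$ (equivalently $w=1$): I must show that for every $\ep>0$,
\[\lim_{T\to\infty}\limsup_{n\to\infty} P\Big(\sup_{s>T}\big\|\tilde R_n(s)-\tilde R_n(\infty)\big\|>\ep\Big)=0.\]
Here $\tilde R_n(s)-\tilde R_n(\infty)=\sqrt n\,S_{n+k}/(n+k)$; writing $S_{n+k}=S_n+(S_{n+k}-S_n)$ separates a harmless drift, bounded by $\|Y_n(1)\|/(1+T)=O_P(T^{-1})$, from the fluctuation $\sup_{k>nT}\sqrt n\,\|S_{n+k}-S_n\|/(n+k)$. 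The latter is precisely what the H\'ajek--R\'enyi--Chow maximal inequality controls: applied to the i.i.d.\ mean-zero, square-integrable increments $\pa_\theta l_\A(X_{n+t};\T_0)$ with weights $c_k=(n+k)/\sqrt n$, it bounds the probability by a multiple of $\sum_{k>nT}c_k^{-2}=O(T^{-1})$, uniformly in $n$. The same estimate for Brownian increments gives tightness of $\tilde R$ near $s=\infty$. This maximal-inequality step, which is exactly where condition {\bf S1} and Lemma \ref{lm.I3} enter in the time-series version, is what allows the non-vanishing tail to be captured rather than discarded.

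Finally, the functional $f\mapsto\sup_{s>0}\|f(s)\|/b(s)$ is continuous at paths continuous on $[0,\infty]$, because by assumption {\bf B} $b$ is continuous with $\inf_{s>0}b(s)>0$, so $1/b$ is bounded and near $s=\infty$ the ratio depends continuously on the single endpoint value even if $b$ oscillates. Weak convergence $\tilde R_n\Rightarrow\tilde R$ on $[0,\infty]$ together with this continuity and the continuous mapping theorem then yields $\sup_{s>0}\|\tilde R_n(s)\|/b(s)\stackrel{d}{\longrightarrow}\sup_{s>0}\|\tilde R(s)\|/b(s)$, which is the assertion; the passage from the grid $\{s=k/n\}$ in the original statement to continuous $s$ costs only $o_P(1)$ since the mesh vanishes and the limit is uniformly continuous. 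I expect the endpoint tightness via H\'ajek--R\'enyi--Chow to be the only substantial step, the remainder being a routine application of the continuous mapping theorem.
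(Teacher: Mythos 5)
Your proposal is correct and takes essentially the same route as the paper: finite-horizon convergence via the FCLT (\ref{FCLT}) and the continuous mapping theorem, with the unbounded horizon handled by the H\'ajek--R\'enyi--Chow maximal inequality applied to the partial sums with weights $(n+k)/\sqrt{n}$. Your repackaging of the argument as weak convergence of $\tilde R_n$ on the compactified half-line $[0,\infty]$, with the endpoint value $-Y_n(1)$ and the drift/fluctuation split $S_{n+k}=S_n+(S_{n+k}-S_n)$ made explicit, is a cleaner but equivalent organization of the paper's combination of (\ref{R1})--(\ref{R3}), the only incidental difference being that the paper invokes the law of the iterated logarithm where you use a Brownian maximal inequality for the limit-process tail.
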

\begin{proof}
We follow the arguments in Lemma 6.6 in \cite{berkes.et.al:2004seq} to show the lemma.
Using (\ref{FCLT}), we have that for any $T>0$,
\[\frac{1}{\sqrt{n}}\Big\{\pa_{\theta}{H}_{\A,[n(1+s)]}(\theta_0)-\Big(1+\frac{[ns]}{n}\Big)\pa_{\theta}{H}_{\A,n}(\theta_0)\Big\}  \stackrel {w}{\longrightarrow}\mathcal{I}_\A^{\frac{1}{2}}\big( W_d(1+s)-(1+s)W_d(1)\big)~~in~~\mathbb {D}([0,T],\mathbb {R}^d )\]
and thus, by the continuous mapping theorem,
\begin{eqnarray}\label{R1}
\sup_{1\leq k \leq nT}  \frac{\big\|\pa_{\theta}{H}_{\A,n+k}(\theta_0)-\Big(1+\frac{k}{n}\Big)\pa_{\theta}{H}_{\A,n}(\theta_0)\big\|}{\sqrt{n}\Big(1+\frac{k}{n}\Big)b\Big(\frac{k}{n}\Big)}
\stackrel{d}{\longrightarrow} \sup_{0<s<T} \frac{ \big\|\mathcal{I}_\A^{\frac{1}{2}}\big( W_d(1+s)-(1+s)W_d(1)\big)\big\|}{(1+s)b(s)}.
\end{eqnarray}
Further, since
\begin{eqnarray*}
\lim_{T\rightarrow \infty} \sup_{s\geq T} \frac{\big\| W_d (1+s)\big\|}{(1+s)b(s)} =0\quad a.s.
\end{eqnarray*}
by the law of the iterated logarithm and assumption {\bf B}, we have
\begin{eqnarray}\label{R2}
\sup_{0<s<T} \frac{ \big\| W_d(1+s)-(1+s)W_d(1)\big\|}{(1+s)b(s)}
\stackrel{a.s.}{\longrightarrow}\sup_{s>0} \frac{  \big\|W_d(1+s)-(1+s)W_d(1)\big\|}{(1+s)b(s)}\quad as\quad T\rightarrow \infty.
\end{eqnarray}
For  any $\ep>0$, note also  that by  the Hájek-Rényi-Chow inequality,
\begin{eqnarray}\label{R3}
\lim_{T\rightarrow\infty} \lim_{n\rightarrow\infty} P\bigg( \sup_{ k \geq nT} \frac{1}{\sqrt{n}\Big(1+\frac{k}{n}\Big)}\big\|\pa_{\theta}{H}_{\A,n+k}(\theta_0)\big\| \geq \epsilon\bigg)=0.
\end{eqnarray}
Then, the lemma is established by combining (\ref{R1})-(\ref{R3}).
\end{proof}

\newpage
\noindent{\bf Proof of Theorem \ref{thm2}}\\
Using  the results of Lemmas \ref{lm4} and  \ref{lm5}, we have that  under $H_0$,
\begin{eqnarray}\label{main1}
\sup_{k\geq1}  \frac{\big\|\hat{\mathcal{I}}_{\A,n}^{-\frac{1}{2}}\,\pa_{\theta}H_{\A,n+k}(\hat{\theta}_{\A,n})\big\|}{\sqrt{n}\Big(1+\frac{k}{n}\Big)b\Big(\frac{k}{n}\Big)}
\stackrel{d}{\longrightarrow} \sup_{s>0} \frac{ \big\| W_d(1+s)-(1+s)W_d(1)\big\|}{(1+s)b(s)}.
\end{eqnarray}
Since
the two processes $\{ W_d(1+s)-(1+s)W_d(1)\,  |\, s>0\}$ and  $\{ (1+s)W_d(s/(1+s))\, |\, s>0\}$ have the same distribution, it follows from (\ref{main1}) that
\begin{eqnarray*}
\lim_{n\rightarrow\infty} P\big( k_n <\infty\ |\ H_0 \big)
&=&\lim_{n\rightarrow\infty} P \bigg(  \sup_{k\geq1}  \frac{\big\|\hat{\mathcal{I}}_{\A,n}^{-\frac{1}{2}}\,\pa_{\theta}H_{\A,n+k}(\hat{\theta}_{\A,n})\big\|}{\sqrt{n}\Big(1+\frac{k}{n}\Big)b\Big(\frac{k}{n}\Big)} > 1\ \big|\ H_0\bigg)\\
&=&P\bigg(  \sup_{s>0} \frac{ \big\| W_d(1+s)-(1+s)W_d(1)\big\|}{(1+s)b(s)} >1 \bigg)\\
&=&
P\bigg(  \sup_{s>0} \frac{ \big\| W_d(s/(1+s))\big\|}{b(s)} >1 \bigg)
=
P\bigg(  \sup_{0<s<1} \frac{ \big\| W_d(s)\big\|}{b(s/(1-s))} >1\bigg).
\end{eqnarray*}
\hfill{$\Box$}\vspace{0.2cm}\\

\noindent{\bf Proof of Theorem \ref{thm3}}\\
It suffices to show that there exists a sequence of real numbers, say $\{k_n|n\geq1\}$, such that
\begin{eqnarray}\label{H1}
\frac{\big\| \pa_{\theta}H_{\A,n+k_n}(\hat{\theta}_{\A,n})\big\|}{\sqrt{n}\Big(1+\frac{k_n}{n}\Big)}\stackrel{P}{\longrightarrow} \infty \quad as\ \  n\rightarrow \infty.
\end{eqnarray}
Let $\{k_n\}$ be an increasing sequence of positive integers satisfying $k_n/\sqrt{n}\rightarrow \infty$. Without loss of generality, we assume that $k_n$ is greater than $k^*$. Now, we shall show that (\ref{H1}) holds for the sequence $\{k_n\}$.
Under $H_1$, it can be written that
\begin{eqnarray*}
\pa_{\theta}H_{\A,n+k_n}(\hat{\theta}_{\A,n})
&=&\sum_{t=1}^{n+k^*}\pa_{\theta}\,l_\A(X_{0,t};\hat\T_{\A,n})+\sum_{t>n+k^*}^{n+k_n}\pa_{\theta}\,l_\A(X_{1,t};\hat\T_{\A,n}).
\end{eqnarray*}
Let $H_{\A,n}^0(\T)=\sum_{t=1}^n l_\A(X_{0,t};\T)$. Then, since  $\{X_{0,t} | t\geq 1\}$ is an i.i.d. sample from $f_{\T_0}$, one can see from Lemmas \ref{lm4} and \ref{lm5} that
\begin{eqnarray*}
\frac{1}{\sqrt{n}\Big(1+\frac{k^*}{n}\Big)}\Big\|\sum_{t=1}^{n+k^*}\pa_{\theta}\,l_\A(X_{0,t};\hat\T_{\A,n})\Big\|
\leq \sup_{k\geq1}\frac{1}{\sqrt{n}\Big(1+\frac{k}{n}\Big)}\big\| \pa_{\theta}H^0_{\A,n+k}(\hat{\theta}_{\A,n})\big\|=O_P(1),
\end{eqnarray*}
and thus we first obtain
\begin{eqnarray}\label{AC1}	
\frac{1}{\sqrt{n}\Big(1+\frac{k_n}{n}\Big)}\Big\|\sum_{t=1}^{n+k^*}\pa_{\theta}\,l_\A(X_{0,t};\hat\T_{\A,n})\Big\|=\frac{1}{\sqrt{n}\Big(1+\frac{k_n}{n}\Big)}O_P\Big(\sqrt{n}\Big(1+\frac{k^*}{n}\Big)\Big)=O_P(1).
\end{eqnarray}

Next, noting that $\pa_{\theta}\,l_\A(x; \T) $ is continuous in $\T$, one can see from assumption {\bf A8} and the dominate convergence theorem that $\E\,\pa_{\theta}\,l_\A(x; \T)$ becomes a continuous function on $N'(\T_0)$. By Lemma \ref{unif} with assumption {\bf A8}, we also have
\begin{eqnarray*}\label{USL1}
\sup_{\T\in N'(\T_0)}\Big\| \frac{1}{k_n-k^*}\sum_{t>n+k^*}^{n+k_n}\pa_{\theta}\,l_\A(X_{1,t}; \T) -\E\, \pa_{\theta}\,l_\A(X_{1,t};\T)\Big\|=o(1)\quad a.s.,
\end{eqnarray*}
which  together with the fact that $\hat\T_{\A,n}$ converges almost surely to $\T_0$ yields
\begin{eqnarray*}
\Big\| \frac{1}{k_n-k^*}\sum_{t>n+k^*}^{n+k_n}\pa_{\theta}\,l_\A(X_{1,t}; \hat\T_{\A,n})\Big\|\stackrel{a.s.}{\longrightarrow} \big\|\E \pa_{\theta}\,l_\A(X_{1,t};\T_0)\big\|.
\end{eqnarray*}
Thus, since  $k_n/\sqrt{n}\rightarrow\infty$, we have by assumption {\bf A9} that
\begin{eqnarray}\label{AC2}
\frac{1}{\sqrt{n}\Big(1+\frac{k_n}{n}\Big)}\Big\|\sum_{t>n+k^*}^{n+k_n}\pa_{\theta}\,l_\A(X_{1,t};\hat \T_{\A,n})\Big\|
=\sqrt{n}\,\frac{k_n-k^*}{n+k_n}\big(\big\| E \pa_{\theta}\,l_\A(X_{1,t};\T_0)\big\|+o(1)\big)\stackrel{a.s.}{\longrightarrow}\infty.
\end{eqnarray}
Therefore, it follows from (\ref{AC1}) and (\ref{AC2}) that
\begin{align*}
\sup_{k\geq 1}\frac{\big\| \pa_{\theta}H_{\A,n+k}(\hat{\theta}_{\A,n})\big\|}{\sqrt{n}\Big(1+\frac{k}{n}\Big)}&\ \geq\ \frac{\big\| \pa_{\theta}H_{\A,n+k_n}(\hat{\theta}_{\A,n})\big\|}{\sqrt{n}\Big(1+\frac{k_n}{n}\Big)}\\
&\ \geq\
 \frac{1}{\sqrt{n}\Big(1+\frac{k_n}{n}\Big)}\Bigg\{\Big\|\sum_{t>n+k^*}^{n+k_n}\pa_{\theta}\,l_\A(X_{1,t};\hat \T_{\A,n})\Big\|
-\Big\|\sum_{t=1}^{n+k^*}\pa_{\theta}\,l_\A(X_{0,t};\hat\T_{\A,n})\Big\|\Bigg\}\\
&\ \stackrel{P}{\longrightarrow}\infty
\end{align*}
and consequently we have
\begin{eqnarray*}
\lim_{n\rightarrow\infty} P\big( k_{\A,n} <\infty\ |\ H_1 \big)=1.
\end{eqnarray*}
This completes the proof.
\hfill{$\Box$}\\

\subsection{Lemmas and proofs for Section \ref{Sec:3}}
From now,  $\hat\T_{\A,n}$, $\mathcal{I}_\A$, $\mathcal{J}_\A$, $l_\A(X;\T)$, and $H_{\A,n}(\T)$ are the ones given in Section \ref{Sec:3}.

\begin{lm}\label{lm.I0}Suppose that  assumptions {\bf M1}-{\bf M4} and conditions {\bf S3} and {\bf S4} hold. For a nonnegative integer $m$, we have that  under  $H_0$,
\begin{eqnarray}\label{C1}
\sup_{ k\geq m} \Big\|\frac{1}{n+k}\paa \tilde{H}_{\A,n+k}(\theta^*_{n,k})-\mathcal{J}_\A\Big\|=o(1)\quad a.s.
\end{eqnarray}
and
\begin{eqnarray}\label{C2}
\sup_{k\geq1} \frac{1}{n+k} \big\|\paa \tilde H_{\A,n+k}(\theta^*_{n,k}) \mathcal{J}_\A^{-1}(\tilde B_{\A,n}-\mathcal{J}_\A)\big\|=o(1)\quad a.s.,
\end{eqnarray}
where $\{\theta^*_{n,k}\}$ is any double array of random vectors satisfying that $\|\T^*_{n,k}-\T_0\|\leq\|\hat\T_{\A,n}-\T_0\|$ and $\tilde B_{\A,n}=\paa \tilde H_{\A,n}(\theta^*_{n,0})/n$.
\end{lm}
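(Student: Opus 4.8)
The statement is the time-series counterpart of Lemmas \ref{lm2} and \ref{lm3}: display (\ref{C1}) plays the role of Lemma \ref{lm2}, while (\ref{C2}) will follow from (\ref{C1}) exactly as Lemma \ref{lm3} follows from Lemma \ref{lm2}. Relative to the i.i.d. arguments of Section \ref{Sec:2}, there are only two new ingredients. First, the i.i.d. strong law is replaced by the ergodic theorem; this is legitimate because the base process $\{X_t\}$ is strictly stationary and ergodic, so the derived sequences $\{\paa l_\A(X_t;\T_0)\}$ and $\{\sup_{\T\in N}\|\paa l_\A(X_t;\T)-\paa l_\A(X_t;\T_0)\|\}$ are themselves stationary and ergodic as measurable functionals of $\{X_t\}$ (cf. assumption {\bf M1}). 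Second, one must pass from the unobservable $H_{\A,n+k}$ to the filtered version $\tilde H_{\A,n+k}$, and this approximation (filtering) error is precisely what condition {\bf S3} controls.

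For (\ref{C1}) I would follow the proof of Lemma \ref{lm2}. Using the integrability in {\bf S4}, the assumed continuity of $\paa l_\A(x;\T)$ in $\T$, and dominated convergence, fix for each $\ep>0$ a neighborhood $N_\ep(\T_0)\subset N_1(\T_0)\cap N_2(\T_0)$ with $\E\sup_{\T\in N_\ep(\T_0)}\|\paa l_\A(X_t;\T)-\paa l_\A(X_t;\T_0)\|<\ep$. Since $\|\T^*_{n,k}-\T_0\|\leq\|\hat\T_{\A,n}-\T_0\|$ and $\hat\T_{\A,n}\to\T_0$ a.s. by {\bf M3}, for all large $n$ the whole array $\{\T^*_{n,k}\}$ lies in $N_\ep(\T_0)$, uniformly in $k$. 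I then split the left side of (\ref{C1}) into three pieces: the filtering term $\sup_{k\geq m}\frac{1}{n+k}\sum_{t=1}^{n+k}\sup_{\T\in N_\ep(\T_0)}\|\paa\tilde l_\A(X_t;\T)-\paa l_\A(X_t;\T)\|$, the localisation term $\sup_{k\geq m}\frac{1}{n+k}\sum_{t=1}^{n+k}\sup_{\T\in N_\ep(\T_0)}\|\paa l_\A(X_t;\T)-\paa l_\A(X_t;\T_0)\|$, and the centred average $\sup_{k\geq m}\|\frac{1}{n+k}\paa H_{\A,n+k}(\T_0)-\mathcal{J}_\A\|$. Each of the three is the supremum over $k\geq m$ of a single almost surely convergent sequence indexed by $n+k$, so Lemma \ref{lm1} reduces each to the limit of its underlying sequence: the first vanishes by {\bf S3}, the second converges to its expectation, which is $<\ep$, by the ergodic theorem, and the third vanishes by the ergodic theorem together with $\E[\paa l_\A(X_t;\T_0)]=\mathcal{J}_\A$ from {\bf M2}. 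Letting $\ep\downarrow0$ gives (\ref{C1}).

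For (\ref{C2}) I would argue exactly as in Lemma \ref{lm3}. By (\ref{C1}) with $m=1$, $\sup_{k\geq1}\frac{1}{n+k}\|\paa\tilde H_{\A,n+k}(\T^*_{n,k})\|\leq\sup_{k\geq1}\|\frac{1}{n+k}\paa\tilde H_{\A,n+k}(\T^*_{n,k})-\mathcal{J}_\A\|+\|\mathcal{J}_\A\|=O(1)$ a.s., while (\ref{C1}) read at $k=0$ (i.e. with $m=0$) gives $\|\mathcal{J}_\A^{-1}(\tilde B_{\A,n}-\mathcal{J}_\A)\|\leq\|\mathcal{J}_\A^{-1}\|\sup_{k\geq0}\|\frac{1}{n+k}\paa\tilde H_{\A,n+k}(\T^*_{n,k})-\mathcal{J}_\A\|=o(1)$ a.s., where $\mathcal{J}_\A^{-1}$ exists by the non-singularity in {\bf M2}. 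Combining the $O(1)$ and $o(1)$ bounds through submultiplicativity of the induced matrix norm yields (\ref{C2}).

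The only genuinely delicate point is that {\bf S3} controls the filtering error only for averages running up to index $n$, whereas (\ref{C1}) requires such control for averages up to $n+k$ uniformly over $k\geq m$. I expect this to be the main obstacle, and the resolution is not to re-prove a uniform-in-$k$ filtering bound directly but to view $\frac{1}{n+k}\sum_{t=1}^{n+k}(\cdots)$ as the $(n+k)$-th term of the almost surely null sequence supplied by {\bf S3} and apply Lemma \ref{lm1}. A related care point is that the intermediate points $\T^*_{n,k}$ form a genuine double array, so the neighborhood localisation must be uniform in $k$; this is exactly what the hypothesis $\|\T^*_{n,k}-\T_0\|\leq\|\hat\T_{\A,n}-\T_0\|$ together with {\bf M3} secures.
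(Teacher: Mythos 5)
Your proof is correct and takes essentially the same route as the paper's: the same $\ep$-neighborhood $N_\ep(\T_0)\subset N_1(\T_0)\cap N_2(\T_0)$ built from {\bf S4} and dominated convergence, the same three-term decomposition (filtering error controlled by {\bf S3}, localisation term and centred average handled by the ergodic theorem with $\E[\paa l_\A(X_t;\T_0)]=\mathcal{J}_\A$), Lemma \ref{lm1} supplying the uniformity over $k$, and the Lemma \ref{lm3} argument ($O(1)\cdot o(1)$ via submultiplicativity) for (\ref{C2}). If anything, you state the localisation step slightly more carefully than the paper, which writes its limit as $o(1)$ where it is really a limit bounded by $\ep$ with $\ep\downarrow 0$ taken afterwards.
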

\begin{proof}
For any $\ep>0$, by assumption {\bf M2} and  condition {\bf S4}, one can take a positive constant  $r_\ep$ such that
\begin{eqnarray}\label{P3.1}
\E\sup_{\theta \in N_\ep(\theta_0)}\big\| \paa l_\A (X_t;\theta)- \paa l_\A (X_t;\theta_0)\big\| < \ep,
\end{eqnarray}
where $N_\epsilon(\theta_0)=\{\theta\in N_1(\T_0)\cap N_2(\T_0)\, |\,\|\theta-\theta_0\|< r_\epsilon\}$.\\
Since we assume in assumption {\bf M4} that $\hat{\theta}_{\A,n}$ converges almost surely to $\theta_0$, we have that for sufficiently large $n$,
\begin{eqnarray*}
&&\hspace{-0.7cm}\sup_{k\geq m} \Big\|\frac{1}{n+k}\paa \tilde{H}_{\A,n+k}(\theta^*_{n,k})-\mathcal{J}_\A \Big\|\\
&&\hspace{-0.7cm}\leq \sup_{k\geq m} \frac{1}{n+k}\big\|\paa \tilde{H}_{\A,n+k}(\theta^*_{n,k})-\paa H_{\A, n+k}(\theta^*_{n,k})\big\|
+ \sup_{k\geq m} \frac{1}{n+k}\big\|\paa H_{\A,n+k}(\theta^*_{n,k})-\paa H_{\A,n+k}(\theta_0)\big\|\\
&&\hspace{-0.3cm}
+ \sup_{k\geq m} \Big\|\frac{1}{n+k}\paa H_{\A,n+k}(\theta_0)-\mathcal{J}_\A\Big\|\\
&&\hspace{-0.7cm}\leq \sup_{k\geq m}\frac{1}{n+k} \sum_{t=1}^{n+k} \sup_{\theta\in N_{\ep}(\theta_0)} \big\| \paa\,\tilde{l}_\A (X_t;\theta)-\paa\,l_\A (X_t;\theta)\big\|\\
&&\hspace{-0.3cm}
+\sup_{k\geq m}\frac{1}{n+k} \sum_{t=1}^{n+k} \sup_{\theta\in N_{\ep}(\theta_0)} \big\| \paa\,l_\A (X_t;\theta)-\paa\,l_\A (X_t;\theta_0)\big\|
+ \sup_{k\geq m} \Big\|\frac{1}{n+k}\sum_{t=1}^{n+k}\paa l_\A(X_t;\theta_0)-\mathcal{J}_\A\Big\|\\
&&\hspace{-0.7cm}:=\sup_{k\geq m}I_{n+k}^o +\sup_{k\geq m}II_{n+k}^o+ \sup_{k\geq m}III_{n+k}^o\quad a.s.
\end{eqnarray*}
Thanks to Lemma \ref{lm1}, it suffices to show that all of  $I_n^o$, $II_n^o$, and $III_n^o$ converge almost surely to zero.
First, we have by condition {\bf S3} that $I_n^o=o(1)$ a.s. One can also show $II_n^o=o(1)$ by using (\ref{P3.1}) and the ergodic theorem.
The last one follows also from the ergodic theorem. Hence, (\ref{C1}) is established.  Following the same argument as in Lemma \ref{lm3}, one can also obtain (\ref{C2}).
\end{proof}

\begin{lm}\label{lm.I1}Suppose that assumption {\bf M1}-{\bf M4} and conditions {\bf S1}-{\bf S4} hold. Then, under  $H_0$,
\begin{eqnarray*}
\sup_{k\geq1}\frac{1}{\sqrt{n}\Big(1+\frac{k}{n}\Big)}\Big\| \pa_{\theta} \tilde H_{\A,n+k}(\hat{\theta}_{\A,n})-\pa_{\theta} \tilde H_{\A,n+k}(\theta_0)+\Big(1+\frac{k}{n}\Big)\pa_{\theta} \tilde H_{\A,n}(\theta_0)\Big\|=o_P(1).
\end{eqnarray*}
\end{lm}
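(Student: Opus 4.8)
The plan is to mirror the proof of Lemma \ref{lm4}, replacing the exact objects $H_{\A,\cdot}$ by their approximated counterparts $\tilde H_{\A,\cdot}$ and invoking Lemma \ref{lm.I0} wherever Lemmas \ref{lm2} and \ref{lm3} were used in the i.i.d.\ setting. First I would Taylor-expand the detector score around $\theta_0$: for each $k\geq0$,
\[
\pa_{\theta}\tilde H_{\A,n+k}(\hat{\theta}_{\A,n})=\pa_{\theta}\tilde H_{\A,n+k}(\theta_0)+\paa \tilde H_{\A,n+k}(\theta^*_{n,k})(\hat{\theta}_{\A,n}-\theta_0),
\]
with $\theta^*_{n,k}$ an intermediate point between $\theta_0$ and $\hat\theta_{\A,n}$. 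Taking $k=0$ and using the first-order condition $\pa_{\theta}\tilde H_{\A,n}(\hat\theta_{\A,n})=0$, which holds because $\hat\theta_{\A,n}$ minimizes $\tilde H_{\A,n}$, I would solve for the estimation error exactly as in (\ref{hat.theta}):
\[
\hat{\theta}_{\A,n}-\theta_0=-\mathcal{J}_\A^{-1}\frac{1}{n}\pa_{\theta}\tilde H_{\A,n}(\theta_0)-\mathcal{J}_\A^{-1}(\tilde B_{\A,n}-\mathcal{J}_\A)(\hat{\theta}_{\A,n}-\theta_0),
\]
where $\tilde B_{\A,n}=\paa \tilde H_{\A,n}(\theta^*_{n,0})/n$.

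Substituting this back into the Taylor expansion and dividing by $\sqrt{n}(1+k/n)$, the target quantity splits into two pieces $\tilde I_{n,k}+\tilde{II}_{n,k}$ in parallel with $I_{n,k}+II_{n,k}$ of Lemma \ref{lm4}: the first governed by $\big(\mathcal{J}_\A-(n+k)^{-1}\paa \tilde H_{\A,n+k}(\theta^*_{n,k})\big)\mathcal{J}_\A^{-1}\frac{1}{\sqrt n}\pa_\theta \tilde H_{\A,n}(\theta_0)$, the second by $(n+k)^{-1}\paa \tilde H_{\A,n+k}(\theta^*_{n,k})\,\mathcal{J}_\A^{-1}(\tilde B_{\A,n}-\mathcal{J}_\A)\sqrt n(\hat\theta_{\A,n}-\theta_0)$. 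For the first piece I would establish that $\frac{1}{\sqrt n}\pa_\theta \tilde H_{\A,n}(\theta_0)=O_P(1)$: by condition {\bf S1} the sequence $\{\pa_\theta l_\A(X_t;\theta_0)\}$ is a stationary ergodic martingale difference, so the FCLT (as in (\ref{FCLT})) gives $\frac{1}{\sqrt n}\pa_\theta H_{\A,n}(\theta_0)=O_P(1)$, and condition {\bf S2} transfers this to the approximated score because $\frac{1}{\sqrt n}\|\pa_\theta \tilde H_{\A,n}(\theta_0)-\pa_\theta H_{\A,n}(\theta_0)\|\leq\frac{1}{\sqrt n}\sum_{t=1}^n\|\pa_\theta l_\A(X_t;\theta_0)-\pa_\theta \tilde l_\A(X_t;\theta_0)\|=o(1)$ a.s. Combining with (\ref{C1}) of Lemma \ref{lm.I0} then yields $\sup_{k\geq1}\tilde I_{n,k}=o_P(1)$. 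For the second piece I would use $\sqrt n(\hat\theta_{\A,n}-\theta_0)=O_P(1)$ from assumption {\bf M3} together with (\ref{C2}) of Lemma \ref{lm.I0}, giving $\sup_{k\geq1}\tilde{II}_{n,k}=o_P(1)$.

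The only genuinely new difficulty relative to the i.i.d.\ case is controlling the initial-value approximation error $\tilde H-H$ uniformly over the monitoring horizon $k$, which is precisely what Lemma \ref{lm.I0} packages through conditions {\bf S3}--{\bf S4} for the Hessian and {\bf S2} for the gradient; once that lemma is available the remaining steps are the routine substitutions above. The main point I would watch is that the normalizations in {\bf S2} and {\bf S3} (namely $1/\sqrt n$ for the gradient and $1/n$ for the Hessian) match the detector's $\sqrt n(1+k/n)$ scaling exactly, so that the supremum over \emph{all} $k\geq1$, rather than a fixed $k$, can be absorbed without any growth condition on the horizon.
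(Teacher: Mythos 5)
Your proposal is correct and follows essentially the same route as the paper's own proof: the identical Taylor expansion and first-order-condition substitution leading to the decomposition $I_{n,k}+II_{n,k}$, with $\sup_{k\geq1}I_{n,k}=o_P(1)$ obtained from $\frac{1}{\sqrt{n}}\pa_\theta\tilde H_{\A,n}(\theta_0)=O_P(1)$ (via the martingale-difference FCLT under {\bf S1} plus the transfer in {\bf S2}) together with (\ref{C1}), and $\sup_{k\geq1}II_{n,k}=o_P(1)$ from the $\sqrt{n}$-consistency in {\bf M3} together with (\ref{C2}). No gaps; your reading of which condition controls which normalization matches the paper exactly.
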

\begin{proof}
The proof is essentially the same as the proof of Lemma \ref{lm4}. By Taylor's theorem, we have the same expansion for $\pa_\T \tilde H_{\A,n+k}(\hat \T_{\A,n})$ as in (\ref{TL}). By arguing analogously, one can have
\begin{eqnarray*}
&&\hspace{-0cm}\frac{1}{\sqrt{n} \Big(1+\frac{k}{n}\Big)}\Big\|\pa_{\theta} \tilde H_{\A,n+k}(\hat{\theta}_{\A,n})-\pa_{\theta}\tilde H_{\A,n+k}(\theta_0)+\Big(1+\frac{k}{n}\Big)\pa_{\theta} \tilde H_{\A,n}(\theta_0)\Big\|\\
&&\hspace{-0cm}\leq
\Big\|\big( \mathcal{J}_\A-\frac{1}{n+k}\paa \tilde H_{\A,n+k}(\theta^*_{n,k})\big)\mathcal{J}_\A^{-1}\frac{1}{\sqrt{n}}\pa_{\theta} \tilde H_{\A,n}(\theta_0)\Big\|\\
&&\hspace{0.5cm}
+\frac{1}{n+k}\big\|\paa \tilde H_{\A,n+k}(\theta^*_{n,k})\mathcal{J}_\A^{-1}
(\tilde B_{\A,n}-\mathcal{J}_\A)\sqrt{n}(\hat{\theta}_{\A,n}-\theta_0)\big\|\\
&&\hspace{-0cm}:=I_{n,k} +II_{n,k},
\end{eqnarray*}
where $\T^*_{n,k}$ is an intermediate point between $\T_0$ and $\hat \T_{\A,n}$. By condition {\bf S2}, we have
\begin{eqnarray}\label{Approx1}
\frac{1}{\sqrt{n}}\big\|\pa_\T H_{\A,n}(\T_0)-\pa_\T \tilde H_{\A,n}(\T_0)\big\| \leq
\frac{1}{\sqrt{n}}\sum_{t=1}^n\big\|\pa_{\T} l_\A(X_t;\T_0)-\pa_{\T} \tilde l_\A(X_t;\T_0)\big\|=o(1)\quad a.s.
\end{eqnarray}
Using condition {\bf S1} and the FCLT for martingale differences, we also have that for each $T>0$,
  \begin{eqnarray}\label{FCLT2}
\frac{1}{\sqrt{n}} \pa_\T
H_{\alpha,[ns]}(\theta_0)=\frac{1}{\sqrt{n}}  \sum_{t=1}^{[ns]}  \pa_\T l_\A (X_t; \T_0)\stackrel {w}{\longrightarrow} \mathcal{I}_\A^{1/2}W_d(s)~~in~~\mathbb {D}([0,T],\mathbb {R}^d ),
\end{eqnarray}
which together with (\ref{Approx1}) and Slutsky's theorem yields $\frac{1}{\sqrt{n}}\pa_\T \tilde H_{\A,n}(\T_0)=O_P(1)$.
 Combining this and (\ref{C1}), one can see $\sup_{k\geq1} I_{n,k}=o_P(1)$. Also, since  $\sqrt{n}(\hat\T_{\A,n}-\T_0)=O_P(1)$
 by assumption {\bf M4}, it follows from (\ref{C2}) that  $\sup_{k\geq1} II_{n,k}=o_P(1)$. This completes the proof.
\end{proof}

\begin{lm}\label{lm.I2}Suppose that condition {\bf S2} holds. Then, under  $H_0$,
\begin{eqnarray*}
\sup_{k\geq1}\frac{1}{\sqrt{n}\Big(1+\frac{k}{n}\Big)}\Big\| \pa_{\theta}\tilde{H}_{\A,n+k}(\theta_0)-\Big(1+\frac{k}{n}\Big)\pa_{\theta}\tilde{H}_{\A,n}(\theta_0)
-\pa_{\theta}{H}_{\A,n+k}(\theta_0)+\Big(1+\frac{k}{n}\Big)\pa_{\theta}{H}_{\A,n}(\theta_0)\Big\|=o(1)\quad a.s.
\end{eqnarray*}
\end{lm}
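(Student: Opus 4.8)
The plan is to exploit the fact that every term in the bracket depends on the data only through the approximation error of the score, so that condition {\bf S2} can be applied almost verbatim. Set $\xi_t := \pa_\theta l_\A(X_t;\T_0)-\pa_\theta \tilde l_\A(X_t;\T_0)$ and $S_m := \sum_{t=1}^m \xi_t$, so that $\pa_\theta \tilde H_{\A,m}(\T_0)-\pa_\theta H_{\A,m}(\T_0)=-S_m$ for every $m$. Substituting $m=n+k$ and $m=n$ and collecting the four terms, the vector inside the norm equals
\[
\Big(1+\frac{k}{n}\Big)S_n - S_{n+k}.
\]
Thus it suffices to control $\sup_{k\geq1}\big\|(1+\frac{k}{n})S_n - S_{n+k}\big\|\big/\big(\sqrt{n}(1+\frac{k}{n})\big)$ and show that it is $o(1)$ a.s.

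By the triangle inequality this quantity is at most
\[
\frac{\|S_n\|}{\sqrt{n}} + \frac{\|S_{n+k}\|}{\sqrt{n}\big(1+\frac{k}{n}\big)}.
\]
The first summand is independent of $k$ and is handled immediately by condition {\bf S2}, since $\|S_n\|/\sqrt{n}\leq \frac{1}{\sqrt{n}}\sum_{t=1}^n\|\xi_t\|=o(1)$ a.s. For the second summand I would rewrite the denominator as $\sqrt{n}(1+\frac{k}{n})=(n+k)/\sqrt{n}$ and then estimate
\[
\frac{\|S_{n+k}\|}{\sqrt{n}\big(1+\frac{k}{n}\big)}
=\frac{\sqrt{n}\,\|S_{n+k}\|}{n+k}
\leq \frac{\sqrt{n}}{n+k}\sum_{t=1}^{n+k}\|\xi_t\|
\leq \frac{1}{\sqrt{n+k}}\sum_{t=1}^{n+k}\|\xi_t\|,
\]
where the final inequality uses $\sqrt{n}\leq\sqrt{n+k}$. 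Writing $W_m:=m^{-1/2}\sum_{t=1}^m\|\xi_t\|$, condition {\bf S2} (applied with a generic horizon) gives $W_m\to 0$ a.s. as $m\to\infty$, and applying Lemma \ref{lm1} to the null sequence $\{W_m\}$ with $m=1$ yields $\sup_{k\geq1}W_{n+k}\to0$ a.s. as $n\to\infty$. Combining the two estimates then gives the assertion.

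The one genuinely delicate point is the supremum over all $k\geq1$: condition {\bf S2} controls the normalised partial sum only at the sampling horizon $n$, and the naive split $\|S_{n+k}\|\leq\|S_n\|+\sum_{t=n+1}^{n+k}\|\xi_t\|$ is useless because the tail sum grows with $k$. The resolution is the normalisation itself — the factor $1+\frac{k}{n}$ in the denominator converts $\|S_{n+k}\|/\big(\sqrt{n}(1+\frac{k}{n})\big)$ into $\|S_{n+k}\|/\sqrt{n+k}$ up to the harmless factor $\sqrt{n/(n+k)}\leq1$, after which the expression becomes a tail supremum of a sequence that converges to $0$ a.s., so Lemma \ref{lm1} closes the argument. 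Note that, unlike the previous lemmas, no Taylor expansion, FCLT, or information-matrix convergence is needed here; the statement is purely about the deterministic-looking almost-sure approximation of the score, so the proof is short once the algebraic rearrangement above is in place.
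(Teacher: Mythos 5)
Your proof is correct and takes essentially the same route as the paper's: the same rearrangement of the four terms into $\big(1+\frac{k}{n}\big)S_n - S_{n+k}$, the same triangle-inequality split into a $k$-free term $\|S_n\|/\sqrt{n}$ and a term bounded via $\sqrt{n}/(n+k)\leq 1/\sqrt{n+k}$, and the same final appeal to condition \textbf{S2} combined with Lemma \ref{lm1} to handle the supremum over $k\geq 1$. No gaps.
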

\begin{proof}
Since
\begin{eqnarray*}
&&\frac{1}{\sqrt{n}\Big(1+\frac{k}{n}\Big)}\Big\| \pa_{\theta}\tilde{H}_{\A,n+k}(\theta_0)-\Big(1+\frac{k}{n}\Big)\pa_{\theta}\tilde{H}_{\A,n}(\theta_0)
-\pa_{\theta}{H}_{\A,n+k}(\theta_0)+\Big(1+\frac{k}{n}\Big)\pa_{\theta}{H}_{\A,n}(\theta_0)\Big\|\\
&& \leq
 \frac{\sqrt{n}}{n+k}\sum_{t=1}^{n+k}\big\| \pa_{\theta}\,\tilde l_\A(X_t;\theta_0)-\pa_{\theta}\,l_\A(X_t;\theta_0)\big\|
+\frac{1}{\sqrt{n}}\sum_{t=1}^n \big\| \pa_{\theta}\,\tilde l_\A(X_t;\theta_0)-\pa_{\theta}\,l_\A(X_t;\theta_0)\big\|\\
&& \leq
\sup_{k\geq1}\frac{1}{\sqrt{n+k}}\sum_{t=1}^{n+k}\big\| \pa_{\theta}\,\tilde l_\A(X_t;\theta_0)-\pa_{\theta}\,l_\A(X_t;\theta_0)\big\|
+\frac{1}{\sqrt{n}}\sum_{t=1}^n \big\| \pa_{\theta}\,\tilde l_\A(X_t;\theta_0)-\pa_{\theta}\,l_\A(X_t;\theta_0)\big\|,
\end{eqnarray*}
the lemma is asserted  by condition {\bf S2} and Lemma \ref{lm1}.
\end{proof}

\begin{lm}\label{lm.I3}Suppose that condition {\bf S1} and assumption {\bf B} hold. Then, under  $H_0$,
\begin{eqnarray*}
\sup_{k\geq1}  \frac{\Big\|\pa_{\theta}{H}_{\A,n+k}(\theta_0)-\Big(1+\frac{k}{n}\Big)\pa_{\theta}{H}_{\A,n}(\theta_0)\Big\|}{\sqrt{n}\Big(1+\frac{k}{n}\Big)b\Big(\frac{k}{n}\Big)}
\stackrel{d}{\longrightarrow} \sup_{s>0} \frac{ \big\|\mathcal{I}_\A^{\frac{1}{2}}\big( W_d(1+s)-(1+s)W_d(1)\big)\big\|}{(1+s)b(s)}.
\end{eqnarray*}
\end{lm}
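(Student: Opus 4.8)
The plan is to follow the proof of Lemma~\ref{lm5} almost verbatim, the only genuine change being that the i.i.d.\ functional central limit theorem is replaced by its martingale-difference counterpart and that the Hájek--Rényi--Chow inequality is invoked in the martingale setting. The starting point is the observation that, by condition~{\bf S1}, the summands $\xi_t := \pa_\theta l_\A(X_t;\T_0)$ form a stationary ergodic (by {\bf M1}) martingale difference sequence with respect to $\mathcal{F}_t$, and by {\bf M2} they are square integrable with $\E[\xi_t \xi_t'] = \mathcal{I}_\A$. Since $\pa_\theta H_{\A,m}(\T_0) = \sum_{t=1}^m \xi_t$, the numerator in the statement equals $\sum_{t=1}^{n+k}\xi_t - (1+\tfrac{k}{n})\sum_{t=1}^n \xi_t$, a weighted increment of the partial-sum martingale; note that this lemma concerns only the unapproximated gradient, so no discretization error enters (that is handled separately in Lemma~\ref{lm.I2}).

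The first step is to apply the martingale FCLT, namely (\ref{FCLT2}) already recorded in the proof of Lemma~\ref{lm.I1}, which is precisely the functional central limit theorem for the square-integrable stationary ergodic martingale differences $\{\xi_t\}$. Substituting $k=[ns]$ yields, for each fixed $T>0$, the weak convergence in $\mathbb{D}([0,T],\mathbb{R}^d)$
\[
\frac{1}{\sqrt{n}}\Big\{\pa_{\theta}{H}_{\A,[n(1+s)]}(\T_0)-\Big(1+\tfrac{[ns]}{n}\Big)\pa_{\theta}{H}_{\A,n}(\T_0)\Big\}  \stackrel{w}{\longrightarrow}\mathcal{I}_\A^{\frac{1}{2}}\big( W_d(1+s)-(1+s)W_d(1)\big).
\]
Because assumption {\bf B} makes $s\mapsto \big((1+s)b(s)\big)^{-1}$ continuous and bounded on $[0,T]$, the continuous mapping theorem applied to the supremum functional gives the exact analogue of (\ref{R1}), namely convergence of the truncated supremum $\sup_{1\le k\le nT}(\cdots)$ to $\sup_{0<s<T}(\cdots)$.

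Letting $T\to\infty$ then requires two ingredients, mirroring (\ref{R2}) and (\ref{R3}). On the limit side, the law of the iterated logarithm for $W_d$ together with $\inf_{s>0} b(s)>0$ forces $\sup_{s\ge T}\|W_d(1+s)\|/((1+s)b(s))\to 0$ a.s., so the truncated supremum increases a.s.\ to the full supremum over $s>0$. On the prelimit side I would establish
\[
\lim_{T\to\infty}\limsup_{n\to\infty} P\Big( \sup_{k\ge nT}\frac{\sqrt n}{n+k}\Big\|\textstyle\sum_{t=1}^{n+k}\xi_t\Big\| \ge \ep\Big)=0
\]
by applying the Hájek--Rényi--Chow inequality to the martingale $\{\sum_{t=1}^m \xi_t\}$ with the nonincreasing weights $c_m=\sqrt n/m$ for $m\ge n$. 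With $N=\lceil n(1+T)\rceil$ the inequality bounds the probability by $\ep^{-2}\,\mathrm{tr}(\mathcal{I}_\A)\big( nN^{-1}+ n\sum_{m>N}m^{-2}\big)=O\big((1+T)^{-1}\big)$ uniformly in $n$, which vanishes as $T\to\infty$. This is the step that genuinely uses condition~{\bf S1}, and I expect it to be the main obstacle: the martingale structure is exactly what both legitimizes the FCLT above and supplies the maximal inequality here, whereas in the i.i.d.\ setting of Lemma~\ref{lm5} both were automatic.

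Finally, combining the truncated convergence of the previous step with these two tail estimates by a standard converging-together argument, and then letting $T\to\infty$, yields the claimed weak convergence to $\sup_{s>0}\big\|\mathcal{I}_\A^{1/2}\big(W_d(1+s)-(1+s)W_d(1)\big)\big\|/((1+s)b(s))$. Since nothing beyond the martingale FCLT and the Hájek--Rényi--Chow bound is needed, the proof is complete.
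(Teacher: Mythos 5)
Your proposal is correct and takes essentially the same route as the paper's proof: the paper likewise reduces the claim to the three ingredients (\ref{R1})--(\ref{R3}) of Lemma \ref{lm5}, obtaining the analogue of (\ref{R1}) from the martingale FCLT (\ref{FCLT2}) and the continuous mapping theorem, noting that (\ref{R2}) holds under assumption {\bf B}, and obtaining the analogue of (\ref{R3}) from the Hájek--Rényi--Chow inequality under condition {\bf S1}. The only difference is that you make explicit the weight choice and the $O\big((1+T)^{-1}\big)$ bound in the maximal-inequality step, which the paper leaves implicit.
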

\begin{proof}
In exactly the same fashion as in Lemma \ref{lm5}, one can verify the lemma if it can be shown that the ones corresponding to(\ref{R1}) and (\ref{R3}) also hold
 for $\{H_{\A,n}(\T_0)\}$ defined in Section \ref{Sec:3}. Here, we note that (\ref{R2}) always holds under assumption {\bf B}.
   (\ref{R1}) follows from (\ref{FCLT2}) and the continuous mapping theorem. Using the Hájek-Rényi-Chow
  inequality with condition {\bf S1}, one can obtain the same result as (\ref{R3}). Thus, the lemma is asserted.
\end{proof}

\noindent{\bf Proof of Theorem \ref{thm4}}\\
Combining Lemmas \ref{lm.I1} - \ref{lm.I3}, we obtain the same result as (\ref{main1}). That is, under $H_0$,
\begin{eqnarray*}
\sup_{k\geq1}  \frac{\big\|\hat{\mathcal{I}}_{\A,n}^{-\frac{1}{2}}\,\pa_{\theta}\tilde{H}_{\A,n+k}(\hat{\theta}_n)\big\|}{\sqrt{n}\Big(1+\frac{k}{n}\Big)b\Big(\frac{k}{n}\Big)}
\stackrel{d}{\longrightarrow} \sup_{t>0} \frac{ \big\| W_d(1+t)-(1+t)W_d(1)\big\|}{(1+t)b(t)}.
\end{eqnarray*}
Hence, the theorem is established by using the same arguments used in the proof of Theorem \ref{thm2}.
\hfill{$\Box$}\vspace{0.2cm}\\

\noindent{\bf Proof of Theorem \ref{thm5}}\\
The theorem can be proved analogously to Theorem \ref{thm3}.
Let $\{k_n\}$ be an increasing sequence of positive integers satisfying $k_n/\sqrt{n}\rightarrow\infty$ and note that
\begin{eqnarray*}
\pa_{\theta}\tilde H_{\A,n+k_n}(\hat{\theta}_{\A,n})
&=&\sum_{t=1}^{n+k^*}\pa_{\theta}\,\tilde l_\A(X^{\T_0}_t;\hat\T_{\A,n})+\sum_{t>n+k^*}^{n+k_n}\pa_{\theta}\,\tilde l_\A(X^{\T_1}_t;\hat\T_{\A,n}).
\end{eqnarray*}
Since Lemmas \ref{lm.I1}-\ref{lm.I3} hold for $\{X_t^{\T_0} | t\geq1\}$, one can see that
\begin{eqnarray}\label{TS.AH1}
\frac{1}{\sqrt{n}\Big(1+\frac{k_n}{n}\Big)}\Big\|\sum_{t=1}^{n+k^*}\pa_\T\,\tilde l_\A(X^{\T_0}_t;\hat\T_{\A,n})\Big\|
=\frac{1}{\sqrt{n}\Big(1+\frac{k_n}{n}\Big)}O_P\Big(\sqrt{n}\Big(1+\frac{k^*}{n}\Big)\Big)=O_P(1).
\end{eqnarray}
The following uniform strong convergence can be also shown by Lemma \ref{unif} with condition {\bf S5}:
\begin{eqnarray*}
\sup_{\T\in N_3(\T_0)}\Big\| \frac{1}{k_n-k^*}\sum_{t>n+k^*}^{n+k_n}\pa_\T\,l_\A(X^{\T_1}_t;\T) - \E\, \pa_\T\,l_\A(X^{\T_1}_t;\T)\Big\| =o(1)\quad a.s.
\end{eqnarray*}
Then, due to the strong consistency of $\hat\T_{\A,n}$ and the continuity of $\E\, \pa_\T\,l_\A(X_t;\T)$, we have
\begin{eqnarray}\label{TS.AH20}
\Big\| \frac{1}{k_n-k^*}\sum_{t>n+k^*}^{n+k_n}\pa_\T\,l_\A(X^{\T_1}_t;\hat\T_{\A,n}) \Big\| \stackrel{a.s.}{\longrightarrow}\Big\|\E\, \pa_\T\,l_\A(X^{\T_1}_t;\T_0)\Big\| .
\end{eqnarray}
and thus by condition {\bf S6},
\begin{eqnarray*}
\frac{1}{k_n-k^*}\Big\| \sum_{t>n+k^*}^{n+k_n}\pa_\T\,\tilde l_\A(X^{\T_1}_t;\hat \T_{\A,n}) \Big\| \stackrel{a.s.}{\longrightarrow} E \big\| \pa_\T\,l_\A(X^{\T_1}_t;\T_0)\big\|.
\end{eqnarray*}
Hence, due to condition {\bf S7}, we have that
\begin{eqnarray}\label{TS.AH2}
\frac{1}{\sqrt{n}\Big(1+\frac{k_n}{n}\Big)}\Big\|\sum_{t>n+k^*}^{n+k_n}\pa_\T\,\tilde l_\A(X^{\T_1}_t;\hat\T_{\A,n})\Big\|
=\sqrt{n}\,\frac{k_n-k^*}{n+k_n}\big( E \big\| \pa_\T\,l_\A(X^{\T_1}_t;\T_0)\big\|+o(1)\big)\stackrel{a.s.}{\longrightarrow}\infty.
\end{eqnarray}
Therefore, we have by (\ref{TS.AH1}) and (\ref{TS.AH2}) that
\begin{eqnarray*}
\frac{\big\| \pa_\T \tilde H_{\A,n+k_n}(\hat{\theta}_{\A,n})\big\|}{\sqrt{n}\Big(1+\frac{k_n}{n}\Big)}
&\geq&
 \frac{1}{\sqrt{n}\Big(1+\frac{k_n}{n}\Big)}\Bigg\{\Big\|\sum_{t>n+k^*}^{n+k_n}\pa_\T\,\tilde l_\A(X^{\T_1}_t;\hat\T_{\A,n})\Big\|
-\Big\|\sum_{t=1}^{n+k^*}\pa_\T\,\tilde l_\A(X^{\T_0}_t;\hat\T_{\A,n})\Big\|\Bigg\}\\
\hspace{-6cm}&\stackrel{P}{\longrightarrow}& \infty,
\end{eqnarray*}
which establishes the theorem.
\hfill{$\Box$}
\begin{lm}\label{root.n.con}Suppose that assumptions {\bf M1}-{\bf M3} and conditions {\bf S1}-{\bf S4} hold.
If $\hat{\theta}_{\A,n}$ converges almost surely to $\theta_0$, then $\hat\T_{\A,n}$ is $\sqrt{n}$-consistency. More exactly,
\[ \sqrt{n}(\hat\T_{\A,n}-\T_0) \stackrel{d}{\longrightarrow} N\big(0, \mathcal{J}_\A^{-1}\mathcal{I}_\A\mathcal{J}_\A^{-1} \big).\]
\end{lm}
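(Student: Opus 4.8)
The plan is to run the classical M-estimation argument: start from the first-order condition for $\HT$, Taylor-expand the score around $\T_0$, and read off the normal limit from a martingale central limit theorem once the Hessian has been controlled. All three ingredients are already essentially assembled in the earlier lemmas, so the work is in combining them and, crucially, in manufacturing the $\sqrt n$-rate rather than assuming it.

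First I would record the estimating equation. Since $\T_0$ is interior to $\Theta$ and $\HT\to\T_0$ almost surely (the hypothesis of the lemma), for all large $n$ the minimizer $\HT$ is interior, so $\pa_\T\tilde H_{\A,n}(\HT)=0$. A componentwise mean-value expansion about $\T_0$ then gives
\[ 0=\pa_\T\tilde H_{\A,n}(\T_0)+\paa\tilde H_{\A,n}(\T^*_n)(\HT-\T_0), \]
with $\T^*_n$ an intermediate point satisfying $\|\T^*_n-\T_0\|\le\|\HT-\T_0\|$. Dividing by $n$, adding and subtracting $\mathcal{J}_\A$, and multiplying through by $\sqrt n\,\mathcal{J}_\A^{-1}$, I obtain the $\tilde H$-analogue of (\ref{root.hat}),
\[ \sqrt n(\HT-\T_0)=-\mathcal{J}_\A^{-1}\frac{1}{\sqrt n}\pa_\T\tilde H_{\A,n}(\T_0)-\mathcal{J}_\A^{-1}(\tilde B_{\A,n}-\mathcal{J}_\A)\sqrt n(\HT-\T_0), \]
where $\tilde B_{\A,n}=\paa\tilde H_{\A,n}(\T^*_n)/n$.

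Next I would supply the two limiting inputs. For the score, condition {\bf S2} (see (\ref{Approx1})) gives $\frac{1}{\sqrt n}\|\pa_\T\tilde H_{\A,n}(\T_0)-\pa_\T H_{\A,n}(\T_0)\|=o(1)$ a.s., while condition {\bf S1}, together with the stationarity and ergodicity of {\bf M1} and the finiteness of $\mathcal{I}_\A$ from {\bf M2}, lets me invoke the martingale-difference CLT — the $s=1$ marginal of (\ref{FCLT2}) — to conclude $\frac{1}{\sqrt n}\pa_\T H_{\A,n}(\T_0)\stackrel{d}{\longrightarrow}N(0,\mathcal{I}_\A)$; Slutsky then yields $\frac{1}{\sqrt n}\pa_\T\tilde H_{\A,n}(\T_0)\stackrel{d}{\longrightarrow}N(0,\mathcal{I}_\A)$, in particular $O_P(1)$. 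For the Hessian, Lemma \ref{lm.I0} applied with $k=0$ gives $\tilde B_{\A,n}\to\mathcal{J}_\A$ a.s.; I would emphasize that there is no circularity here, since the proof of that lemma uses only the almost sure convergence of $\HT$ and not its $\sqrt n$-consistency. By {\bf M2} this makes $\mathcal{J}_\A^{-1}(\tilde B_{\A,n}-\mathcal{J}_\A)=o(1)$ a.s.

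The step requiring the most care — and the place where the $\sqrt n$-rate is actually produced — is the resolution of the self-referential identity above. Writing $\xi_n=\sqrt n(\HT-\T_0)$, $\eta_n=-\mathcal{J}_\A^{-1}\frac{1}{\sqrt n}\pa_\T\tilde H_{\A,n}(\T_0)$, and $\delta_n=\mathcal{J}_\A^{-1}(\tilde B_{\A,n}-\mathcal{J}_\A)$, the identity reads $(I+\delta_n)\xi_n=\eta_n$ with $\eta_n=O_P(1)$ and $\delta_n=o_P(1)$. On the event $\{\|\delta_n\|<1/2\}$, whose probability tends to one, $I+\delta_n$ is invertible with $\|(I+\delta_n)^{-1}\|\le2$, so $\xi_n=O_P(1)$; feeding this back gives $\delta_n\xi_n=o_P(1)$ and hence $\xi_n=\eta_n+o_P(1)$. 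This simultaneously establishes $\sqrt n$-consistency and, by Slutsky, the stated limit
\[ \sqrt n(\HT-\T_0)=-\mathcal{J}_\A^{-1}\frac{1}{\sqrt n}\pa_\T\tilde H_{\A,n}(\T_0)+o_P(1)\stackrel{d}{\longrightarrow}N\big(0,\mathcal{J}_\A^{-1}\mathcal{I}_\A\mathcal{J}_\A^{-1}\big), \]
where the symmetry of $\mathcal{J}_\A$ is used to write $\mathcal{J}_\A^{-1}\mathcal{I}_\A(\mathcal{J}_\A^{-1})'=\mathcal{J}_\A^{-1}\mathcal{I}_\A\mathcal{J}_\A^{-1}$.
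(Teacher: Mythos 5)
Your proof is correct and follows essentially the same route as the paper: the first-order condition and Taylor expansion yielding the self-referential identity, $O_P(1)$ of the score via {\bf S2}, the FCLT (\ref{FCLT2}) and Slutsky, $\tilde B_{\A,n}\to\mathcal{J}_\A$ via Lemma \ref{lm.I0} with $m=0$, and then the normal limit. The only cosmetic difference is that where the paper disposes of the fixed-point step by citing Theorem 5.21 of \cite{van:1998}, you resolve it explicitly with the $(I+\delta_n)^{-1}$ argument on the event $\{\|\delta_n\|<1/2\}$ --- a legitimate, self-contained substitute.
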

\begin{proof} Recall that $\frac{1}{\sqrt{n}}\pa_\T \tilde H_{\A,n}(\T_0)=O_P(1)$. Using (\ref{C1}) with $m=0$, one can also see that
$\tilde B_{\A,n}-\mathcal{J}_\A=o_p(1)$. Further, as in (\ref{root.hat}), it can be written that
\begin{eqnarray*}
\sqrt{n} (\hat{\theta}_{\A,n}-\theta_0)
&=& -\mathcal{J}_\A^{-1}\frac{1}{\sqrt{n}}\pa_\T \tilde H_{\A,n}(\theta_0)-
\mathcal{J}_\A^{-1}(\tilde B_{\A,n}-\mathcal{J}_\A)\sqrt{n}(\hat{\theta}_{\A,n}-\theta_0)\\
&=&O_P(1)+o_P(1)\sqrt{n} (\hat{\theta}_{\A,n}-\theta_0),
\end{eqnarray*}
implying that $\sqrt{n}(\hat\T_{\A,n}-\T_0)=O_P(1)$ (cf. Theorem 5.21 in \cite{van:1998}). Thus, we have by (\ref{Approx1}) that
\begin{eqnarray*}
\sqrt{n} (\hat{\theta}_{\A,n}-\theta_0)
= -\mathcal{J}_\A^{-1}\frac{1}{\sqrt{n}}\pa_\T H_{\A,n}(\theta_0)+o_P(1),
\end{eqnarray*}
which together with (\ref{FCLT2}) yields the lemma.
\end{proof}

\begin{lm}\label{inverse}
Suppose that assumptions {\bf G1}-{\bf G4} in Theorem \ref{thm6} hold for $\T_0$.  Then, $\mathcal{I}_\A$ and $\mathcal{J}_\A$ are positive definite.
\end{lm}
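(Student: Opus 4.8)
The plan is to exploit the fact that in the GARCH model the conditional mean vanishes, so that $l_\A(X_t;\T)$ depends on $\T$ only through $v_t(\T):=\sigma_t^2(\T)$. Writing $v_t:=\sigma_t^2(\T_0)$, $\dot v_t:=\pa_\T\sigma_t^2(\T_0)$ and $\ddot v_t:=\paa\sigma_t^2(\T_0)$ (all $\mathcal{F}_{t-1}$-measurable, and finite in the relevant moments by Proposition \ref{Lee:Song}) and substituting $X_t=\sqrt{v_t}\,\ep_t$ with $\ep_t\sim N(0,1)$ independent of $\mathcal{F}_{t-1}$, the chain rule gives $\pa_\T l_\A(X_t;\T_0)=g_v\,\dot v_t$ and $\paa l_\A(X_t;\T_0)=g_{vv}\,\dot v_t\dot v_t'+g_v\,\ddot v_t$, where $g_v,g_{vv}$ denote the first and second derivatives of the scalar map $v\mapsto v^{-\A/2}\{(1+\A)^{-1/2}-(1+\A^{-1})\exp(-\A X_t^2/(2v))\}$ evaluated at $v=v_t$. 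A direct computation, after the substitution $X_t^2/v_t=\ep_t^2$, yields $g_v=-\tfrac{\A}{2}v_t^{-\A/2-1}\psi(\ep_t)$ and $g_{vv}=v_t^{-\A/2-2}\Phi(\ep_t)$ for explicit scalar functions $\psi,\Phi$ of a single standard normal variable. The whole argument then reduces both matrices to a common weighted second moment of $\dot v_t$.

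For $\mathcal{I}_\A$, conditioning on $\mathcal{F}_{t-1}$ and using that $\ep_t$ is independent of $(v_t,\dot v_t)$ gives, for any $\lambda\in\mathbb{R}^d$,
\[ \lambda'\mathcal{I}_\A\lambda=\frac{\A^2}{4}\,\E[\psi(\ep_t)^2]\;\E\big[v_t^{-\A-2}(\lambda'\dot v_t)^2\big]. \]
Since $\psi(\ep)=(1+\A)^{-1/2}+(1+\A^{-1})(\ep^2-1)e^{-\A\ep^2/2}\to(1+\A)^{-1/2}>0$ as $|\ep|\to\infty$ and $\ep$ has full support, $\E[\psi(\ep_t)^2]>0$; hence $\lambda'\mathcal{I}_\A\lambda>0$ unless $\E[v_t^{-\A-2}(\lambda'\dot v_t)^2]=0$, i.e. unless $\lambda'\dot v_t=0$ a.s. (recall $v_t\geq\W>0$, so the weight is strictly positive).

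For $\mathcal{J}_\A$, the first step is to kill the cross term $\E[g_v\,\ddot v_t]$. A short Gaussian moment computation, using $\E\,e^{-\A\ep^2/2}=(1+\A)^{-1/2}$ and $\E\,\ep^2 e^{-\A\ep^2/2}=(1+\A)^{-3/2}$, gives $\E[\psi(\ep_t)]=0$, so $\E[g_v\mid\mathcal{F}_{t-1}]=0$ and, $\ddot v_t$ being $\mathcal{F}_{t-1}$-measurable, $\E[g_v\ddot v_t]=0$; this is precisely the martingale-difference property of the score (condition {\bf S1}). What remains is $\mathcal{J}_\A=\E[g_{vv}\dot v_t\dot v_t']=c_\A\,\E[v_t^{-\A/2-2}\dot v_t\dot v_t']$ with $c_\A:=\E[\Phi(\ep_t)]$, so that $\lambda'\mathcal{J}_\A\lambda=c_\A\,\E[v_t^{-\A/2-2}(\lambda'\dot v_t)^2]$. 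Evaluating the remaining Gaussian integral for $c_\A$ (equivalently, reading off the positive scale-sensitivity constant from the explicit forms of $\mathcal{I}_\A,\mathcal{J}_\A$ derived in \cite{lee:song:2009}, which at $\A=0$ collapses to the positive Gaussian-QMLE constant $1/2$) shows $c_\A>0$; thus again $\lambda'\mathcal{J}_\A\lambda>0$ unless $\lambda'\dot v_t=0$ a.s.

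Both quadratic forms therefore reduce to the same non-degeneracy claim, which I expect to be the main obstacle: $\lambda'\dot v_t=\lambda'\pa_\T\sigma_t^2(\T_0)=0$ a.s. must force $\lambda=0$, i.e. the components of $\pa_\T\sigma_t^2(\T_0)$ (equivalently of $\pa_\T\sigma_t^2(\T_0)/\sigma_t^2(\T_0)$) are linearly independent random variables. This is exactly the identifiability content of assumption {\bf G3} (no common root of $\mathcal{A}_{\T_0}$ and $\mathcal{B}_{\T_0}$, $\mathcal{A}_{\T_0}(1)\neq1$, and $\A_{0p}+\B_{0q}\neq0$), together with $\W_0>0$ and $\T_0$ interior from {\bf G2} and {\bf G4}; it is the standard nontrivial lemma underlying GARCH estimation theory, and I would invoke it from \cite{berkes.et.al:2003}, \cite{francq:zakoian:2004}, or directly from the asymptotic-normality argument of \cite{lee:song:2009} rather than reprove it. Combining this linear independence with the two displays above yields $\lambda'\mathcal{I}_\A\lambda>0$ and $\lambda'\mathcal{J}_\A\lambda>0$ for every $\lambda\neq0$, which establishes that $\mathcal{I}_\A$ and $\mathcal{J}_\A$ are positive definite.
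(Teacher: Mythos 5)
Your proposal is correct and takes essentially the same route as the paper: both arguments reduce $\mathcal{I}_\A$ and $\mathcal{J}_\A$ to strictly positive scalar constants times the common matrix $\E\big[(\sigma_t^2(\T_0))^{-c}\,\pa_\T \sigma_t^2(\T_0)\,\pa_{\T'}\sigma_t^2(\T_0)\big]$, and then invoke the Francq--Zakoian non-degeneracy result that $\lambda'\pa_\T\sigma_t^2(\T_0)=0$ a.s. forces $\lambda=0$. The only difference is one of self-containedness: you derive the factorized forms and the positivity of the constants by direct chain-rule and Gaussian-moment computations (including the vanishing of the cross term $\E[g_v\,\ddot v_t]$), whereas the paper simply cites these explicit forms from Lemma 2 of Lee and Song (2009).
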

\begin{proof}
Using the equations (15) and (16) in Lemma 2 in \cite{lee:song:2009}, one can shows that
\[
 \mathcal{I}_\A=k(\A) \E \left[{\Big(\frac{1}{\sigma_t^2(\theta_0)}\Big)}^{\alpha+2}\,\pa_\T \sigma_t^2(\theta_0) \pa_{\T'}
\sigma_t^2(\theta_0)  \right] \]
\mbox{and}
\[ \mathcal{J}_\A=g(\A)\E \left[{\Big(\frac{1}{\sigma_t^2(\theta_0)}\Big)}^{\frac{\alpha}{2}+2}\,\pa_\T\sigma_t^2(\theta_0) \pa_{\T'}\sigma_t^2(\theta_0) \right],
\]
where
\begin{eqnarray*}
k(\alpha)=\frac{{(1+\alpha)}^2(1+2\alpha^2)}{2{(1+2\alpha)}^{2/5}}-\frac{\alpha^2}{4(1+\alpha)},\quad
g(\alpha)=\frac{\alpha^2+2\A+2}{4(1+\alpha)^{3/2}}.
\end{eqnarray*}
Noting first that  $k(\A)>0$ and $g(\A)>0$  for $\A\geq0$, we can see that for all $\lambda \in \mathbb{R}^{p+q+1}$,  $\lambda'\mathcal{I}_\A\lambda\geq0$ and $\lambda'\mathcal{J}_\A\lambda\geq0$.
According to the arguments in \cite{francq:zakoian:2004}, page 621, it holds that $\lambda' \pa_\T \sigma^2(\T_0) \stackrel{a.s.}{=}0$
implies $\lambda=0$.  Using this result,  it can be shown that $\lambda'\mathcal{I}_\A\lambda$ and   $\lambda'\mathcal{J}_\A\lambda$
are equal to zero  only for  $\lambda=0$, which asserts the lemma.
\end{proof}

\begin{lm}\label{S56}
Suppose that assumptions {\bf G1}-{\bf G4} in Theorem \ref{thm6} hold for $\T_1$.  If $\E [(X_t^{\T_1})^{4+\epsilon}] <\infty$ for some $\epsilon>0$, then for any subset $\widetilde \Theta$ of $\Theta$ satisfying $\min_{1\leq i \leq p}\inf_{\T\in\widetilde{\Theta}} \A_i >0$ and  $\min_{1\leq j \leq q}\inf_{\T\in\widetilde{\Theta}}\B_j>0$,
\begin{eqnarray}\label{S5}
\E  \sup_{\T \in \widetilde \Theta} \big\| \pa l_\A (X^{\T_1}_t;\T)\big\|  <\infty
\end{eqnarray}
and
\begin{eqnarray}\label{S6}
\sum_{t>n+k^*}^{n+k_n} \sup_{\T\in \widetilde \Theta} \big\| \pa_{\theta}\,l_\A(X^{\T_1}_t;\theta)-\pa_{\theta}\,\tilde l_\A(X^{\T_1}_t;\theta)\big\|= O(1)\quad a.s.,
\end{eqnarray}
where $\{k_n\}$ is any increasing sequence of positive integers.
\end{lm}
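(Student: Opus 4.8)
The plan is to treat the two assertions \eqref{S5} and \eqref{S6} separately, reducing both to standard facts about the GARCH volatility process evaluated along the realized path $\{X_t^{\T_1}\}$. Write $v_t(\T)=\sigma_t^2(\T)$ and $\tilde v_t(\T)=\tilde\sigma_t^2(\T)$ for the ideal and the recursively approximated conditional variances built from $\{X_t^{\T_1}\}$, and note $v_t(\T)\ge\W\ge\underline\W>0$ uniformly on the compact $\Theta$. The first workhorse is a \emph{deterministic ratio bound}: because every $\A_i$ and $\B_j$ is bounded away from $0$ on $\widetilde\Theta$, an induction on the recursions $\pa_{\A_i}v_t=X_{t-i}^2+\sum_j\B_j\pa_{\A_i}v_{t-j}$ and $\pa_{\B_j}v_t=v_{t-j}+\sum_l\B_l\pa_{\B_j}v_{t-l}$, together with $v_t\ge\A_iX_{t-i}^2$ and $v_t\ge\B_jv_{t-j}$, yields $\sup_{\T\in\widetilde\Theta}\|\pa_\T v_t(\T)\|/v_t(\T)\le K$ for a finite \emph{constant} $K$ depending only on $\inf_{\widetilde\Theta}\A_i$, $\inf_{\widetilde\Theta}\B_j$ and $\sup_{\widetilde\Theta}\sum_j\B_j<1$ (assumption {\bf G1}); the same bound holds for $\tilde v_t$. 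The second workhorse is the \emph{geometric approximation bound} of \cite{berkes.et.al:2003} and \cite{francq:zakoian:2004}: there exist $\rho\in(0,1)$ and an a.s.\ finite random variable $K_0$ with $\sup_{\widetilde\Theta}|v_t-\tilde v_t|\le K_0\rho^t$ and $\sup_{\widetilde\Theta}\|\pa_\T v_t-\pa_\T\tilde v_t\|\le K_0(1+t)\rho^t$, the moments of $K_0$ being controlled by the hypothesis $\E[(X_t^{\T_1})^{4+\ep}]<\infty$.

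For \eqref{S5} I would differentiate the Gaussian MDPDE loss explicitly. Writing $z=X_t^2/v_t(\T)$, one finds
\[
\pa_\T l_\A(X_t^{\T_1};\T)=-\tfrac{\A}{2}\,\frac{\pa_\T v_t(\T)}{v_t(\T)}\,v_t(\T)^{-\A/2}\Big[c_1-c_2\,e^{-\A z/2}(1-z)\Big],
\]
with $c_1=(1+\A)^{-1/2}$ and $c_2=1+1/\A$. The key observation is that $X_t^{\T_1}$ enters only through maps of the form $z\mapsto e^{-\A z/2}z^m$, each bounded on $[0,\infty)$; hence the bracket is bounded by a constant, $v_t^{-\A/2}\le\underline\W^{-\A/2}$, and the ratio factor is bounded by $K$. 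Thus $\sup_{\T\in\widetilde\Theta}\|\pa_\T l_\A(X_t^{\T_1};\T)\|$ is dominated by a deterministic constant, so \eqref{S5} is immediate, the exponential down-weighting making the moment condition superfluous for this part.

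For \eqref{S6} the plan is to bound $\sup_{\widetilde\Theta}\|\pa_\T l_\A-\pa_\T\tilde l_\A\|$ pointwise in $t$ and then sum. Writing $\pa_\T l_\A=\Phi(v_t,\pa_\T v_t;X_t)$ and $\pa_\T\tilde l_\A=\Phi(\tilde v_t,\pa_\T\tilde v_t;X_t)$ for the map $\Phi$ read off above, I would split the difference into a part linear in $\pa_\T v_t-\pa_\T\tilde v_t$ and a part handled by the mean value theorem in the $v$-argument. In the first part the ratio $\|\pa_\T v_t-\pa_\T\tilde v_t\|/v_t$ inherits the geometric bound $K_0(1+t)\rho^t/\underline\W$; in the second part $\pa_v\Phi$ carries a factor $\|\pa_\T\tilde v_t\|/\xi^2$ (with $\xi$ between $v_t$ and $\tilde v_t$) that is again controlled by the ratio bound, multiplied by $|v_t-\tilde v_t|\le K_0\rho^t$, the residual $X_t$-dependence being absorbed by the bounded maps $z\mapsto e^{-\A z/2}z^m$. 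Consequently $\sup_{\widetilde\Theta}\|\pa_\T l_\A-\pa_\T\tilde l_\A\|\le C\,(1+t)\rho^t\,G_t$, where $G_t$ collects stationary coefficients built from the ratio bounds and the approximation constant. The fourth-moment hypothesis enters exactly here: it guarantees $\E[\sup_{\widetilde\Theta}G_t]<\infty$, so by stationarity and Tonelli's theorem $\E\big[\sum_t(1+t)\rho^t G_t\big]=\sum_t(1+t)\rho^t\,\E[G_t]<\infty$, whence $\sum_t(1+t)\rho^tG_t<\infty$ a.s.; the tail over $t>n+k^*$ is then $O(1)$ (indeed $o(1)$) uniformly in the window length $k_n$.

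The main obstacle I anticipate is the uniform-in-$\T$ control of the approximation error in \eqref{S6}: one must show that the geometric decay produced by the initial-value discrepancy survives after differentiation and division by $v_t$, and that the accompanying $X^{\T_1}$-dependent coefficients have the finite $(4+\ep)$-th moments needed to force the geometrically weighted series to converge almost surely. This is precisely where the fourth-moment condition is indispensable, and where one leans on the GARCH approximation machinery of \cite{lee:song:2009}, \cite{berkes.et.al:2003}, and \cite{francq:zakoian:2004} rather than on the i.i.d.\ arguments used in Section \ref{Sec:2}.
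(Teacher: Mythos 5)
Your two-step strategy (a pathwise ratio bound for $\pa_\T\sigma_t^2/\sigma_t^2$, a geometric initialization bound, then a Tonelli-type summation) has the right overall shape, but its first workhorse is false, and it is load-bearing in both halves of your argument. The claimed deterministic bound $\sup_{\T\in\widetilde\Theta}\|\pa_\T \sigma_t^2(\T)\|/\sigma_t^2(\T)\le K$ holds for the $\W$- and $\A_i$-derivatives (this is (\ref{BD1})) but \emph{fails} for the $\B_j$-derivatives. Your own induction shows why: from $\pa_{\B_j}\sigma_t^2=\sigma_{t-j}^2+\sum_l\B_l\,\pa_{\B_j}\sigma_{t-l}^2$, the inequalities $\sigma_{t-j}^2\le\sigma_t^2/\B_j$ and $\sum_l\B_l\sigma_{t-l}^2\le\sigma_t^2$ only give $K_t\le \B_j^{-1}+K_{t-1}$, so the inductive constant grows linearly in $t$ and the induction never closes. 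Concretely, in a GARCH(1,1) path whose only large shock sits $L$ lags in the past, $\pa_{\B}\sigma_t^2/\sigma_t^2$ is of order $(L+1)/\B$, hence unbounded over sample paths. This is exactly why the available bound, (\ref{BD2}) from \cite{francq:zakoian:2004}, is a \emph{random} envelope involving $\sum_k k\rho^{ks}\big(\W^s+\sum_i\A_i^s X_{t-k-i}^{2s}\big)$, and why the paper must invoke Lemma 2.3 of \cite{berkes.et.al:2003} (existence of $s$ with $\E X_t^{2sd}<\infty$) and Minkowski's inequality to obtain the integrability statement (\ref{MC1}). Consequently your proof of (\ref{S5}) — where you assert the supremum is "dominated by a deterministic constant" and the moment condition is "superfluous" — collapses: the bracket and $\sigma_t^{-\A}$ factors are indeed bounded, but the remaining factor $\sup_\T\|\pa_\T\sigma_t^2/\sigma_t^2\|$ is an unbounded random variable, and establishing its integrability is the actual content of (\ref{S5}). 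The same false bound propagates into your treatment of (\ref{S6}): your $G_t$ in fact contains these random ratios multiplied by $X_t^2+X_t^4$ factors, so $\E[\sup_{\widetilde\Theta}G_t]<\infty$ is not automatic but requires H\"older's inequality with exponents $1+\ep/4$ and $1+4/\ep$ applied to $(1+X_t^2+X_t^4)$ and the ratio supremum — precisely where $\E[(X_t^{\T_1})^{4+\ep}]<\infty$ together with (\ref{MC1}) enters the paper's bound for $\E\sup_\T P_{t,i}(\T)$.

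A second, smaller error concerns the time index of the geometric factor. You write the decay as $\rho^t$ (measured from time $0$) and conclude that the tail over $t>n+k^*$ is $o(1)$. In the monitoring setting the relevant discrepancy is between $\sigma_t^2$ built from the infinite stationary $X^{\T_1}$-past and $\tilde\sigma_t^2$ built recursively from initial values and the \emph{pre-change} data $\{X_s^{\T_0}:s\le n+k^*\}$; the two recursions share inputs only after the change point, so the correct decay is $\rho^{t-(n+k^*)}$ — the paper emphasizes exactly this point right after (\ref{LS.4}). With the correct indexing, $\sum_{t>n+k^*}^{n+k_n}\rho^{t-(n+k^*)}\sup_\T P_{t,i}(\T)$ is $O(1)$ almost surely but certainly not $o(1)$ (its first summand alone is a nondegenerate stationary quantity), which is why the lemma asserts only $O(1)$.
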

\begin{proof} In this proof, $K$ and $\rho$ are used to denote  generic constants taking different values $K>0$ and $\rho \in (0,1)$. For notational simplicity, we denote $X_t^{\T_1}$ by just $X_t$ without confusion.

Following the arguments in page 619 in \cite{francq:zakoian:2004}, it can be seen that
\begin{eqnarray}
&&
\Big|\frac{1}{\sigma_t^2}\frac{\pa\sigma_t^2}{\pa \W}\Big| \leq \frac{1}{\W}\sum_{k=0}^\infty \rho^k,\quad \Big|\frac{1}{\sigma_t^2}\frac{\pa\sigma_t^2}{\pa \A_i}\Big| \leq \frac{1}{\A_i},\label{BD1}
\end{eqnarray}
and for any $s\in(0,1)$,
\begin{eqnarray}
\Big|\frac{1}{\sigma_t^2}\frac{\pa \sigma_t^2}{\pa \B_j}\Big| \leq \frac{1}{\B_j \W^s} \sum_{k=0}^\infty k \rho^{ks} \big(\W^s+\sum_{i=1}^p\A_i^s X_{t-k-i}^{2s}\big).\label{BD2}
\end{eqnarray}
Note also that by Lemma 2.3 in \cite{berkes.et.al:2003}, for any $d\in\mathbb{N}$, there exists $s\in(0,1)$ such that $\E X_t^{2sd}<\infty$.
Hence, using (\ref{BD1}), (\ref{BD2}) and the Minkowski inequality, it can be shown that
\begin{eqnarray}\label{MC1}
\E \sup_{\T\in \widetilde \Theta}\Big|\frac{1}{\sigma_t^2} \pa_{\T_i} \sigma_t^2\Big|^d < \infty\quad \mbox{for all}\ d\in\mathbb{N},
\end{eqnarray}
where $\T_i$ denotes $i$-th element in $\T$.

By simple algebra, it can be written that
\begin{eqnarray*}\label{pa.l}
\pa_\T l_\A(X_t;\T)&=&h_{\A}(\sigma_t^2){\left(\frac{1}{\sigma_t^2}\right)}^{\frac{\A}{2}+1}\pa_\T\sigma_t^2,
\end{eqnarray*}
where
\begin{eqnarray*}
h_{\A}(x)=-\frac{\A}{2\sqrt{1+\A}}+\frac{1+\A}{2}\left(1-\frac{X_t^2}{x}\right)\exp\left(-\frac{\A}{2}\frac{X_t^2}{x}\right).
\end{eqnarray*}
Since $h_\A(\sigma_t^2) \leq K(1+X_t^2)$, we have
\[  \sup_{\T \in\widetilde \Theta} \big|\pa_{\T_i} l_\A(X_t;\T)\big| \leq K(1+X_t^2) \sup_{\T\in \widetilde \Theta}\Big|\frac{1}{\sigma_t^2} \pa_{\T_i} \sigma_t^2\Big|  \]
and thus (\ref{S5}) follows from the Cauchy-Schwarz inequality with the moment condition $\E X_t^{4+\ep} <\infty$ and (\ref{MC1}).

To show (\ref{S6}), we further use the following technical results in \cite{francq:zakoian:2004}:
\begin{eqnarray}
\sup_{\T\in\Theta}\big\{|\,\sigma_t^2-{\tilde{\sigma}}_t^2| \vee \|\pa_\T\sigma_t^2-\pa_\T \tilde{\sigma}_t^2\|\vee \|\paa  \sigma_t^2-\paa \tilde{\sigma}_t^2\|\big\} \leq K\rho^t \ a.s.\quad \text{for all } t\geq1, \label{FZ1}
\end{eqnarray}
Using (\ref{FZ1}), one can readily show that
\begin{eqnarray*}
|h_{\A}(\tilde{\sigma}_t^2)| \leq  K\big(1+X_t^2\big),\quad \big|h_{\A}(\sigma_t^2)-h_{\A}({\tilde{\sigma}}_t^2)\big| \leq K \big(X_t^2+X_t^4\big)\rho^t,\quad
\bigg|\left(\frac{1}{\sigma_t^2}\right)^{\frac{\A}{2}+1}-\left(\frac{1}{\tilde{\sigma}_t^2}
\right)^{\frac{\A}{2}+1}\bigg| \leq K\rho^t,\label{LS.3}
\end{eqnarray*}
where the last two inequalities can be derived by using the mean value theorem.  Thus, we have
\begin{eqnarray}\label{LS.4}
&&\big|\pa_{\T_i} l_\A(X_t;\T)-\pa_{\T_i} \tilde l_\A(X_t;\T)\big| \nonumber \\
&&=\bigg|\big(h_{\A}(\sigma_t^2)-h_{\A}({\tilde{\sigma}}_t^2)\big){\left(\frac{1}{\sigma_t^2}
\right)}^{\frac{\A}{2}+1}\pa_{\T_i}\sigma_t^2
+h_{\A}({\tilde{\sigma}}_t^2)\bigg\{\left(\frac{1}{\sigma_t^2}\right)^{\frac{\A}{2}+1}-\left(\frac{1}{\tilde{\sigma}_t^2}
\right)^{\frac{\A}{2}+1}\bigg\} \pa_{\T_i}\sigma_t^2\nonumber\\
&&
\quad+h_{\A}(\tilde{\sigma}_t^2)\left(\frac{1}{\tilde{\sigma}_t^2}
\right)^{\frac{\A}{2}+1}\big(\pa_{\T_i}\sigma_t^2-\pa_{\T_i}\tilde{\sigma}_t^2\big)\bigg|\nonumber\\
&&\leq K\big( X_t^2+X_t^4\big)\Big|\frac{1}{\sigma_t^2}\pa_{\T_i} \sigma_t^2\Big|\rho^t
+ K\big( 1+X_t^2\big)\Big|\frac{1}{\sigma_t^2}\pa_{\T_i} \sigma_t^2\Big|\rho^t
+ K\big( 1+X_t^2\big)\rho^t  \nonumber \\
&&\leq K\big( 1+X_t^2+X_t^4\big)\Big(1+\Big|\frac{1}{\sigma_t^2}\pa_{\T_i} \sigma_t^2\Big|\Big)\rho^t :=K P_{t,i}(\T)\rho^t.
\end{eqnarray}
Here, it should be emphasized that the superscript $t$ in $\rho^t$ denotes the time difference from the last initial value.
Observe  that by using H\"{o}lder's inequality, Minkowski's inequality, and (\ref{MC1}),  we also have
\begin{eqnarray*}
\E \sup_{\T \in \widetilde \Theta }P_{t,i}(\T)  &\leq & \big\| 1+X_t^2+X_t^4\big\|_{d_1}\, \Big\| 1+\sup_{\T \in \widetilde \Theta }\Big|\frac{1}{\sigma_t^2}\pa_{\T_i} \sigma_t^2\Big|\Big\|_{d_2}\\
&\leq & \big(1+\|X_t^2\|_{d_1}+\|X_t^4\|_{d_1} \big) \Big( 1+\Big\|\sup_{\T \in \widetilde \Theta }\Big|\frac{1}{\sigma_t^2}\pa_{\T_i} \sigma_t^2\Big|\Big\|_{d_2}\Big) <\infty,
\end{eqnarray*}
where $d_1=1+\epsilon/4$ and $d_2=1+4/\epsilon$.  Since $\tilde \sigma_t^2(\T)$ in $\pa_\T \tilde l_\A(X_t;\T)$ is indeed the one recursively  obtained with the initial values $\{X_t^{\T_0}, \tilde \sigma_t^2(\hat\T_{\A,n}) | t\leq n+k^*\}$, we can see from (\ref{LS.4})  that  $\big|\pa_{\T_i} l_\A(X_t;\T)-\pa_{\T_i} \tilde l_\A(X_t;\T)\big| \leq K P_{t,i}(\T) \rho^{t-(n+k^*)}$ for $t>n+k^*$, and thus,
\begin{eqnarray*}
\E\sum_{t>n+k^*}^{n+k_n} \sup_{\T\in \widetilde \Theta} \big| \pa_{\T_i}\,l_\A(X^{\T_1}_t;\theta)-\pa_{\T_i}\,\tilde l_\A(X^{\T_1}_t;\theta)\big| &\leq& K \sum_{t>n+k^*}^{n+k_n} \E\,\sup_{\T\in\widetilde\Theta} P_{t,i}(\T)\rho^{t-(n+k^*)}\\
&\leq&K(1-\rho^{k_n-k^*}).
\end{eqnarray*}
Therefore, by the dominate convergence theorem, we have
\begin{eqnarray*}
\E\lim_{n\rightarrow\infty}\sum_{t>n+k^*}^{n+k_n} \sup_{\T\in \widetilde \Theta} \big| \pa_{\T_i}\,l_\A(X^{\T_1}_t;\theta)-\pa_{\T_i}\,\tilde l_\A(X^{\T_1}_t;\theta)\big| <\infty,
\end{eqnarray*}
which asserts (\ref{S6}).
\end{proof}

\noindent{\bf Acknowledgements}\\
The author  would like to thank the associate editor for carefully examining the paper and providing valuable comments. This research was supported by Basic Science Research Program through the National Research Foundation of Korea (NRF) funded by the Ministry of Education (NRF-2019R1I1A3A01056924).

\bibliography{jun}

\end{document}